\newcommand{\Z}{\mathbb{Z}}
\newcommand{\Ell}{\mathcal{L}}
\newcommand{\ELL}{\mathfrak{L}}
\newcommand{\cL}{\mathcal{L}}
\newcommand{\M}{\mathfrak{M}}
\newcommand{\wx}{\widetilde{x}}
\newcommand{\wip}{\widetilde{p}}
\newcommand{\whx}{\widehat{x}}
\newcommand{\htilde}[1]{\widehat{\widetilde{#1}}}
\renewcommand*\env@cases[1][1.2]{%
  \let\@ifnextchar\new@ifnextchar
  \left\lbrace
  \def\arraystretch{#1}%
  \array{@{}l@{\quad}l@{}}%
}
\newtheorem{theo}{Theorem}[section]
\newtheorem{prop}[theo]{Proposition}
\theoremstyle{definition}
\newtheorem{defi}[theo]{Definition}
\theoremstyle{remark}
\newtheorem*{rem}{Remark}
\begin{document}

\title{Multi-time Lagrangian 1-forms \\ for families of B\"acklund transformations.\\
Relativistic Toda-type systems}
\author{Raphael~Boll\and Matteo~Petrera\and Yuri~B.~Suris}
\publishers{\vspace{0.5cm}{\small Institut f\"ur Mathematik, MA 7-2, Technische Universit\"at Berlin,\\
Stra{\ss}e des 17. Juni 136, 10623 Berlin, Germany\\
E-mail: \url{boll}, \url{petrera}, \url{suris@math.tu-berlin.de}}}
\maketitle
\begin{abstract}
\noindent We establish the pluri-Lagrangian structure for families of B\"acklund transformations of relativistic Toda-type systems. The key idea is a novel embedding of these discrete-time (one-dimensional) systems into certain two-dimensional pluri-Lagrangian lattice systems. This embedding allows us to identify the corner equations (which are the main building blocks of the multi-time Euler-Lagrange equations) with local superposition formulae for B\"acklund transformations. These superposition formulae, in turn, are key ingredients necessary to understand and to prove commutativity of the multi-valued B\"acklund transformations. Furthermore, we discover a two-dimensional generalization of the spectrality property known for families of B\"acklund transformations. This result produces a family of local conservations laws for two-dimensional pluri-Lagrangian lattice systems, with densities being derivatives of the discrete 2-form with respect to the B\"acklund (spectral) parameter. Thus, a relation of the pluri-Lagrangian structure with more traditional integrability notions is established.
\end{abstract}

\section{Introduction}
This paper can be considered as a continuation of our recent paper \cite{nonrel}, where we gave an application of the general Lagrangian theory of discrete integrable systems of classical mechanics, developed in \cite{S12}, to families of B\"acklund transformations for non-relativistic Toda-type systems. The development of the general theory in \cite{S12} was prompted by an example of the discrete time Calogero-Moser system studied in \cite{YLN}, and belongs to the line of research on Lagrangian theory of discrete integrable systems initiated by Lobb and Nijhoff in \cite{LN} and followed by a number publications \cite{LNQ,LN2,BS1,ALN,lagrangian,pluriharmonic,variational,octahedron}. The notion of integrability of discrete systems, lying at the basis of this development, is that of the multidimensional consistency. This understanding of integrability has been a major breakthrough \cite{quadgraphs,N}, and stimulated an impressive activity boost in the area, cf.~\cite{DDG}.

The original idea of Lobb and Nijhoff can be summarized as follows: solutions of integrable systems deliver critical points simultaneously for actions along all manifolds of the corresponding dimension in multi-time; the Lagrangian form is closed on solutions. This idea resembles the classical notion of pluriharmonic functions and, more generally, of pluriharmonic maps \cite{R, OV, BFPP}, which are simultaneous extremals of the Dirichlet energy along all holomorphic curves in a multi-dimensional complex vector space. This motivated us in \cite{variational, pluriharmonic} to introduce a novel term for the new branch of the calculus of variations specific for integrable systems: we call the corresponding systems {\em pluri-Lagrangian}, and we argue that integrability of variational systems should be understood as the existence of the pluri-Lagrangian structure. In the present paper, we hope to provide an additional evidence in favor of this view.

Here, we establish and investigate the pluri-Lagrangian structure for a more general class of systems than the one studied in \cite{nonrel}, namely for the so-called relativistic Toda-type systems. The general form of a relativistic Toda-type system is
\begin{equation*}\label{eq:RTL}
\ddot{x}_k=r(\dot{x}_k)(f(x_{k+1}-x_k)-f(x_k-x_{k-1})+\dot{x}_{k+1}g(x_{k+1}-x_k)-
\dot{x}_{k-1}g(x_k-x_{k-1})).
\end{equation*}
The general form of a discrete-time relativistic Toda-type system is
\begin{equation*}\label{eq:dRTL}
G(\widetilde{x}_k-x_k)-G(x_k-\undertilde{x}_k)=H(x_{k+1}-x_k)-H(x_k-x_{k-1})
+F(\undertilde{x}_{k+1}-x_k)-F(x_k-\widetilde{x}_{k-1}).
\end{equation*}
A theory and an exhaustive list of integrable systems of this type can be found in \cite{ASh1, ASh2, A, S, Todapaper}. To derive a pluri-Lagrangian structure for these (one-dimensional) systems, it turned out to be necessary to re-interpret them as a particular case of two-dimensional lattice systems, and to develop a general theory of discrete two-dimensional pluri-Lagrangian systems. The latter goal was achieved in \cite{variational}.

The structure and the main results of the present paper are as follows.
\begin{itemize}
\item In section \ref{sect: discr results}, we provide the reader with an overview of the theory of pluri-Lagrangian systems in dimensions $d=1,2$, following mainly \cite{S12, variational}. The fundamental notion of consistent systems of 2D, resp. 3D corner equations, which are main building blocks of pluri-Lagrangian systems, is reminded in detail.
\item Then, in section \ref{sect: from pluri to rel}, we present the construction of two mutually commuting families of symplectic maps (B\"acklund transformations) from a generic discrete two-dimensional pluri-Lagrangian system generated by a discrete three-point 2-form. The commutativity proof is based on the construction of the so called {\em local superposition formulae} which turn out to be nothing but the suitably interpreted 3D corner equations. Moreover, these superposition formulae enable us to handle the multi-valuedness of B\"acklund transformations in the case of periodic boundary conditions, by means of a precise description of the branching behavior of the multi-valued maps.
\item In sections \ref{sect: sym}--\ref{sect: rat}, these general results are applied to all systems of the relativistic Toda type, as listed in \cite{A, S}. For each of the systems, we identify all the ingredients of the pluri-Lagrangian structure, which allows us to give unified proofs for commutativity of all maps in question. In particular, in all cases we prove the so-called \emph{closure relation}, which expresses the fact that the Lagrangian 1-form on the multi-time (space of independent variables) is closed on solutions of variational equations, and turns out to be the main feature of the Lagrangian theory. These results generalize the ones from our recent paper~\cite{nonrel}: sending the relativistic parameter $\alpha$ to $0$, we recover the corresponding results for the non-relativistic case obtained in~\cite{nonrel}. This reinforces the observation already made in \cite{Todapaper}: the non-relativistic degeneration obscures the natural relations to two-dimensional lattice systems.
\item Finally, in section \ref{sect: spectrality}, we turn to the question of the relation of the pluri-Lagrangian structure to more traditional attributes and notions of integrability. In the one-dimensional context, such a relation is established through connecting the closure relation with the \emph{spectrality property}, introduced by Kuznetsov and Sklyanin \cite{KS}, which says that the derivative of the Lagrangian with respect to the parameter of a family of B\"acklund transformations is a generating function of common integrals of motion for the whole family. In the two-dimensional context, we establish a new result which connects the closure relation with a parameter-dependent family of {\em local conservation laws}. Again, the densities of these conservation laws turn out to be composed of the derivatives of the Lagrangian 2-form with respect to the B\"acklund parameter. Moreover, in the framework of the relativistic Toda-type systems, these conservation laws turn out to constitute a local form of the integrals of motion provided by the spectrality property.
\end{itemize}

\section{General theory of discrete pluri-Lagrangian systems}\label{sect: discr results}

\begin{defi}[$d$-dimensional pluri-Lagrangian problem]\label{def:pluriLagr problem}
Let $\Ell$ be a discrete $d$-form on $\Z^{m}$, depending on some field $x:\Z^{m}\to\mathcal{X}$, where $\mathcal{X}$ is some vector space.
\begin{itemize}
\item To an arbitrary oriented $d$-dimensional manifold $\Sigma$ in $\Z^{m}$, there corresponds the \emph{action functional}, which assigns to $\left.x\right|_{V(\Sigma)}$, i.e., to the fields at the vertices of $\Sigma$, the number
\[
S_{\Sigma}=\sum_{\sigma\in\Sigma}\Ell(\sigma).
\]
\item We say that the field $x:V(\Sigma)\to\mathcal{X}$ is a \emph{critical point} of $S_{\Sigma}$, if at any interior point $n\in V(\Sigma)$, we have
\[\frac{\partial S_{\Sigma}}{\partial x(n)}=0.
\]
\item We say that the field $x:\Z^{m}\to\mathcal{X}$ solves the \emph{pluri-Lagrangian problem} for the Lagrangian $d$-form $\Ell$ if, \emph{for any oriented $d$-dimensional manifold $\Sigma$ in $\Z^{m}$}, the restriction $\left.x\right|_{V(\Sigma)}$ is a critical point of the corresponding action $S_{\Sigma}$.
\end{itemize}
\end{defi}

\subsection{One-dimensional pluri-Lagrangian systems, \texorpdfstring{$d=1$}{d=1}}\label{one-dimensional}

This section is based on \cite{S12}.

In the case $d=1$, $\Ell$ is a function of directed edges $\sigma$ of $\Z^{m}$ with $\Ell\left(-\sigma\right)=-\Ell\left(\sigma\right)$. Thus,
\[
\Ell(\sigma_i)=\Ell(n,n+e_i)=\Lambda_i(x,x_i)\quad\Leftrightarrow\quad \Ell(-\sigma_i)=\Ell(n+e_{i},n)=-\Lambda_i(x,x_{i}).
\]
Here $\Lambda_i:{\mathcal X}\times {\mathcal X}\to\mathbb R$ are local Lagrangian functions corresponding to the edges of the $i$\textsuperscript{th} coordinate direction, and the following abbreviations are used: $x$ for $x(n)$ at a generic point $n\in\Z^m$, and then
\begin{alignat}{3}\label{eq: shifts}
    &x_i=x(n+e_i),&\qquad& x_{-i}=x(n-e_i),&\qquad& i=1,\ldots,m,
\end{alignat}
where $e_i$ is the unit vector of the $i$\textsuperscript{th} coordinate direction.

Any interior point of any discrete curve $\Sigma$ in $\Z^m$ is of one of the four types shown on Figure \ref{Fig: corners}.\par
\begin{figure}[htbp]
\centering
\subfloat[]{\label{Fig: corners1}
\begin{tikzpicture}[auto,scale=0.6,>=stealth',inner sep=2]
   \node (x-1) at (-2,0) [circle,fill,label=-135:$x_{-i}$] {};
   \node (x) at (0,0) [circle,fill,label=-90:$x$] {};
   \node (x1) at (2,0) [circle,fill,label=-45:$x_i$] {};
   \draw (x) to (x1);
   \draw (x) to (x-1);
\end{tikzpicture}
}\qquad
\subfloat[]{\label{Fig: corners2}
\begin{tikzpicture}[auto,scale=0.6,>=stealth',inner sep=2pt]
   \node (x) at (0,0) [circle,fill,label=-135:$x$] {};
   \node (x1) at (2,0) [circle,fill,label=-45:$x_i$] {};
   \node (x2) at (0,2) [circle,fill,label=135:$x_j$] {};
   \draw (x) to (x1);
   \draw (x) to (x2);
\end{tikzpicture}
}\qquad
\subfloat[]{\label{Fig: corners3}
\begin{tikzpicture}[auto,scale=0.6,>=stealth',inner sep=2pt]
   \node (x) at (0,0) [circle,fill,label=-135:$x_{-i}$] {};
   \node (x1) at (2,0) [circle,fill,label=-45:$x$] {};
   \node (x12) at (2,2) [circle,fill,label=45:$x_{j}$] {};
   \draw (x) to (x1) to (x12);
\end{tikzpicture}
}\qquad
\subfloat[]{\label{Fig: corners4}
\begin{tikzpicture}[auto,scale=0.6,>=stealth',inner sep=2pt]
   \node (x1) at (2,0) [circle,fill,label=-45:$x_{-j}$] {};
   \node (x2) at (0,2) [circle,fill,label=135:$x_{-i}$] {};
   \node (x12) at (2,2) [circle,fill,label=45:$x$] {};
   \draw (x1) to (x12);
   \draw (x2) to (x12);
\end{tikzpicture}
}
\caption{Four type of vertices of a discrete curve. Case \protect\subref{Fig: corners1}: two edges of one coordinate direction meet at $n$. Case \protect\subref{Fig: corners2}: a negatively directed edge followed by a positively directed edge. Case \protect\subref{Fig: corners3}: two equally (positively or negatively) directed edges of two different coordinate directions meet at $n$. Case \protect\subref{Fig: corners4}: a positively directed edge followed by a negatively directed edge.}
\label{Fig: corners}
\end{figure}
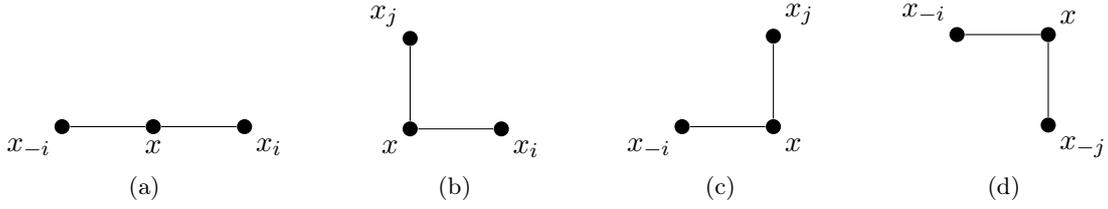
The pieces of discrete curves as on Figures~\ref{Fig: corners}\subref{Fig: corners2}, \subref{Fig: corners3}, and \subref{Fig: corners4} will be called \emph{2D corners}. Observe that a straight piece of a discrete curve, as on Figure \ref{Fig: corners}\subref{Fig: corners1} is a sum of 2D corners, as on Figures~\ref{Fig: corners}\subref{Fig: corners2} and \subref{Fig: corners3}. The whole variety of Euler-Lagrange equations for a pluri-Lagrangian system with $d=1$ reduces to the following three types of \emph{2D corner equations}:
\begin{align}
\label{eq: E 0}
&\frac{\partial\Lambda_i(x,x_i)}{\partial x}-
\frac{\partial\Lambda_j(x,x_j)}{\partial x}=0,
\\
\label{eq: E1 0}
&\frac{\partial\Lambda_i(x_{-i},x)}{\partial x}+
\frac{\partial\Lambda_j(x,x_j)}{\partial x}=0,
\\
\label{eq: E12 0}
&\frac{\partial\Lambda_i(x_{-i},x)}{\partial x}-
\frac{\partial\Lambda_j(x_{-j},x)}{\partial x} = 0.
\end{align}
In particular, the standard single-time discrete Euler-Lagrange equation,
\begin{equation*}\label{eq: dEL}
\frac{\partial\Lambda_i(x_{-i},x)}{\partial x}+\frac{\partial\Lambda_i(x,x_i)}{\partial x}=0,
\end{equation*}
corresponding to a straight piece of a discrete curve as on Figure~\ref{Fig: corners}\subref{Fig: corners1} is a consequence of equations \eqref{eq: E 0} and \eqref{eq: E1 0}, corresponding to 2D corners as on on Figures~\ref{Fig: corners}\subref{Fig: corners2} and \subref{Fig: corners3}.

To discuss the \emph{consistency} of the system of 2D corner equations, it will be more convenient to re-write them with appropriate shifts, as
\begin{align}
\label{eq: E}\tag{$E$}
&\frac{\partial\Lambda_i(x,x_i)}{\partial x}-
\frac{\partial\Lambda_j(x,x_j)}{\partial x}=0,
\\
\label{eq: E1}\tag{$E_i$}
&\frac{\partial\Lambda_i(x,x_i)}{\partial x_i}+
\frac{\partial\Lambda_j(x_i,x_{ij})}{\partial x_i}=0,
\\
\label{eq: E2}\tag{$E_j$}
&\frac{\partial\Lambda_j(x,x_j)}{\partial x_j}+
\frac{\partial\Lambda_i(x_j,x_{ij})}{\partial x_j}=0,
\\
\label{eq: E12}\tag{$E_{ij}$}
&\frac{\partial\Lambda_i(x_j,x_{ij})}{\partial x_{ij}}-
\frac{\partial\Lambda_j(x_i,x_{ij})}{\partial x_{ij}} = 0.
\end{align}
In this form, 2D corner equations \eqref{eq: E}--\eqref{eq: E12} correspond to the four vertices of an elementary square $\sigma_{ij}$ of the lattice, as on Figure~\ref{Fig: consistency}\subref{Fig: consistency1}.
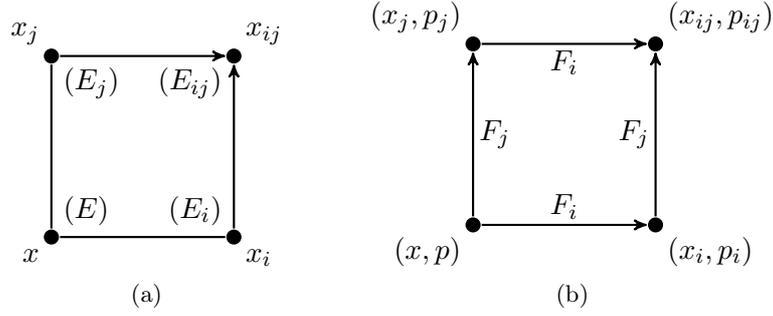
\begin{figure}[htbp]
\centering
\subfloat[]{\label{Fig: consistency1}
\begin{tikzpicture}[auto,scale=1.2,>=stealth',inner sep=2]
   \node (x) at (0,0) [circle,fill,thick,label=-135:$x$,label=45:$\left(E\right)$] {};
   \node (x1) at (2,0) [circle,fill,thick,label=135:$\left(E_{i}\right)$,label=-45:$x_i$] {};
   \node (x2) at (0,2) [circle,fill,thick,label=-45:$\left(E_{j}\right)$,label=135:$x_j$] {};
   \node (x12) at (2,2) [circle,fill,thick,label=45:$x_{ij}$,label=-135:$\left(E_{ij}\right)$] {};
   \draw [thick,->] (x) to (x1) to (x12);
   \draw [thick,->] (x) to (x2) to (x12);
\end{tikzpicture}
}\qquad
\subfloat[]{\label{Fig: consistency2}
\begin{tikzpicture}[auto,scale=1.2,>=stealth',inner sep=2]
   \node (x) at (0,0) [circle,fill,thick,{label=-135:$\left(x,p\right)$}] {};
   \node (x1) at (2,0) [circle,fill,thick,{label=-45:$\left(x_i,p_i\right)$}] {};
   \node (x2) at (0,2) [circle,fill,thick,{label=135:$\left(x_j,p_j\right)$}] {};
   \node (x12) at (2,2) [circle,fill,thick,{label=45:$\left(x_{ij},p_{ij}\right)$}] {};
   \draw [thick,->] (x) to node {$F_{i}$} (x1);
   \draw [thick,->] (x) to node [swap] {$F_{j}$} (x2);
   \draw [thick,->] (x2) to node [swap]{$F_{i}$} (x12);
   \draw [thick,->] (x1) to node {$F_{j}$} (x12);
\end{tikzpicture}
}
\caption{Consistency of 2D corner equations: \protect\subref{Fig: consistency1} Start with data $x$, $x_i$, $x_j$ related by 2D~corner equation \eqref{eq: E}; solve 2D~corner equations \eqref{eq: E1} and \eqref{eq: E2} for $x_{ij}$; consistency means that the two values of $x_{ij}$ coincide identically and satisfy 2D~corner equation \eqref{eq: E12}. \protect\subref{Fig: consistency2} Maps $F_{i}$ and $F_{j}$ commute.}
\label{Fig: consistency}
\end{figure}
Consistency of the system of 2D corner equations \eqref{eq: E}--\eqref{eq: E12} should be understood as follows: start with the fields $x$, $x_i$, $x_j$ satisfying equation \eqref{eq: E}. Then each of equations \eqref{eq: E1}, \eqref{eq: E2} can be solved for $x_{ij}$. Thus, we obtain two alternative values for the latter field. Consistency takes place if these values coincide identically (with respect to the initial data), and, moreover, if the resulting field $x_{ij}$ satisfies equation \eqref{eq: E12}. In other words:
\begin{defi}\label{def: 2D corner eqs consist}
The system of 2D corner equations \eqref{eq: E}--\eqref{eq: E12} is called \emph{consistent}, if it has the minimal possible rank 2, i.e., if exactly two of these four equations are independent.
\end{defi}

Observe that 2D corner equations \eqref{eq: E}--\eqref{eq: E12} can be put as
\begin{alignat}{4}\label{eq: 2D corner eqs}
&\frac{\partial S^{ij}}{\partial x}=0,&\qquad
&\frac{\partial S^{ij}}{\partial x_i}=0,&\qquad
&\frac{\partial S^{ij}}{\partial x_j}=0,&\qquad
&\frac{\partial S^{ij}}{\partial x_{ij}}=0,
\end{alignat}
where $S^{ij}$ is the action along the boundary of an oriented elementary square $\sigma_{ij}$ (this action can be identified with the discrete exterior derivative $d\Ell$ evaluated at $\sigma_{ij}$),
\[
S^{ij}=d\Ell(\sigma_{ij})=\Delta_i\Ell(\sigma_j)-\Delta_j\Ell(\sigma_i)
=\Lambda_i(x,x_i)+\Lambda_j(x_i,x_{ij})-\Lambda_i(x_j,x_{ij})-\Lambda_j(x,x_j).
\]
The main feature of our definition is that the ``almost closedness'' of the 1-form $\Ell$ on solutions of the system of 2D~corner equations is, so to say, built-in from the outset.

\begin{theo}\label{th: discr almost closed}
For any pair of the coordinate directions $i,j$, the action $S^{ij}$ over the boundary of an elementary square of these coordinate directions is constant on solutions of the system of 2D~corner equations \eqref{eq: corner eqs}:
\[
S^{ij}(x,x_i,x_{ij},x_j)=\ell^{ij}=\mathrm{const}  \pmod{\partial S^{ij}/\partial x=0,\ldots,
\partial S^{ij}/\partial x_{ij}=0}.
\]
\end{theo}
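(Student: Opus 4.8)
The plan is to recognize $S^{ij}$ as a \emph{generating function} (a prepotential) whose stationarity conditions are, by construction, exactly the 2D corner equations, and then simply to differentiate. Concretely, $S^{ij}=d\Ell(\sigma_{ij})$ is a function of only the four fields $x,x_i,x_j,x_{ij}$ attached to the vertices of the elementary square $\sigma_{ij}$, namely $S^{ij}=\Lambda_i(x,x_i)+\Lambda_j(x_i,x_{ij})-\Lambda_i(x_j,x_{ij})-\Lambda_j(x,x_j)$. Its total differential is
\[
dS^{ij}=\frac{\partial S^{ij}}{\partial x}\,dx+\frac{\partial S^{ij}}{\partial x_i}\,dx_i+\frac{\partial S^{ij}}{\partial x_j}\,dx_j+\frac{\partial S^{ij}}{\partial x_{ij}}\,dx_{ij},
\]
and by \eqref{eq: 2D corner eqs} each of the four coefficient expressions is precisely the left-hand side of one of the 2D corner equations \eqref{eq: E}, \eqref{eq: E1}, \eqref{eq: E2}, \eqref{eq: E12}.

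First I would pass to the solution set $\mathcal M$ of the full system of corner equations \eqref{eq: E}--\eqref{eq: E12}. By Definition \ref{def: 2D corner eqs consist}, consistency means that only two of the four equations are independent, so $\mathcal M$ carries the expected amount of free data and can be parametrized, say, by the pair $(x,x_i)$ (or any other admissible pair). I would then pull $dS^{ij}$ back to $\mathcal M$: since all four coefficients $\partial S^{ij}/\partial x,\dots,\partial S^{ij}/\partial x_{ij}$ vanish identically on $\mathcal M$, the pullback $d(S^{ij}|_{\mathcal M})$ is zero. Equivalently, by the chain rule, the derivative of $S^{ij}$ along any curve inside $\mathcal M$ vanishes; hence $S^{ij}$ is locally constant on $\mathcal M$ and equals a constant $\ell^{ij}$ on each connected component. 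This is exactly the assertion that $S^{ij}\equiv\ell^{ij}$ modulo the corner equations: the differential of $S^{ij}$ lies in the ideal generated by the corner-equation functions.

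I do not expect a genuine obstacle here; the one point deserving care is purely bookkeeping — matching each partial derivative $\partial S^{ij}/\partial x_{(\cdot)}$ with the correct corner equation in its shifted form, which is immediate from the explicit expression for $S^{ij}$ above. I would also remark that the consistency hypothesis is used only to ensure that $\mathcal M$ is a nondegenerate manifold of the expected dimension, on which ``constant'' is a meaningful statement; the vanishing-of-the-differential argument itself does not require it. In the concrete examples treated in the later sections, $\ell^{ij}$ will of course be computed explicitly and will turn out to depend only on the parameters of the lattice equation (the B\"acklund parameters), not on the dynamical fields.
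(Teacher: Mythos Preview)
Your proposal is correct and is exactly the argument the paper has in mind. The paper does not spell out a proof of this theorem (it is taken from \cite{S12}), but the sentence immediately preceding the statement --- ``the `almost closedness' of the 1-form $\Ell$ on solutions of the system of 2D~corner equations is, so to say, built-in from the outset'' --- makes clear that the intended proof is precisely yours: since the 2D corner equations are \emph{defined} to be $\partial S^{ij}/\partial x=0,\ldots,\partial S^{ij}/\partial x_{ij}=0$, the total differential of $S^{ij}$ vanishes identically on the solution set, hence $S^{ij}$ is locally constant there.
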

In particular, if all these constants $\ell^{ij}$ vanish, then the discrete 1-form $\Ell$ is closed on solutions of the Euler-Lagrange equations, so that the critical value of the action functional $S_\Sigma$ does not depend on the choice of the curve $\Sigma$ connecting two given points in $\Z^m$.
\medskip

We now turn to the Hamiltonian part of the theory of one-dimensional pluri-Lagrangian systems. Consistency of the system of 2D corner equations \eqref{eq: E 0}--\eqref{eq: E12 0} is equivalent to existence of a function $p:\Z^m\to {\mathcal X}$ satisfying all the relations
\begin{alignat}{2}
    p & = \frac{\partial\Lambda_i(x,x_i)}{\partial x},&\qquad& i=1,\ldots,m,
    \label{eq: discr p 1}\\
    p  & = -\frac{\partial\Lambda_i(x_{-i},x)}{\partial x},&\qquad& i=1,\ldots,m.
     \label{eq: discr p 2}
\end{alignat}
We say that the multi-time discrete Lagrangian 1-form $\Ell$ is \emph{Legendre transformable}, if all the equations (\ref{eq: discr p 1}) can be solved for $x_i$ in terms of $x,p$. In this case, equations
\begin{alignat}{2}\label{eq: single Lagr map}
    &p=\frac{\partial\Lambda_i(x,x_i)}{\partial x},&\qquad
    &p_i=-\frac{\partial\Lambda_i(x,x_i)}{\partial x_i},
\end{alignat}
define a symplectic map $F_i:(x,p)\mapsto(x_i,p_i)$.
\begin{theo}
For a consistent one-dimensional pluri-Lagrangian system with a Legendre-trans\-formable 1-form $\Ell$, maps $F_i$ commute:
\begin{equation}\label{eq: commute}
F_i\circ F_j=F_j\circ F_i,
\end{equation}
see Figure~\ref{Fig: consistency}\subref{Fig: consistency2}). Conversely, for a given system of $m$ commuting symplectic maps
$F_i$ admitting Lagrangians (generating functions) $\Lambda_i$, the 1-form $\Ell$ defined by $\Ell(\sigma_i)=\Lambda_i(x,x_i)$, generates a consistent one-dimensional pluri-Lagrangian system.
\end{theo}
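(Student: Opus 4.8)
The plan is to treat both implications as two readings of a single dictionary: the two ways of propagating a solution of the 2D corner equations around an elementary square $\sigma_{ij}$ are precisely the two compositions $F_i\circ F_j$ and $F_j\circ F_i$. Throughout I would use that, by \eqref{eq: single Lagr map}, the map $F_i$ sends $(x,p)$ to $(x_i,p_i)$ characterized by $p=\partial\Lambda_i(x,x_i)/\partial x$ and $p_i=-\partial\Lambda_i(x,x_i)/\partial x_i$, and that Legendre transformability is exactly what makes these relations solvable for whichever argument is needed.

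For the direct statement I would pick an arbitrary $(x,p)$ and set $(x_i,p_i)=F_i(x,p)$, $(x_j,p_j)=F_j(x,p)$, $(x_{ij},p_{ij})=F_i(x_j,p_j)$ and $(x_{ij}',p_{ij}')=F_j(x_i,p_i)$. Then: (i) the identity $p=\partial\Lambda_i(x,x_i)/\partial x=\partial\Lambda_j(x,x_j)/\partial x$ is exactly corner equation \eqref{eq: E} for the triple $x,x_i,x_j$; (ii) eliminating $p_j$ between $p_j=-\partial\Lambda_j(x,x_j)/\partial x_j$ and $p_j=\partial\Lambda_i(x_j,x_{ij})/\partial x_j$ reproduces exactly \eqref{eq: E2}, so $x_{ij}$ is the value obtained by solving \eqref{eq: E2}; (iii) symmetrically, eliminating $p_i$ reproduces \eqref{eq: E1}, so $x_{ij}'$ is the value obtained by solving \eqref{eq: E1}. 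Consistency of the corner system (Definition \ref{def: 2D corner eqs consist}) says precisely that these two values agree, $x_{ij}=x_{ij}'$, and that the common value also satisfies \eqref{eq: E12}, i.e.\ $\partial\Lambda_i(x_j,x_{ij})/\partial x_{ij}=\partial\Lambda_j(x_i,x_{ij})/\partial x_{ij}$. Since $p_{ij}=-\partial\Lambda_i(x_j,x_{ij})/\partial x_{ij}$ and $p_{ij}'=-\partial\Lambda_j(x_i,x_{ij})/\partial x_{ij}$, this forces $p_{ij}=p_{ij}'$, and hence $F_i\circ F_j=F_j\circ F_i$.

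For the converse I would run the same chain in reverse. Solutions of \eqref{eq: E} are parametrized by $(x,p)$ with $x_i,x_j$ the configuration components of $F_i(x,p)$ and $F_j(x,p)$ (here solvability of the generating-function relations is used). On this two-parameter family, solving \eqref{eq: E2} for $x_{ij}$ reproduces the relation $p_j=\partial\Lambda_i(x_j,x_{ij})/\partial x_j$ with $p_j=-\partial\Lambda_j(x,x_j)/\partial x_j$, so $x_{ij}$ is the configuration component of $F_i\circ F_j(x,p)$; solving \eqref{eq: E1} likewise yields the configuration component of $F_j\circ F_i(x,p)$. Commutativity makes the two equal, and then the momentum components of $F_i\circ F_j(x,p)$ and $F_j\circ F_i(x,p)$, namely $-\partial\Lambda_i(x_j,x_{ij})/\partial x_{ij}$ and $-\partial\Lambda_j(x_i,x_{ij})/\partial x_{ij}$, must coincide, which is \eqref{eq: E12}. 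As \eqref{eq: E} constrains only $x,x_i,x_j$ whereas \eqref{eq: E1} genuinely determines $x_{ij}$, these two equations are independent, so the corner system has the minimal possible rank $2$, i.e.\ it is consistent in the sense of Definition \ref{def: 2D corner eqs consist}, which is what had to be shown.

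The only step that is not pure sign bookkeeping is the appeal to Legendre transformability / non-degeneracy of the generating functions, which is what lets one pass between ``solve \eqref{eq: E1} for $x_{ij}$'' and ``apply $F_j$'' in both directions; I would therefore state this hypothesis explicitly at the outset. Beyond that, no fixed-point, iteration, or limiting argument is needed: the proof is just the identification of one-step propagation of the corner equations with composition of the maps $F_i$.
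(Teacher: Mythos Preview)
The paper does not actually supply its own proof of this theorem; it is quoted as background from \cite{S12}, so there is nothing to compare against. Your argument is correct and is the natural one: identify the four corner equations \eqref{eq: E}--\eqref{eq: E12} with the relations obtained by eliminating the momenta $p,p_i,p_j,p_{ij}$ from the generating-function relations \eqref{eq: single Lagr map} for the two edges meeting at each vertex of $\sigma_{ij}$, and read off consistency $\Leftrightarrow$ commutativity from that dictionary. One minor remark: in the converse you implicitly use non-degeneracy of the $\Lambda_i$ (so that \eqref{eq: E1}, \eqref{eq: E2} can be solved for $x_{ij}$), which is part of the hypothesis that the $F_i$ are genuine symplectic maps with generating functions $\Lambda_i$; it would be worth stating this explicitly, as you already did for the direct implication.
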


\subsection{Two-dimensional pluri-Lagrangian systems, \texorpdfstring{$d=2$}{d=2}}

This section is based on \cite{variational}.

In the case $d=2$, $\Ell$ is a function of oriented elementary squares
\[
\sigma_{ij}=\left(n,n+e_{i},n+e_{i}+e_{j},n+e_{j}\right),
\]
such that $\Ell\left(\sigma_{ij}\right)=-\Ell\left(\sigma_{ji}\right)$.

One can show that the flower of any interior vertex of an oriented quad-surface $\Sigma$ in $\Z^m$ can be represented as a sum of (oriented) 3D~corners in $\Z^{m+1}$. Here, a \emph{3D~corner} is a quad-surface consisting of three elementary squares adjacent to a vertex of valence 3. Examples of 3D corners are given in \cite{variational}.
As a consequence, the action for any flower can be represented as a sum of actions for several 3D~corners. Thus, Euler-Lagrange equation for any interior vertex $n$ of $\Sigma$ can be represented as a sum of several Euler-Lagrange equations for 3D~corners. This justifies the following fundamental definition:
\begin{defi}\label{def:pluriLagr system}
The \emph{system of 3D~corner equations} for a given discrete 2-form $\Ell$ consists of discrete Euler-Lagrange equations for all possible 3D~corners in $\Z^m$. If the action for the surface of an oriented elementary cube $\sigma_{ijk}$ of the coordinate directions $i,j,k$ (which can be identified with the discrete exterior derivative $d\Ell$ evaluated at $\sigma_{ijk}$) is denoted by
\begin{equation}\label{eq: Sijk}
S^{ijk}=d\Ell\left(\sigma_{ijk}\right)=\Delta_k\Ell\left(\sigma_{ij}\right)+
\Delta_i\Ell\left(\sigma_{jk}\right)+\Delta_j\Ell\left(\sigma_{ki}\right),
\end{equation}
then the system of 3D~corner equations consists of the eight equations
\begin{equation}\label{eq: corner eqs}
\begin{alignedat}{4}
&\dfrac{\partial S^{ijk}}{\partial x}=0, &\qquad& \dfrac{\partial S^{ijk}}{\partial x_i}=0, &\qquad& \dfrac{\partial S^{ijk}}{\partial x_j}=0, &\qquad& \dfrac{\partial S^{ijk}}{\partial x_k}=0, \\
\\
&\dfrac{\partial S^{ijk}}{\partial x_{ij}}=0, &\qquad& \dfrac{\partial S^{ijk}}{\partial x_{jk}}=0, &\qquad& \dfrac{\partial S^{ijk}}{\partial x_{ik}}=0, &\qquad& \dfrac{\partial S^{ijk}}{\partial x_{ijk}}=0,
\end{alignedat}
\end{equation}
for each triple $i,j,k$.
\end{defi}
Thus, the system of 3D~corner equations encompasses all possible discrete Euler-Lagrange equations for all possible quad-surfaces $\Sigma$. In other words, solutions of a two-dimensional pluri-Lagrangian problem as introduced in Definition \ref{def:pluriLagr problem} are precisely solutions of the corresponding system of 3D~corner equations.
\begin{rem}
We formulated the system of 3D~corner equations for a generic 2-form $\Ell$. In particular cases the quantity $S^{ijk}$ could be independent on some of the fields at the corners of the cube. Then the system of 3D~corner equations \eqref{eq: corner eqs} could contain less equations.
\end{rem}
Of course, in order that the above definition be meaningful, the system of 3D~corner equations has to be \emph{consistent}:
\begin{defi} \label{def: corner eqs consist}
The system \eqref{eq: corner eqs} is called \emph{consistent}, if it has the minimal possible rank 2, i.e., if exactly two of these equations are independent.
\end{defi}
Again, the ``almost closedness'' of the 2-form $\Ell$ on solutions of the system of 3D~corner equations is built-in from the outset.
\begin{theo}\label{Th: almost closed}
For any triple of the coordinate directions $i,j,k$, the action $S^{ijk}$ over an elementary cube of these coordinate directions is constant on solutions of the system of 3D~corner equations \eqref{eq: corner eqs}:
\[
S^{ijk}\left(x,\ldots,x_{ijk}\right)=c^{ijk}=\mathrm{const}  \pmod{\partial S^{ijk}/\partial x=0,\ldots,\partial S^{ijk}/\partial x_{ijk}=0}.
\]
\end{theo}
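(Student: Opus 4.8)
The plan is to read the statement off directly from the definition of the 3D corner equations. By \eqref{eq: corner eqs}, these equations are \emph{by construction} the vanishing of all partial derivatives of $S^{ijk}$; so once one checks that $S^{ijk}$ depends on nothing but the fields at the eight vertices of the cube $\sigma_{ijk}$, the theorem becomes a one-line differential argument, exactly parallel to Theorem~\ref{th: discr almost closed} in the one-dimensional case.

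First I would unfold \eqref{eq: Sijk}. The summand $\Delta_k\Ell(\sigma_{ij})$ is the difference of the values of $\Ell$ on the bottom square $\sigma_{ij}$ and on its shift in the $k$-th direction; together these two squares carry exactly the eight fields $x,x_i,x_j,x_k,x_{ij},x_{ik},x_{jk},x_{ijk}$, and no other field of $\Z^m$ enters. The same is true of $\Delta_i\Ell(\sigma_{jk})$ and $\Delta_j\Ell(\sigma_{ki})$. Hence $S^{ijk}$ is a function of precisely these eight arguments, and its total differential is
\[
dS^{ijk}=\frac{\partial S^{ijk}}{\partial x}\,dx+\frac{\partial S^{ijk}}{\partial x_i}\,dx_i+\dots+\frac{\partial S^{ijk}}{\partial x_{ijk}}\,dx_{ijk},
\]
a sum of eight terms whose coefficients are exactly the left-hand sides of the eight equations \eqref{eq: corner eqs}.

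Next I would restrict to the solution set of the system of 3D corner equations. Any solution $x:\Z^m\to\mathcal X$ satisfies, among many other relations, the eight corner equations attached to the cube $\sigma_{ijk}$; on the locus where these hold, every coefficient in the displayed formula for $dS^{ijk}$ vanishes, hence $dS^{ijk}=0$ there. Therefore $S^{ijk}$ is constant on each connected component of that locus; denoting the value $c^{ijk}$ gives the assertion. If, in a degenerate case, $S^{ijk}$ happens not to depend on some of the corner fields, then---as the Remark above observes---the corresponding corner equations are simply absent, and the argument is unchanged.

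The only point requiring care---the ``main obstacle'', such as it is---is the bookkeeping in the second step: verifying that the three shifted-difference terms of \eqref{eq: Sijk} together account for all eight vertices of the cube and for no additional lattice point, so that the number of nontrivial partial derivatives of $S^{ijk}$ matches the number of corner equations in \eqref{eq: corner eqs}. No analytic input and, in particular, no use of the consistency (rank~$2$) hypothesis is needed: the conclusion holds on whatever the solution set of \eqref{eq: corner eqs} happens to be, the ``almost closedness'' of $\Ell$ being built into the definition from the outset.
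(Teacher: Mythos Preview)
Your proposal is correct and matches the paper's own treatment: the paper does not write out a proof here but states just before the theorem that ``the `almost closedness' of the 2-form $\Ell$ on solutions of the system of 3D~corner equations is built-in from the outset,'' which is precisely your observation that the corner equations \eqref{eq: corner eqs} are by definition the vanishing of all partial derivatives of $S^{ijk}$, whence $dS^{ijk}=0$ on the solution locus. Your unpacking of \eqref{eq: Sijk} to confirm that $S^{ijk}$ depends only on the eight cube vertices, and your remark that consistency is not needed for this step, are both accurate and in line with the paper's intent.
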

The most interesting case is, of course, when all $c^{ijk}=0$. Then $d\Ell=0$, that is, the discrete 2-form $\Ell$ is \emph{closed} on solutions of the system of 3D~corner equations, so that the critical value of the action $S_\Sigma$ does not change under perturbations of the quad-surface $\Sigma$ in $\Z^m$ fixing its boundary.

\section{From 2D pluri-Lagrangian systems to relativistic Toda type systems}
\label{sect: from pluri to rel}

We start with a general 3-point 2-form
\begin{equation}\label{eq: 3point}
    \cL(\sigma_{ij})=L_i(x_i-x)-L_j(x_j-x)-\Lambda_{ij}(x_j-x_i),
\end{equation}
where the Lagrangians $L_i$ and $\Lambda_{ij}$ only depend on the differences of the fields at the end points, and the diagonal Lagrangians are skew-symmetric in the sense that $\Lambda_{ij}(x)=-\Lambda_{ji}(-x)$.

For a 3-point 2-form, expression \eqref{eq: Sijk} specializes to
\begin{equation}\label{eq: 3point S}
\begin{split}
S^{ijk} & =  L_i(x_{ik}-x_k)+L_j(x_{ij}-x_i)+L_k(x_{jk}-x_j)\\
        &   \phantom{=}\ -L_i(x_{ij}-x_j)-L_j(x_{jk}-x_k)-L_k(x_{ik}-x_i)\\
        &   \phantom{=}\ -\Lambda_{ij}(x_{jk}-x_{ik})-\Lambda_{jk}(x_{ik}-x_{ij})-\Lambda_{ki}(x_{ij}-x_{jk})\\
        &   \phantom{=}\ +\Lambda_{ij}(x_j-x_i)+\Lambda_{jk}(x_k-x_j)+\Lambda_{ki}(x_i-x_k).
\end{split}
\end{equation}
Thus, $S^{ijk}$ depends on neither $x$ nor $x_{ijk}$, and its domain of definition is better visualized as an octahedron shown in Figure~\ref{octahedron}.

\begin{figure}[htbp]
   \centering
   \begin{tikzpicture}[scale=0.85,inner sep=2]
      \node (x) at (0,0) [circle,fill,label=-135:$x$] {};
      \node (x1) at (3,0) [circle,fill,label=-45:$x_{i}$] {};
      \node (x2) at (1,1) [circle,fill,label=-90:$x_{j}$] {};
      \node (x3) at (0,3) [circle,fill,label=135:$x_{k}$] {};
      \node (x12) at (4,1) [circle,fill,label=0:$x_{ij}$] {};
      \node (x13) at (3,3) [circle,fill,label=0:$x_{ik}$] {};
      \node (x23) at (1,4) [circle,fill,label=135:$x_{jk}$] {};
      \node (x123) at (4,4) [circle,fill,label=45:$x_{ijk}$] {};
      \draw (x) to (x1);
      \draw (x) to (x2);
      \draw (x) to (x3);
      \draw [ultra thick] (x1) to (x2);
      \draw [ultra thick] (x1) to (x3);
      \draw [ultra thick] (x1) to (x12);
      \draw [ultra thick] (x1) to (x13);
      \draw [ultra thick] (x2) to (x3);
      \draw [ultra thick,dashed] (x2) to (x12);
      \draw [ultra thick,dashed] (x2) to (x23);
      \draw [ultra thick] (x3) to (x13);
      \draw [ultra thick] (x3) to (x23);
      \draw [ultra thick] (x12) to (x13);
      \draw (x12) to (x123);
      \draw [ultra thick,dashed] (x12) to (x23);
      \draw [ultra thick] (x13) to (x23);
      \draw (x13) to (x123);
      \draw (x23) to (x123);
   \end{tikzpicture}
   \caption{Octahedron supporting $d\Ell$ for a 3-point discrete 2-form $\Ell$}
   \label{octahedron}
\end{figure}
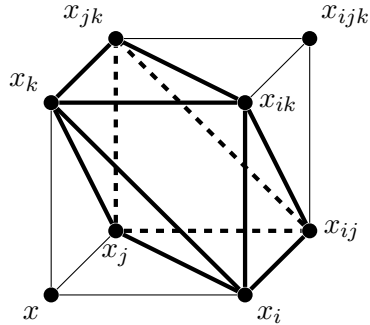

Accordingly, the system of corner equations consists of six equations per elementary 3D cube, which we denote by $({\mathcal E}_i)$, $({\mathcal E}_j)$, $({\mathcal E}_k)$, $({\mathcal E}_{ij})$, $({\mathcal E}_{ik})$, and $({\mathcal E}_{jk})$. Our main assumption is that \emph{the system of corner equations is consistent}. To write them down, we set
\begin{alignat}{2}\label{eq: phi psi}
    &\psi_i(x)=\frac{\partial L_i(x)}{\partial x},&\qquad
    &\phi_{ij}(x)=\frac{\partial \Lambda_{ij}(x)}{\partial x}.
\end{alignat}
In particular, we have: $\phi_{ij}(x)=\phi_{ji}(-x)$. In terms of these functions, corner equations read:
\begin{align}\label{eq: 3point Ei}\tag{${\mathcal E}_i$}
&\psi_j(x_{ij}-x_i)+\phi_{ij}(x_j-x_i)=\psi_k(x_{ik}-x_i)+\phi_{ik}(x_k-x_i),\\
\label{eq: 3point Eij}\tag{${\mathcal E}_{ij}$}
&\psi_j(x_{ij}-x_i)+\phi_{kj}(x_{ij}-x_{ik})=\psi_i(x_{ij}-x_j)+\phi_{ki}(x_{ij}-x_{jk}).
\end{align}
In what follows, one of the coordinate directions (which we denote as the 0\textsuperscript{th} one) plays a distinguished role, it enumerates the sites of the relativistic Toda chains. We will use the index $n$ for this coordinate direction only. Accordingly, we will only consider surfaces in $\mathbb Z^m$ which contain, along with any point, the whole line through this point parallel to the 0\textsuperscript{th} coordinate axis. One can call such surfaces \emph{cylindrical}. The set of values of $x$ along such a line, $x=\{x_n: n\in\mathbb Z\}$, or, upon a finite-dimensional reduction, $x=\{x_n: 1\le n\le N\}$, is an element of the configuration space of the relativistic Toda lattice. We use the accents $\widetilde{\phantom{x}}$ and $\widehat{\phantom{x}}$ to denote the shift in the discrete times corresponding to all other coordinate directions.

\begin{defi}\label{def: Fi}
The map $F_i: (x,p)\mapsto(\wx,\wip)$ is the symplectic map with the generating function
\begin{equation}\label{eq: dRTL1 Lagr}
\ELL_i(x,\wx)=\sum_{n=1}^N L_i(\wx_n-x_n)-\sum_{n=1}^N L_0(x_{n+1}-x_n)
-\sum_{n=1}^N\Lambda_{i0}(x_{n+1}-\wx_n),
\end{equation}
thus its equations of motion $p_n=-\partial \ELL_{i}/\partial x_n$, $\wip_n=\partial \ELL_{i}/\partial \wx_n$ read:
\begin{equation}
F_i:\begin{cases}
p_n=\psi_i(\wx_n-x_n)+\phi_{i0}(x_n-\wx_{n-1})-\psi_0(x_{n+1}-x_n)+\psi_0(x_n-x_{n-1}),\\
\wip_n=\psi_i(\wx_n-x_n)+\phi_{i0}(x_{n+1}-\wx_n).\end{cases}
\end{equation}
The Euler-Lagrange equations read
\begin{equation}
\psi_i(\wx_n-x_n)-\psi_i(x_n-\undertilde{x}_n)=
\psi_0(x_{n+1}-x_n)-\psi_0(x_n-x_{n-1})+\phi_{i0}(\undertilde{x}_{n+1}-x_n)-\phi_{i0}(x_n-\wx_{n-1}).
\end{equation}
\end{defi}

This map corresponds to the edges $(x,\wx)=(x,x_i)$ of the $i$\textsuperscript{th} coordinate direction, to which the strip supporting $\ELL_{i}$ projects along the 0\textsuperscript{th} coordinate axis. See the identifications of variables on Figure~\ref{fig: action for F}. We denote the index set of the maps $F_i$ by $I=\{i,j,\ldots\}$.
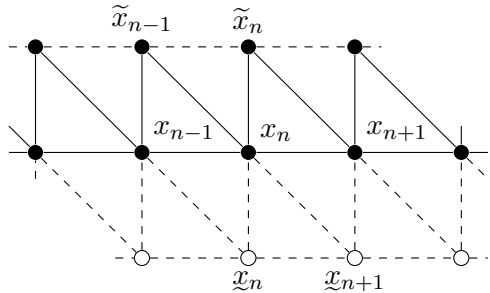
\begin{figure}[htbp]
   \centering
   \begin{tikzpicture}[scale=0.7,inner sep=2]
      \node (x1) at (2,0) [circle,draw] {};
      \node (x11) at (4,0) [circle,draw,label=-90:$\undertilde{x}_{n}$] {};
      \node (x111) at (6,0) [circle,draw,label=-90:$\undertilde{x}_{n+1}$] {};
      \node (x1111) at (8,0) [circle,draw] {};
      \node (x2) at (0,2) [circle,fill] {};
      \node (x12) at (2,2) [circle,fill,label=45:$x_{n-1}$] {};
      \node (x112) at (4,2) [circle,fill,label=45:$x_{n}$] {};
      \node (x1112) at (6,2) [circle,fill,label=45:$x_{n+1}$] {};
      \node (x11112) at (8,2) [circle,fill] {};
      \node (x22) at (0,4) [circle,fill] {};
      \node (x122) at (2,4) [circle,fill,label=90:$\widetilde{x}_{n-1}$] {};
      \node (x1122) at (4,4) [circle,fill,label=90:$\widetilde{x}_{n}$] {};
      \node (x11122) at (6,4) [circle,fill] {};
      \draw [dashed] (1.5,0) to (x1) to (x11) to (x111) to (x1111) to (8.5,0);
      \draw [dashed] (0,1.5) to (x2) to (x1) to (x12) to (x11) to (x112) to (x111) to (x1112) to (x1111) to (x11112) to (8.5,1.5);
      \draw (-0.5,2) to (x2) to (x12) to (x112) to (x1112) to (x11112) to (8.5,2);
      \draw (-0.5,2.5) to (x2) to (x22) to (x12) to (x122) to (x112) to (x1122) to (x1112) to (x11122) to (x11112) to (8,2.5);
      \draw [dashed] (-0.5,4) to (x22) to (x122) to (x1122) to (x11122) to (6.5,4);
   \end{tikzpicture}
   \caption{Domain of the map $F_{i}$}
   \label{fig: action for F}
\end{figure}

\begin{defi}  \label{def: Gk}
The map $G_k: (x,p)\mapsto(\wx,\wip)$ is the symplectic map with the generating function
\begin{equation}\label{eq: dRTL2 Lagr}
\M_k(x,\wx)=\sum_{n=1}^N \Lambda_{k0}(\wx_n-x_n)+\sum_{n=1}^N L_0(\wx_n-\wx_{n-1})-
\sum_{n=1}^N L_k(x_n-\wx_{n-1}),
\end{equation}
thus its equations of motion $p_n=-\partial \M_{k}/\partial x_n$, $\wip_n=\partial \M_{k}/\partial \wx_n$ read:
\begin{equation}
G_k:\begin{cases}
p_n=\phi_{k0}(\wx_n-x_n)+\psi_k(x_n-\wx_{n-1}),\\
\wip_n=\phi_{k0}(\wx_n-x_n)+\psi_k(x_{n+1}-\wx_n)-\psi_0(\wx_{n+1}-\wx_n)+\psi_0(\wx_n-\wx_{n-1}).\end{cases}
\end{equation}
The Euler-Lagrange equations read
\begin{equation}
\phi_{k0}(\wx_n-x_n)-\phi_{k0}(x_n-\undertilde{x}_n)=
-\psi_0(x_{n+1}-x_n)+\psi_0(x_n-x_{n-1})+\psi_k(\undertilde{x}_{n+1}-x_n)-\psi_j(x_n-\wx_{n-1}).
\end{equation}
\end{defi}
This map corresponds to the negatively directed edges $(x,\wx)=(x,x_{-k})$ of the $k$\textsuperscript{th} coordinate direction, to which the strip supporting $\M_{k}$ projects along the 0\textsuperscript{th} coordinate axis. See the identifications of variables on Figure~\ref{fig: action for G}. We denote the index set of the maps $G_k$ by $K=\{k,\ell,\ldots\}$, and assume it to be disjoint from $I=\{i,j,\ldots\}$.
\begin{figure}[htbp]
   \centering
   \begin{tikzpicture}[scale=0.7,inner sep=2]
      \node (x1) at (2,0) [circle,fill] {};
      \node (x11) at (4,0) [circle,fill,label=-90:$\widetilde{x}_{n-1}$] {};
      \node (x111) at (6,0) [circle,fill,label=-90:$\widetilde{x}_{n}$] {};
      \node (x1111) at (8,0) [circle,fill] {};
      \node (x2) at (0,2) [circle,fill] {};
      \node (x12) at (2,2) [circle,fill,label=45:$x_{n-1}$] {};
      \node (x112) at (4,2) [circle,fill,label=45:$x_{n}$] {};
      \node (x1112) at (6,2) [circle,fill,label=45:$x_{n+1}$] {};
      \node (x11112) at (8,2) [circle,fill] {};
      \node (x22) at (0,4) [circle,draw] {};
      \node (x122) at (2,4) [circle,draw,label=90:$\undertilde{x}_{n}$] {};
      \node (x1122) at (4,4) [circle,draw,label=90:$\undertilde{x}_{n+1}$] {};
      \node (x11122) at (6,4) [circle,draw] {};
      \draw (1.5,0) to (x1) to (x11) to (x111) to (x1111) to (8.5,0);
      \draw (0,1.5) to (x2) to (x1) to (x12) to (x11) to (x112) to (x111) to (x1112) to (x1111) to (x11112) to (8.5,1.5);
      \draw [dashed] (-0.5,2) to (x2) to (x12) to (x112) to (x1112) to (x11112) to (8.5,2);
      \draw [dashed] (-0.5,2.5) to (x2) to (x22) to (x12) to (x122) to (x112) to (x1122) to (x1112) to (x11122) to (x11112) to (8,2.5);
      \draw [dashed] (-0.5,4) to (x22) to (x122) to (x1122) to (x11122) to (6.5,4);
   \end{tikzpicture}
   \caption{Domain of the map $G_{k}$}
   \label{fig: action for G}
\end{figure}
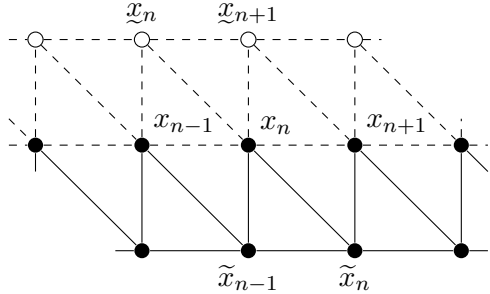

In the present paper, we consider maps $F_{i}$, $G_{k}$ with finitely many degrees of freedom $(1\le n\le N)$. This requires to specify certain boundary conditions. We will consider either the so-called open-end or periodic boundary conditions.
\begin{itemize}
\item Open-end boundary conditions correspond to letting the second and the third sums in the Lagrangian functions \eqref{eq: dRTL1 Lagr}, \eqref{eq: dRTL2 Lagr} extend over $1\leq n\leq N-1$ only. Effectively, this amounts to omitting terms containing $x_{0}$ or $\tilde{x}_{0}$ from the expressions for $p_{1}$ and $\widetilde{p}_{1}$, and likewise omitting terms containing $x_{N+1}$ or $\tilde{x}_{N+1}$ from the expressions for $p_{N}$ and $\widetilde{p}_{N}$. In this case, maps $F_{i}$ and $G_{k}$ are \emph{single-valued} functions of $\left(x,p\right)$.
\item Periodic boundary conditions correspond to letting all indices be taken $\mathrm{mod}\ N$, so that $x_0=x_N$, $x_{N+1}=x_1$. In this case, these maps are \emph{double-valued}, so that the very notion of their commutativity has to be clarified. We achieve this along the same lines as in the previous work \cite{nonrel}.
\end{itemize}

\begin{theo}
Let $i$, $j$, $k$, and $\ell$ be four different indices from $I=\{i,j,\ldots\}$ and $K=\{k,\ell,\ldots\}$. Then any two of the maps $F_i$, $F_j$, $G_k$, and $G_\ell$ commute.
\end{theo}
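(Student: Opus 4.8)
It suffices to treat the three cases $F_i\circ F_j=F_j\circ F_i$, $G_k\circ G_\ell=G_\ell\circ G_k$ and $F_i\circ G_k=G_k\circ F_i$; the hypothesis that $i,j,k,\ell$ are four distinct indices, together with $0\notin I\cup K$, guarantees that the triple of coordinate directions relevant in each case ($\{i,j,0\}$, $\{k,\ell,0\}$, $\{i,k,0\}$ respectively) is non-degenerate, so that the system of 3D~corner equations \eqref{eq: corner eqs} for that triple is available and, by our standing assumption, consistent in the sense of Definition~\ref{def: corner eqs consist}. The plan is to realise each composition as the symplectic map generated by the action of the 2-form $\cL$ over a cylindrical ``staircase'' surface in $\Z^m$ and then to read off commutativity from this consistency. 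By Definitions~\ref{def: Fi} and~\ref{def: Gk}, $F_i$ ($i\in I$) is generated by $\ELL_i$ in \eqref{eq: dRTL1 Lagr}, which is the action of $\cL$ over the width-one strip in the $i$-th coordinate direction projecting onto a single positive edge along the $0$-th axis, and $G_k$ ($k\in K$) is generated by $\M_k$ in \eqref{eq: dRTL2 Lagr}, the action of $\cL$ over the analogous strip in the negative $k$-th direction. By the standard composition rule for generating functions (eliminate the intermediate momentum by imposing stationarity with respect to the intermediate layer of fields), the generating function of, say, $F_i\circ F_j$ is the action of $\cL$ over the L-shaped cylindrical surface obtained by stacking the $j$-strip over $x$ beneath the $i$-strip over $x_j$; the opposite composition $F_j\circ F_i$ corresponds to the complementary L-shaped surface, which has the same boundary. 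The same holds in the remaining two cases, with the orientations of the relevant edges reversed for the factors coming from $K$.

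Next I would use a discrete Stokes/telescoping argument. The two L-shaped surfaces associated with the two orders of composition differ precisely by the closed cylindrical surface bounding the ``tube'' of elementary cubes $\sigma_{ij0}$ (resp.\ $\sigma_{k\ell0}$, $\sigma_{ik0}$) over the sites $n$ of the $0$-th axis. Hence, as an identity in all fields on the $\cL$-corner (cylindrically extended), the difference of the two composed generating functions equals $\sum_n d\cL(\sigma_{ij0})$ evaluated site by site, i.e.\ $\sum_n S^{ij0}$ in the notation \eqref{eq: 3point S}. Moreover---and this is the key structural observation announced in the introduction---the equations singling out a composition among the critical points of a staircase action (the Euler--Lagrange equation at the intermediate layer, together with the boundary-layer equations) are exactly the 3D~corner equations \eqref{eq: corner eqs} for those cubes, read as \emph{local superposition formulae}: one verifies this directly by matching the Euler--Lagrange equations recorded in Definitions~\ref{def: Fi}--\ref{def: Gk} with \eqref{eq: 3point Ei}--\eqref{eq: 3point Eij}.

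Commutativity then follows. By consistency (Definition~\ref{def: corner eqs consist}) the system of 3D~corner equations for the relevant triple has rank $2$, so the fields at the vertices of each cube $\sigma_{ij0}$ not fixed by the initial layer are uniquely and \emph{path-independently} determined by the corner equations; combined with Theorem~\ref{Th: almost closed} (which makes each $\sum_n S^{ij0}$ constant along solutions, so that the difference of the two composed generating functions cannot affect the resulting map), this shows that $F_i\circ F_j$ and $F_j\circ F_i$ are characterised by the same equations and hence coincide. Propagating the argument layer by layer along the staircase gives the assertion for arbitrary composable surfaces, and the same reasoning, with the orientation bookkeeping adjusted, handles $F_i\circ G_k=G_k\circ F_i$ and $G_k\circ G_\ell=G_\ell\circ G_k$. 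Under periodic boundary conditions the maps are double-valued; there commutativity is understood branch-wise, and since the superposition formulae describe the branching of the multi-valued solutions explicitly, the argument carries over as in~\cite{nonrel}.

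The step I expect to be the main obstacle is the careful bookkeeping in the mixed case $F_i\circ G_k$: there the two staircase surfaces are assembled from edges of opposite orientation, so that the identification of their difference with a tube of cubes, and of the intermediate Euler--Lagrange equation with a specific corner equation, has to be set up with care (and the roles of the $\psi$- and $\phi$-ingredients in \eqref{eq: phi psi} get interchanged relative to the pure-$F$ and pure-$G$ cases). A secondary point, needed to pass from an equality of generating functions to an equality of maps, is the solvability of the momentum equations for the shifted fields (the analogue of Legendre transformability in the $d=1$ theory) and, in the periodic setting, the matching of branches---but this presents no difficulty beyond what was already handled in~\cite{nonrel}.
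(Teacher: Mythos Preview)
Your strategy is essentially the paper's: split into the three cases $\{F_i,F_j\}$, $\{G_k,G_\ell\}$, $\{F_i,G_\ell\}$, embed each into the $\Z^3$ spanned by the two relevant directions and the $0$\textsuperscript{th} one, and deduce commutativity from the rank-$2$ consistency of the system of 3D~corner equations \eqref{eq: 3point Ei}--\eqref{eq: 3point Eij}. The difference is one of packaging. The paper does not pass through a Stokes-type comparison of L-shaped actions; instead it writes the four 2D~corner equations $(E),(E_i),(E_j),(E_{ij})$ for each pair explicitly, and then introduces \emph{explicit} local superposition formulae $(S1)$--$(S4)$ which, together with $(E)$ and $(E_{ij})$, are literally the six 3D~corner equations on the cube (while $(E_i)$ and $(E_j)$ are \emph{sums} of two of these, not single corner equations---a point your sentence ``are exactly the 3D~corner equations'' slightly blurs). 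Consistency then gives the unique $\widehat{\wx}$ and hence the claim; in the periodic case the superposition formulae also pin down the branching, which is how the double-valuedness is handled.

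Two remarks on your version. First, your appeal to Theorem~\ref{Th: almost closed} is superfluous for commutativity: once consistency forces both staircases to produce the same intermediate and top fields, the two compositions already coincide as maps, independently of whether $S^{ij0}$ is constant on solutions. Second, your anticipated ``main obstacle'' is real and is exactly where the paper spends its care: in the mixed case $F_i,G_\ell$ only \emph{two} of the four superposition formulae are local in $\widehat{\wx}$ (the other two involve $\widehat{\wx}_{n-1}$), and the 2D~corner equations that are sums of two 3D~corner equations are now $(E)$ and $(E_{i\ell})$ rather than $(E_i),(E_\ell)$. Getting this bookkeeping right is the content of the paper's Theorems~\ref{th: rtl+}, \ref{th: rtl-}, \ref{th: rtl+-}; your sketch points at it but would need those explicit identifications to become a proof.
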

The next three subsections are devoted to the proof of this theorem. We prove separately the commutativity of $F_i$ and $F_j$, of $G_k$ and $G_\ell$, and of $F_i$ and $G_\ell$. As explained in Section~\ref{one-dimensional}, each such statement is equivalent to consistency of the corresponding system of 2D corner equations.

\subsection{Proof of commutativity of the maps \texorpdfstring{$F_i$, $F_j$}{F\_i, F\_j}}
\label{sect: rtl++}

The 2D~corner equations for the pluri-Lagrangian system corresponding to two maps $F_{i}$ and $F_{j}$ read:
\begin{align}
\label{eq: rtl E}\tag{$E$}
&\begin{aligned}\psi_i(\wx_n-x_n)+\phi_{i0}(x_n-\wx_{n-1})=\psi_j(\whx_n-x_n)+\phi_{j0}(x_n-\whx_{n-1}),
\end{aligned}\\\label{eq: rtl E1}\tag{$E_i$}
&\begin{aligned}
&\psi_i(\wx_n-x_n)+\phi_{i0}(x_{n+1}-\wx_n)=
\psi_j(\widehat{\wx}_n-\wx_n)+\phi_{j0}(\wx_n-\widehat{\wx}_{n-1})\\
&\hspace{7cm}-\psi_0(\wx_{n+1}-\wx_n)+\psi_0(\wx_n-\wx_{n-1}),&
\end{aligned}\\
\label{eq: rtl E2}\tag{$E_j$}
&\begin{aligned}
&\psi_j(\whx_n-x_n)+\phi_{j0}(x_{n+1}-\whx_n)=
\psi_i(\widehat{\wx}_n-\whx_n)+\phi_{i0}(\whx_n-\widehat{\wx}_{n-1})\\
&\hspace{7cm}-\psi_0(\whx_{n+1}-\whx_n)+\psi_0(\whx_n-\whx_{n-1}),&
\end{aligned}\\
\label{eq: rtl E12}\tag{$E_{ij}$}
&\begin{aligned} \psi_i(\widehat{\wx}_n-\whx_n)+\phi_{i0}(\whx_{n+1}-\widehat{\wx}_n)=
\psi_j(\widehat{\wx}_n-\wx_n)+\phi_{j0}(\wx_{n+1}-\widehat{\wx}_n).
\end{aligned}
\end{align}
\begin{figure}[htb]
   \centering
   \subfloat[Domain of the equation \eqref{eq: rtl E}]{\label{fig:6a}
   \begin{tikzpicture}[scale=0.6,inner sep=2]
      \node (x) at (0,0) [circle,fill,label=-90:$x_{n-1}$] {};
      \node (x1) at (3,0) [circle,fill,label=-90:$x_{n}$] {};
      \node (x2) at (1,1) [circle,fill,label=135:$\widehat{x}_{n-1}$] {};
      \node (x3) at (0,3) [circle,fill,label=90:$\widetilde{x}_{n-1}$] {};
      \node (x11) at (6,0) [circle,fill,label=-90:$x_{n+1}$] {};
      \node (x12) at (4,1) [circle,fill,label=90:$\widehat{x}_{n}$] {};
      \node (x13) at (3,3) [circle,fill,label=90:$\widetilde{x}_{n}$] {};
      \node (x112) at (7,1) [circle,fill,label=90:$\widehat{x}_{n+1}$] {};
      \node (x113) at (6,3) [circle,fill,label=90:$\widetilde{x}_{n+1}$] {};
      \node (y) at (9,0) [circle,fill,label=-90:$x$] {};
      \node (y2) at (10,1) [circle,fill,label=90:$\widehat{x}$] {};
      \node (y3) at (9,3) [circle,fill,label=90:$\widetilde{x}$] {};
      \draw (x) to (x1);
      \draw (x) to (x2);
      \draw (x) to (x3);
      \draw [ultra thick] (x1) to (x2);
      \draw [ultra thick] (x1) to (x3);
      \draw (x1) to (x11);
      \draw [ultra thick] (x1) to (x12);
      \draw [ultra thick] (x1) to (x13);
      \draw (x2) to (x12);
      \draw (x3) to (x13);
      \draw (x11) to (x112);
      \draw (x11) to (x113);
      \draw (x12) to (x112);
      \draw (x13) to (x113);
      \draw [ultra thick] (y2) to (y) to (y3);
      \node (x2) at (1,1) [circle,fill,label={135,fill=white}:$\widehat{x}_{n-1}$] {};
   \end{tikzpicture}
   }\qquad
   \subfloat[Domain of the equation \eqref{eq: rtl E1}]{\label{fig:6b}
   \begin{tikzpicture}[scale=0.6,inner sep=2]
      \node (x) at (0,0) [circle,fill,label=-90:$x_{n-1}$] {};
      \node (x1) at (3,0) [circle,fill,label=-90:$x_{n}$] {};
      \node (x3) at (0,3) [circle,fill,label=-135:$\widetilde{x}_{n-1}$] {};
      \node (x11) at (6,0) [circle,fill,label=-90:$x_{n+1}$] {};
      \node (x13) at (3,3) [circle,fill,label=-135:$\widetilde{x}_{n}$] {};
      \node (x23) at (1,4) [circle,fill,label=90:$\htilde{x}_{n-1}$] {};
      \node (x113) at (6,3) [circle,fill,label=-45:$\widetilde{x}_{n+1}$] {};
      \node (x123) at (4,4) [circle,fill,label=90:$\htilde{x}_{n}$] {};
      \node (x1123) at (7,4) [circle,fill,label=90:$\htilde{x}_{n+1}$] {};
      \node (y) at (9,0) [circle,fill,label=-90:$x$] {};
      \node (y23) at (10,4) [circle,fill,label=90:$\htilde{x}$] {};
      \node (y3) at (9,3) [circle,fill,label=-45:$\widetilde{x}$] {};
      \draw (x) to (x1);
      \draw (x) to (x1);
      \draw (x) to (x3);
      \draw (x1) to (x11);
      \draw [ultra thick] (x1) to (x13);
      \draw [ultra thick] (x3) to (x13);
      \draw (x3) to (x23);
      \draw [ultra thick] (x11) to (x13);
      \draw (x11) to (x113);
      \draw [ultra thick] (x13) to (x23);
      \draw [ultra thick] (x13) to (x113);
      \draw [ultra thick] (x13) to (x123);
      \draw (x23) to (x123);
      \draw (x113) to (x1123);
      \draw (x123) to (x1123);
      \draw [ultra thick] (y23) to (y3) to (y);
   \end{tikzpicture}
   }\\
   \subfloat[Domain of the equation \eqref{eq: rtl E2}]{\label{fig:6c}
   \begin{tikzpicture}[scale=0.6,inner sep=2]
      \node (x) at (0,0) [circle,fill,label=-90:$x_{n-1}$] {};
      \node (x1) at (3,0) [circle,fill,label=-90:$x_{n}$] {};
      \node (x2) at (1,1) [circle,fill,label=135:$\widehat{x}_{n-1}$] {};
      \node (x11) at (6,0) [circle,fill,label=-90:$x_{n+1}$] {};
      \node (x12) at (4,1) [circle,fill,label=45:$\widehat{x}_{n}$] {};
      \node (x23) at (1,4) [circle,fill,label=90:$\htilde{x}_{n-1}$] {};
      \node (x112) at (7,1) [circle,fill,label=45:$\widehat{x}_{n+1}$] {};
      \node (x123) at (4,4) [circle,fill,label=90:$\htilde{x}_{n}$] {};
      \node (x1123) at (7,4) [circle,fill,label=90:$\htilde{x}_{n+1}$] {};
      \node (y) at (9,0) [circle,fill,label=-90:$x$] {};
      \node (y2) at (10,1) [circle,fill,label=45:$\widehat{x}$] {};
      \node (y23) at (10,4) [circle,fill,label=90:$\htilde{x}$] {};
      \draw (x) to (x1);
      \draw (x) to (x2);
      \draw (x1) to (x11);
      \draw [ultra thick] (x1) to (x12);
      \draw [ultra thick] (x2) to (x12);
      \draw (x2) to (x23);
      \draw [ultra thick] (x11) to (x12);
      \draw (x11) to (x112);
      \draw [ultra thick] (x12) to (x23);
      \draw [ultra thick] (x12) to (x112);
      \draw [ultra thick] (x12) to (x123);
      \draw (x23) to (x123);
      \draw (x112) to (x1123);
      \draw (x123) to (x1123);
      \draw [ultra thick] (y23) to (y2) to (y);
   \end{tikzpicture}
   }\qquad
      \subfloat[Domain of the equation \eqref{eq: rtl E12}]{\label{fig:6d}
   \begin{tikzpicture}[scale=0.6,inner sep=2]
      \node (x2) at (1,1) [circle,fill,label=-90:$\widehat{x}_{n-1}$] {};
      \node (x3) at (0,3) [circle,fill,label=-90:$\widetilde{x}_{n-1}$] {};
      \node (x12) at (4,1) [circle,fill,label=-90:$\widehat{x}_{n}$] {};
      \node (x13) at (3,3) [circle,fill,label=-90:$\widetilde{x}_{n}$] {};
      \node (x23) at (1,4) [circle,fill,label=90:$\htilde{x}_{n-1}$] {};
      \node (x112) at (7,1) [circle,fill,label=-90:$\widehat{x}_{n+1}$] {};
      \node (x113) at (6,3) [circle,fill,label=-45:$\widetilde{x}_{n+1}$] {};
      \node (x123) at (4,4) [circle,fill,label=90:$\htilde{x}_{n}$] {};
      \node (x1123) at (7,4) [circle,fill,label=90:$\htilde{x}_{n+1}$] {};
      \node (y23) at (10,4) [circle,fill,label=90:$\htilde{x}$] {};
      \node (y2) at (10,1) [circle,fill,label=-90:$\widehat{x}$] {};
      \node (y3) at (9,3) [circle,fill,label=-90:$\widetilde{x}$] {};
      \draw (x2) to (x12);
      \draw (x2) to (x23);
      \draw (x3) to (x13);
      \draw (x3) to (x23);
      \draw (x12) to (x112);
      \draw [ultra thick] (x12) to (x123);
      \draw (x13) to (x113);
      \draw [ultra thick] (x13) to (x123);
      \draw (x23) to (x123);
      \draw [ultra thick] (x112) to (x123);
      \draw (x112) to (x1123);
      \draw [ultra thick] (x113) to (x123);
      \draw (x113) to (x1123);
      \draw (x123) to (x1123);
      \draw [ultra thick] (y2) to (y23) to (y3);
      \node (x113) at (6,3) [circle,fill,label={-45,fill=white}:$\widetilde{x}_{k+1}$] {};
   \end{tikzpicture}
   }
   \caption{2D~corner equations for the system of $F_i$ and $F_j$: equations $(E)$ and $(E_{ij})$ are 3D~corner equations of the corresponding discrete 2-form, while equations $(E_i)$ and $(E_j)$ are sums of two 3D~corner equations coming from two 3D~corners with one common face.}
   \label{fig: 2d for Fi, Fj}
\end{figure}
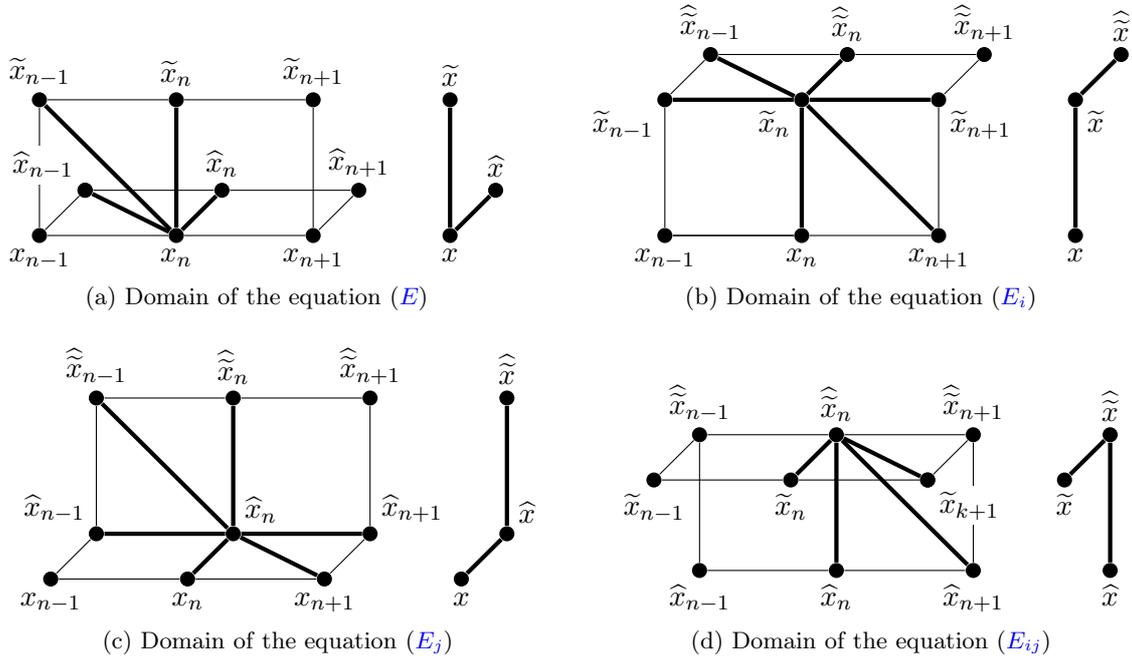
A visualization of the 2D~corner equations embedded in $\Z^{3}$ is given in Figure~\ref{fig: 2d for Fi, Fj}. Discrete curves in the multi-time plane $\Z^{2}$ are in a one-to-one correspondence with cylindrical surfaces in $\Z^3$, via the projection along the first coordinate direction of $\Z^3$. For 2D corners, this is illustrated in Figure~\ref{fig: 2d for Fi, Fj}.

Consistency of the above system of 2D~corner equations is proven with the help of the following statement.
\begin{theo}\label{th: rtl+}
Suppose that the fields $x$, $\wx$, and $\whx$ satisfy 2D~corner equations \eqref{eq: rtl E}. Define the fields $\widehat{\wx}$ by any of the following four formulae, which are equivalent by virtue of \eqref{eq: rtl E}:
\begin{align}
\label{eq: rtl S1}\tag{$S1$}
& \psi_j(\widehat{\wx}_n-\wx_n)+\phi_{ij}(\whx_n-\wx_n)=
\psi_0(\wx_{n+1}-\wx_n)+\phi_{i0}(x_{n+1}-\wx_n), \\
\label{eq: rtl S2}\tag{$S2$}
& \psi_i(\widehat{\wx}_n-\whx_n)+\phi_{ji}(\wx_n-\whx_n)=
\psi_0(\whx_{n+1}-\whx_n)+\phi_{j0}(x_{n+1}-\whx_n), \\
\label{eq: rtl S3}\tag{$S3$}
& \psi_i(\wx_{n+1}-x_{n+1})+\phi_{ji}(\wx_{n+1}-\whx_{n+1})=
\psi_0(\wx_{n+1}-\wx_n)+\phi_{j0}(\wx_{n+1}-\widehat{\wx}_n), \\
\label{eq: rtl S4}\tag{$S4$}
& \psi_j(\whx_{n+1}-x_{n+1})+\phi_{ij}(\whx_{n+1}-\wx_{n+1})=
\psi_0(\whx_{n+1}-\whx_n)+\phi_{i0}(\whx_{n+1}-\widehat{\wx}_n),
\end{align}
called superposition formulae (note that each one of these formulae is local with respect to $\widehat{\wx}$). Then the 2D~corner equations \eqref{eq: rtl E1}, \eqref{eq: rtl E2}, and \eqref{eq: rtl E12} are satisfied, as well.
\end{theo}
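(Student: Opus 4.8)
The plan is to exploit the one-dimensional reduction of the three-dimensional corner structure. The four formulae $(S1)$--$(S4)$ are, up to the projection along the $0$\textsuperscript{th} coordinate axis, exactly the remaining corner equations of the underlying two-dimensional pluri-Lagrangian system built from the three-point $2$-form $\cL$ of \eqref{eq: 3point}; concretely, $(S1)$ and $(S2)$ are the corner equations $({\mathcal E}_i)$ and $({\mathcal E}_j)$ evaluated at the appropriate vertex of the cube, while $(S3)$ and $(S4)$ are shifted versions of $({\mathcal E}_{ij})$-type equations. Hence the first step is to check that, once $x,\wx,\whx$ satisfy $(E)$, the right-hand sides of $(S1)$--$(S4)$ are consistent, i.e. the four prescriptions for $\widehat{\wx}_n$ coincide. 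I would do this by direct substitution: subtract $(S1)$ from $(S2)$, use $\phi_{ij}(y)=\phi_{ji}(-y)$ from \eqref{eq: phi psi}, and reduce the difference to $(E)$ written at sites $n$; a similar manipulation relates $(S3)$ to $(S1)$ shifted by one in $n$, using only the definitions of $\psi$ and $\phi$. Since $(E)$ is assumed to hold at every site $n$, all four formulae define the same sequence $\widehat{\wx}$, which is local in the sense claimed.

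The second and main step is to deduce \eqref{eq: rtl E1}, \eqref{eq: rtl E2} and \eqref{eq: rtl E12} from $(E)$ together with the $(Si)$. For \eqref{eq: rtl E1}: start from its left-hand side $\psi_i(\wx_n-x_n)+\phi_{i0}(x_{n+1}-\wx_n)$; rewrite $\phi_{i0}(x_{n+1}-\wx_n)$ using $(S1)$ as $\psi_j(\widehat{\wx}_n-\wx_n)+\phi_{ij}(\whx_n-\wx_n)-\psi_0(\wx_{n+1}-\wx_n)$, so the left side becomes $\psi_i(\wx_n-x_n)+\psi_j(\widehat{\wx}_n-\wx_n)+\phi_{ij}(\whx_n-\wx_n)-\psi_0(\wx_{n+1}-\wx_n)$. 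On the right-hand side of \eqref{eq: rtl E1}, use $(E)$ at site $n$ in the form $\psi_i(\wx_n-x_n)+\phi_{i0}(x_n-\wx_{n-1})=\psi_j(\whx_n-x_n)+\phi_{j0}(x_n-\whx_{n-1})$, and use $(S1)$ at site $n-1$ to handle $\phi_{j0}(\wx_n-\widehat{\wx}_{n-1})$; after substituting and cancelling the $\psi_0$ telescoping terms against $\psi_0(\wx_n-\wx_{n-1})$, the identity reduces to the statement $\phi_{ij}(\whx_n-\wx_n)=-\phi_{ji}(\wx_n-\whx_n)$, which is the skew-symmetry of $\Lambda_{ij}$. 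Equation \eqref{eq: rtl E2} follows by the same computation with $i\leftrightarrow j$ and $(S1)\leftrightarrow(S2)$. For \eqref{eq: rtl E12} I would add $(S3)$ and $(S4)$ written at site $n$ (equivalently, use $(S3)$ and $(S4)$ directly, which already contain $\widehat{\wx}_n$ on the right): $(S4)$ gives $\phi_{i0}(\whx_{n+1}-\widehat{\wx}_n)=\psi_j(\whx_{n+1}-x_{n+1})+\phi_{ij}(\whx_{n+1}-\wx_{n+1})-\psi_0(\whx_{n+1}-\whx_n)$, and an analogous expression from $(S3)$ for $\phi_{j0}(\wx_{n+1}-\widehat{\wx}_n)$; subtracting, the $\psi_0$ and $\psi_i,\psi_j$ "spectator" terms, together with $(E)$ at site $n+1$ and skew-symmetry $\phi_{ij}=\phi_{ji}(-\,\cdot\,)$, collapse the difference of the two sides of \eqref{eq: rtl E12} to zero.

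The main obstacle I anticipate is purely bookkeeping: keeping straight which formula is evaluated at which site ($n$ versus $n\pm1$) and making sure the $\psi_0$-terms telescope rather than proliferate. The cleanest route to control this is conceptual rather than computational: recognize \eqref{eq: rtl E}--\eqref{eq: rtl E12} as the image, under the projection along the $0$\textsuperscript{th} axis, of the octahedral system \eqref{eq: 3point Ei}--\eqref{eq: 3point Eij} of Figure~\ref{octahedron}, and invoke the assumed consistency of the three-point corner equations — then only two of \eqref{eq: rtl E}--\eqref{eq: rtl E12} are independent, and verifying that $(E)$ plus the $(Si)$ span the same solution set suffices. I would present the direct substitution proof for concreteness, but flag the structural interpretation, since it is the reason the superposition formulae are precisely the $3$D corner equations, exactly as announced in the introduction.
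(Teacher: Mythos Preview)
Your overall strategy coincides with the paper's: identify the superposition formulae with 3D~corner equations of the three-point 2-form $\cL$ on the cube spanned by directions $0,i,j$, then invoke the assumed consistency (rank~2) of that system. However, your precise bookkeeping is off in a few places, and this matters because the clean proof depends on getting the identification exactly right.

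Concretely: under the dictionary $x_n\simeq x$, $x_{n+1}\simeq x_0$, $\wx_n\simeq x_i$, $\whx_n\simeq x_j$, $\widehat{\wx}_n\simeq x_{ij}$, the six 3D~corner equations on the cube are $(\mathcal{E}_0)=\,$upshifted~\eqref{eq: rtl E}, $(\mathcal{E}_i)=(S1)$, $(\mathcal{E}_j)=(S2)$, $(\mathcal{E}_{i0})=(S3)$, $(\mathcal{E}_{j0})=(S4)$, and $(\mathcal{E}_{ij})=$~\eqref{eq: rtl E12}. So $(S3),(S4)$ are not ``shifted $(\mathcal{E}_{ij})$-type'' equations; and, crucially, \eqref{eq: rtl E12} is itself a 3D~corner equation and hence follows \emph{directly} from consistency once \eqref{eq: rtl E} and one $(Si)$ hold---no combination with \eqref{eq: rtl E} at $n+1$ is needed. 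By contrast, \eqref{eq: rtl E1} and \eqref{eq: rtl E2} are \emph{not} 3D~corner equations: they are the differences $(S1)-T_0^{-1}(S3)$ and $(S2)-T_0^{-1}(S4)$ respectively (the $\phi_{ij}=\phi_{ji}(-\cdot)$ terms cancel pairwise). Your proposed derivation of \eqref{eq: rtl E1} using ``$(S1)$ at site $n-1$'' cannot work, because $(S1)$ contains no $\phi_{j0}$ term at any site; the term $\phi_{j0}(\wx_n-\widehat{\wx}_{n-1})$ you need comes from the downshifted $(S3)$.

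In short: your conceptual route is the paper's route, but the specific linear combinations you wrote down are wrong. Fix the identification as above and the proof becomes a two-line observation rather than a computation.
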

\begin{proof}
Consider the sublattice $\Z^3$ spanned by the coordinate directions 0 (indexed by the the letter $n$), $i$, corresponding to the map $F_i$ (shift in this direction being denoted by $\widetilde{\hspace{5pt}}$\ ), and $j$, corresponding to the map $F_{\mu}$ (shift in this direction being denoted by $\widehat{\hspace{5pt}}$\ ). See Figure \ref{fig: 3d for Fi, Fj}.
\begin{figure}[htbp]
   \centering
   \begin{tikzpicture}[auto,scale=0.7,inner sep=2,>=stealth']
      \node (x) at (0,0) [circle,fill,label=-135:{$x_{n}\simeq x$}] {};
      \node (x1) at (3,0) [circle,fill,label=-45:{$x_{n+1}\simeq x_0$}] {};
      \node (x2) at (1,1) [circle,fill,label=135:{$\whx_{n}\simeq x_j$}] {};
      \node (x3) at (0,3) [circle,fill,label=-135:{$\wx_{n}\simeq x_i$}] {};
      \node (x12) at (4,1) [circle,fill,label=45:{$\whx_{n+1}\simeq x_{j0}$}] {};
      \node (x23) at (1,4) [circle,fill,label=135:{$\widehat{\wx}_{n}\simeq x_{ij}$}] {};
      \node (x13) at (3,3) [circle,fill,label=-45:{$\wx_{n+1}\simeq x_{i0}$}] {};
      \node (x123) at (4,4) [circle,fill,label=45:{$\widehat{\wx}_{n+1}\simeq x_{ij0}$}] {};
      \draw (x) to (x1) to (x13) to (x3) to (x);
      \draw (x1) to (x12) to (x123) to (x23) to (x3);
      \draw (x13) to (x123);
      \draw [dashed] (x) to (x2) to (x12);
      \draw [dashed] (x2) to (x23);
      \node (x13) at (3,3) [circle,fill,label={-45,fill=white:{$\wx_{n+1}\simeq x_{i0}$}}] {};
      \node (x2) at (1,1) [circle,fill,label={135,fill=white:{$\whx_{n}\simeq x_j$}}] {};
      \draw [ultra thick,<->] (10,1) to node {$F_{j}$} (9,0) to node {$F_{i}$} (9,3);
   \end{tikzpicture}
   \caption{Identification of fields: the $0$\textsuperscript{th} coordinate direction with the space-direction, the $i$\textsuperscript{th} coordinate direction with the time-direction of the map $F_{i}$ and the $j$\textsuperscript{th} direction with the time-direction of the map $F_{j}$.}
   \label{fig: 3d for Fi, Fj}
\end{figure}
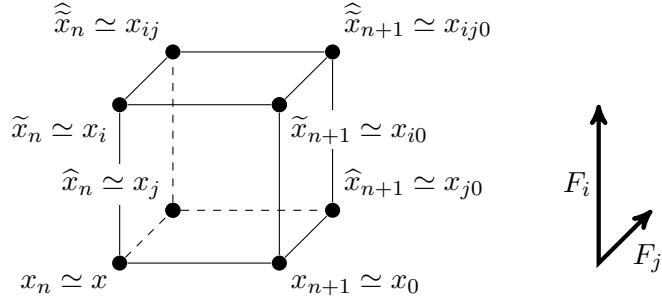

One easily checks that, upon the identifications as on Figure~\ref{fig: 3d for Fi, Fj}, the two corner equations \eqref{eq: rtl E}, \eqref{eq: rtl E12} and the four superposition formulae \eqref{eq: rtl S1}--\eqref{eq: rtl S4} build nothing but the system of 3D~corner equations \eqref{eq: 3point Ei}, \eqref{eq: 3point Eij}. Due to consistency of the latter system, as formulated in Theorem \ref{th: rtl+}, if equation \eqref{eq: rtl E} and one of equations \eqref{eq: rtl S1}--\eqref{eq: rtl S4} hold, then  equation \eqref{eq: rtl E12} and the remaining three of equations \eqref{eq: rtl S1}--\eqref{eq: rtl S4} are satisfied, as well. Furthermore, equation \eqref{eq: rtl E1} is the difference of \eqref{eq: rtl S1} and the downshifted version of \eqref{eq: rtl S3}, while equation \eqref{eq: rtl E2} is the difference of \eqref{eq: rtl S2} and the downshifted version of \eqref{eq: rtl S4}. This completes the proof.
\end{proof}

This theorem provides us with an exhaustive understanding of commutativity of double-valued B\"acklund transformations in the periodic case:
\begin{itemize}
\item In the Lagrangian picture, suppose that we are given fields $x$, $\wx$, $\whx$ satisfying the 2D~corner equation \eqref{eq: rtl E}. Each of equations \eqref{eq: rtl E1}, \eqref{eq: rtl E2} produces two values for $\widehat{\wx}$. Consistency is reflected in the following fact: one of the values for  $\widehat{\wx}$ obtained from \eqref{eq: rtl E1} coincides with one of the values for  $\widehat{\wx}$ obtained from \eqref{eq: rtl E2}. Indeed, this common value is nothing but  $\widehat{\wx}$ obtained from the superposition formulae \eqref{eq: rtl S1}, \eqref{eq: rtl S2}, \eqref{eq: rtl S3} or \eqref{eq: rtl S4}, as in Theorem~\ref{th: rtl+}.
\item In the symplectic maps picture, each of the compositions $F_i\circ F_j$ and $F_j\circ F_i$ applied to a point $(x,p)$ produces four different branches for $(\widehat{\wx},\widehat{\wip})$. Commutativity is reflected in the following fact: each of the branches of $F_i\circ F_j$ coincides with one of the branches of $F_j\circ F_i$. Indeed, Theorem~\ref{th: rtl+} delivers four possible values for $(\wx,\whx,\widehat{\wx})$ satisfying all 2D~corner equations \eqref{eq: rtl E}--\eqref{eq: rtl E12}, namely one $\widehat{\wx}$ for each of the four possible combinations of $\left(\wx,\whx\right)$.
\end{itemize}
The reader is referred to \cite{nonrel} for a graphical illustration and more details.

\subsection{Proof of commutativity of the maps \texorpdfstring{$G_k$, $G_\ell$}{G\_k, G\_l}}
\label{sect: rtl--}

The 2D~corner equations for the pluri-Lagrangian system corresponding to the two maps $G_k$ and $G_\ell$ read:
\begin{align}
\label{eq: rtl- E}\tag{$E$}
&\begin{aligned} \phi_{k0}(\wx_n-x_n)+\psi_k(x_n-\wx_{n-1})=
  \phi_{\ell 0}(\whx_n-x_n)+\psi_\ell(x_n-\whx_{n-1}),
\end{aligned}\\\label{eq: rtl- E1}\tag{$E_k$}
&\begin{aligned}
&\phi_{k0}(\wx_n-x_n)+\psi_k(x_{n+1}-\wx_n)-\psi_0(\wx_{n+1}-\wx_n)+\psi_0(\wx_n-\wx_{n-1})\\
&\hspace{7cm}=\phi_{\ell 0}(\widehat{\wx}_n-\wx_n)+\psi_\ell(\wx_n-\widehat{\wx}_{n-1}),
\end{aligned}\\\label{eq: rtl- E2}\tag{$E_\ell$}
&\begin{aligned}
&\phi_{\ell 0}(\whx_n-x_n)+\psi_\ell(x_{n+1}-\whx_n)-\psi_0(\whx_{n+1}-\whx_n)+\psi_0(\whx_n-\whx_{n-1})\\
&\hspace{7cm}=\phi_{k0}(\widehat{\wx}_n-\whx_n)+\psi_k(\whx_n-\widehat{\wx}_{n-1}),
\end{aligned}\\\label{eq: rtl- E12}\tag{$E_{k\ell}$}
&\begin{aligned}
\phi_{k0}(\widehat{\wx}_n-\whx_n)+\psi_k(\whx_{n+1}-\widehat{\wx}_n)=
  \phi_{\ell 0}(\widehat{\wx}_n-\wx_n)+\psi_\ell(\wx_{n+1}-\widehat{\wx}_n).
\end{aligned}
\end{align}
A visualization of the 2D~corner equations embedded in $\Z^{3}$ is given in Figure~\ref{fig: 2d for Gk, Gl}.
\begin{figure}[htb]
   \centering
      \subfloat[Domain of the equation \eqref{eq: rtl- E}]{\label{fig:7a}
   \begin{tikzpicture}[scale=0.6,inner sep=2]
      \node (x2) at (1,1) [circle,fill,label=-90:$\widetilde{x}_{n-2}$] {};
      \node (x3) at (0,3) [circle,fill,label=-90:$\widehat{x}_{n-2}$] {};
      \node (x12) at (4,1) [circle,fill,label=-90:$\widetilde{x}_{n-1}$] {};
      \node (x13) at (3,3) [circle,fill,label=-90:$\widehat{x}_{n-1}$] {};
      \node (x23) at (1,4) [circle,fill,label=90:$x_{n-1}$] {};
      \node (x112) at (7,1) [circle,fill,label=-90:$\widetilde{x}_{n}$] {};
      \node (x113) at (6,3) [circle,fill,label=-45:$\widehat{x}_{n}$] {};
      \node (x123) at (4,4) [circle,fill,label=90:$x_{n}$] {};
      \node (x1123) at (7,4) [circle,fill,label=90:$x_{n+1}$] {};
      \node (y2) at (10,1) [circle,fill,label=-90:$\widetilde{x}$] {};
      \node (y3) at (9,3) [circle,fill,label=-90:$\widehat{x}$] {};
      \node (y23) at (10,4) [circle,fill,label=90:$x$] {};
      \draw (x2) to (x12);
      \draw (x2) to (x23);
      \draw (x3) to (x13);
      \draw (x3) to (x23);
      \draw (x12) to (x112);
      \draw [ultra thick] (x12) to (x123);
      \draw (x13) to (x113);
      \draw [ultra thick] (x13) to (x123);
      \draw (x23) to (x123);
      \draw [ultra thick] (x112) to (x123);
      \draw (x112) to (x1123);
      \draw [ultra thick] (x113) to (x123);
      \draw (x113) to (x1123);
      \draw (x123) to (x1123);
      \draw [ultra thick] (y2) to (y23) to (y3);
   \end{tikzpicture}
   }\qquad
   \subfloat[Domain of the equation \eqref{eq: rtl- E1}]{\label{fig:7b}
   \begin{tikzpicture}[scale=0.6,inner sep=2]
      \node (x) at (0,0) [circle,fill,label=-90:$\htilde{x}_{n-2}$] {};
      \node (x1) at (3,0) [circle,fill,label=-90:$\htilde{x}_{n-1}$] {};
      \node (x2) at (1,1) [circle,fill,label=135:$\widetilde{x}_{n-1}$] {};
      \node (x11) at (6,0) [circle,fill,label=-90:$\htilde{x}_{n}$] {};
      \node (x12) at (4,1) [circle,fill,label=45:$\widetilde{x}_{n}$] {};
      \node (x23) at (1,4) [circle,fill,label=90:$x_{n}$] {};
      \node (x112) at (7,1) [circle,fill,label=45:$\widetilde{x}_{n+1}$] {};
      \node (x123) at (4,4) [circle,fill,label=90:$x_{n+1}$] {};
      \node (x1123) at (7,4) [circle,fill,label=90:$x_{n+2}$] {};
      \node (y2) at (10,1) [circle,fill,label=45:$\widetilde{x}$] {};
      \node (y) at (9,0) [circle,fill,label=-90:$\htilde{x}$] {};
      \node (y23) at (10,4) [circle,fill,label=90:$x$] {};
      \draw (x) to (x1);
      \draw (x) to (x2);
      \draw (x1) to (x11);
      \draw [ultra thick] (x1) to (x12);
      \draw [ultra thick] (x2) to (x12);
      \draw (x2) to (x23);
      \draw [ultra thick] (x11) to (x12);
      \draw (x11) to (x112);
      \draw [ultra thick] (x12) to (x23);
      \draw [ultra thick] (x12) to (x112);
      \draw [ultra thick] (x12) to (x123);
      \draw (x23) to (x123);
      \draw (x112) to (x1123);
      \draw (x123) to (x1123);
      \draw [ultra thick] (y) to (y2) to (y23);
   \end{tikzpicture}
   }\\
   \subfloat[Domain of the equation \eqref{eq: rtl- E2}]{\label{fig:7c}
   \begin{tikzpicture}[scale=0.6,inner sep=2]
      \node (x) at (0,0) [circle,fill,label=-90:$\htilde{x}_{n-2}$] {};
      \node (x1) at (3,0) [circle,fill,label=-90:$\htilde{x}_{n-1}$] {};
      \node (x3) at (0,3) [circle,fill,label=-135:$\widehat{x}_{n-1}$] {};
      \node (x11) at (6,0) [circle,fill,label=-90:$\htilde{x}_{n}$] {};
      \node (x13) at (3,3) [circle,fill,label=-135:$\widehat{x}_{n}$] {};
      \node (x23) at (1,4) [circle,fill,label=90:$x_{n}$] {};
      \node (x113) at (6,3) [circle,fill,label=-45:$\widehat{x}_{n+1}$] {};
      \node (x123) at (4,4) [circle,fill,label=90:$x_{n+1}$] {};
      \node (x1123) at (7,4) [circle,fill,label=90:$x_{n+2}$] {};
      \node (y) at (9,0) [circle,fill,label=-90:$\htilde{x}$] {};
      \node (y3) at (9,3) [circle,fill,label=-45:$\widehat{x}$] {};
      \node (y23) at (10,4) [circle,fill,label=90:$x$] {};
      \draw (x) to (x1);
      \draw (x) to (x3);
      \draw (x1) to (x11);
      \draw [ultra thick] (x1) to (x13);
      \draw [ultra thick] (x3) to (x13);
      \draw (x3) to (x23);
      \draw [ultra thick] (x11) to (x13);
      \draw (x11) to (x113);
      \draw [ultra thick] (x13) to (x23);
      \draw [ultra thick] (x13) to (x113);
      \draw [ultra thick] (x13) to (x123);
      \draw (x23) to (x123);
      \draw (x113) to (x1123);
      \draw (x123) to (x1123);
      \draw [ultra thick] (y) to (y3) to (y23);
   \end{tikzpicture}
   }\qquad
   \subfloat[Domain of the equation \eqref{eq: rtl- E12}]{\label{fig:7d}
   \begin{tikzpicture}[scale=0.6,inner sep=2]
      \node (x) at (0,0) [circle,fill,label=-90:$\htilde{x}_{n-1}$] {};
      \node (x1) at (3,0) [circle,fill,label=-90:$\htilde{x}_{n}$] {};
      \node (x2) at (1,1) [circle,fill,label=90:$\widetilde{x}_{n}$] {};
      \node (x3) at (0,3) [circle,fill,label=90:$\widehat{x}_{n}$] {};
      \node (x11) at (6,0) [circle,fill,label=-90:$\htilde{x}_{n+1}$] {};
      \node (x12) at (4,1) [circle,fill,label=90:$\widetilde{x}_{n+1}$] {};
      \node (x13) at (3,3) [circle,fill,label=90:$\widehat{x}_{n+1}$] {};
      \node (x112) at (7,1) [circle,fill,label=90:$\widetilde{x}_{n+2}$] {};
      \node (x113) at (6,3) [circle,fill,label=90:$\widehat{x}_{n+2}$] {};
      \node (y2) at (10,1) [circle,fill,label=90:$\widetilde{x}$] {};
      \node (y3) at (9,3) [circle,fill,label=90:$\widehat{x}$] {};
      \node (y) at (9,0) [circle,fill,label=-90:$\htilde{x}$] {};
      \draw (x) to (x1);
      \draw (x) to (x2);
      \draw (x) to (x3);
      \draw [ultra thick] (x1) to (x2);
      \draw [ultra thick] (x1) to (x3);
      \draw (x1) to (x11);
      \draw [ultra thick] (x1) to (x12);
      \draw [ultra thick] (x1) to (x13);
      \draw (x2) to (x12);
      \draw (x3) to (x13);
      \draw (x11) to (x112);
      \draw (x11) to (x113);
      \draw (x12) to (x112);
      \draw (x13) to (x113);
      \draw [ultra thick] (y2) to (y) to (y3);
   \end{tikzpicture}
   }
   \caption{2D~corner equations for commutativity of $G_k$ and $G_\ell$: 2D~corner equations \eqref{eq: rtl- E}, \eqref{eq: rtl- E12} are 3D~corner equations of the corresponding discrete 2-form, while each of the 2D~corner equations \eqref{eq: rtl- E1}, \eqref{eq: rtl- E2} is a sum of two 3D~corner equations sharing one common face.}
   \label{fig: 2d for Gk, Gl}
\end{figure}

\begin{theo}\label{th: rtl-}
Suppose that the fields $x$, $\wx$, and $\whx$ satisfy 2D~corner equations \eqref{eq: rtl- E}. Define the fields $\widehat{\wx}$ by any of the following four formulae, which are equivalent by virtue of \eqref{eq: rtl- E}:
\begin{align}
\label{eq: rtl- S1}\tag{$S1$}
& \psi_k(x_{n+1}-\wx_n)+\phi_{\ell k}(\whx_n-\wx_n)=\psi_0(\wx_{n+1}-\wx_n)+\phi_{\ell 0}(\widehat{\wx}_n-\wx_n), \\
\label{eq: rtl- S2}\tag{$S2$}
& \psi_\ell(x_{n+1}-\whx_n)+\phi_{k\ell}(\wx_n-\whx_n)=\psi_0(\whx_{n+1}-\whx_n)+\phi_{k0}(\widehat{\wx}_n-\whx_n), \\
\label{eq: rtl- S4}\tag{$S3$}
& \psi_k(\whx_n-\widehat{\wx}_{n-1})+\phi_{\ell k}(\whx_n-\wx_n)=\psi_0(\whx_n-\whx_{n-1})+\phi_{\ell 0}(\whx_n-x_n), \\
\label{eq: rtl- S3}\tag{$S4$}
& \psi_\ell(\wx_n-\widehat{\wx}_{n-1})+\phi_{k\ell}(\wx_n-\whx_n)=\psi_0(\wx_n-\wx_{n-1})+\phi_{k0}(\wx_n-x_n).
\end{align}
called superposition formulae. Then 2D~corner equations \eqref{eq: rtl- E1}--\eqref{eq: rtl- E12} are satisfied, as well.
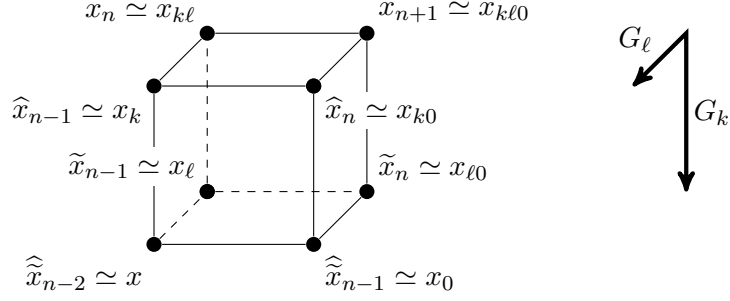
\begin{figure}[htb]
   \centering
   \begin{tikzpicture}[auto,scale=0.7,inner sep=2,>=stealth']
      \node (x) at (0,0) [circle,fill,label=-135:{$\htilde{x}_{n-2}\simeq x$}] {};
      \node (x1) at (3,0) [circle,fill,label=-45:{$\htilde{x}_{n-1}\simeq x_{0}$}] {};
      \node (x2) at (1,1) [circle,fill,label=135:{$\widetilde{x}_{n-1}\simeq x_{\ell}$}] {};
      \node (x3) at (0,3) [circle,fill,label=-135:{$\widehat{x}_{n-1}\simeq x_{k}$}] {};
      \node (x12) at (4,1) [circle,fill,label=45:{$\widetilde{x}_{n}\simeq x_{\ell0}$}] {};
      \node (x23) at (1,4) [circle,fill,label=135:{$x_{n}\simeq x_{k\ell}$}] {};
      \node (x13) at (3,3) [circle,fill,label=-45:{$\widehat{x}_{n}\simeq x_{k0}$}] {};
      \node (x123) at (4,4) [circle,fill,label=45:{$x_{n+1}\simeq x_{k\ell0}$}] {};
      \draw (x) to (x1) to (x13) to (x3) to (x);
      \draw (x1) to (x12) to (x123) to (x23) to (x3);
      \draw (x13) to (x123);
      \draw [dashed] (x) to (x2) to (x12);
      \draw [dashed] (x2) to (x23);
      \node (x13) at (3,3) [circle,fill,label={-45,fill=white:{$\widehat{x}_{n}\simeq x_{k0}$}}] {};
      \node (x2) at (1,1) [circle,fill,label={135,fill=white:{$\widetilde{x}_{n-1}\simeq x_{\ell}$}}] {};
      \draw [ultra thick,<->] (9,3) to node {$G_{\ell}$} (10,4) to node {$G_{k}$} (10,1);
   \end{tikzpicture}
   \caption{Identification of fields: the 0\textsuperscript{th} coordinate direction enumerates the lattice cites, the coordinate directions $k$, $\ell$ correspond to the maps $G_k$, $G_\ell$.}
   \label{fig: 3d for Gk, Gl}
\end{figure}
\begin{proof}
We identify the fields as on Figure~\ref{fig: 3d for Gk, Gl}. Then equations \eqref{eq: rtl- E}, \eqref{eq: rtl- E12} and \eqref{eq: rtl- S1}--\eqref{eq: rtl- S4} build the system of consistent 3D~corner equations \eqref{eq: 3point Ei}, \eqref{eq: 3point Eij}. More precisely, the correspondence is as follows:
\begin{itemize}
\item The downshifted version of \eqref{eq: rtl- E} is $({\mathcal E}_{k\ell})$.
\item The downshifted version of \eqref{eq: rtl- E12} is $({\mathcal E}_0)$.
\item The downshifted versions of \eqref{eq: rtl- S1} and \eqref{eq: rtl- S2} are $({\mathcal E}_\ell)$ and $({\mathcal E}_k)$, respectively.
\item Equations \eqref{eq: rtl- S3} and \eqref{eq: rtl- S4} are $({\mathcal E}_{0\ell})$ and $({\mathcal E}_{0k})$, respectively.
\end{itemize}
Since the system of 3D~corner equations is consistent and, therefore, has rank 2, the following argumentation works: if \eqref{eq: rtl- E} and one of equations \eqref{eq: rtl- S1}--\eqref{eq: rtl- S4} are satisfied, then equation \eqref{eq: rtl- E12} and the remaining three equations of \eqref{eq: rtl- S1}--\eqref{eq: rtl- S4} are fulfilled, as well. Furthermore, equation \eqref{eq: rtl- E1} is a difference of \eqref{eq: rtl- S1} and \eqref{eq: rtl- S4}, and equation \eqref{eq: rtl- E2} is a difference of \eqref{eq: rtl- S2} and \eqref{eq: rtl- S3}. This completes the proof.
\end{proof}
\end{theo}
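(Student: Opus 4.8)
The plan is to mirror, \emph{mutatis mutandis}, the argument used for $F_i$, $F_j$ in Theorem~\ref{th: rtl+}: I will embed the four 2D~corner equations \eqref{eq: rtl- E}--\eqref{eq: rtl- E12} together with the four putative superposition formulae \eqref{eq: rtl- S1}--\eqref{eq: rtl- S4} into the octahedron of 3D~corner equations \eqref{eq: 3point Ei}, \eqref{eq: 3point Eij} of the 3-point 2-form $\cL$, and then invoke the standing consistency hypothesis. The essential preliminary step is to fix the correct identification of the fields on the $\Z^3$ sublattice spanned by the $0$\textsuperscript{th} coordinate direction (enumerating the lattice sites, index $n$) and the two coordinate directions $k$, $\ell$ carrying the maps $G_k$, $G_\ell$. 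Since $G_k$ is attached to the \emph{negatively} directed edges of the $k$\textsuperscript{th} direction (cf.\ Definition~\ref{def: Gk} and Figure~\ref{fig: action for G}), the relevant octahedron is the one displayed in Figure~\ref{fig: 3d for Gk, Gl}; it straddles lattice levels $n-2,\dots,n+1$, so that some of the equations will have to be downshifted in $n$ to match a 3D~corner equation, while others match directly.

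With this dictionary one verifies, by substituting the identifications $\htilde{x}_{n-2}\simeq x$, $\htilde{x}_{n-1}\simeq x_0$, $\wx_{n-1}\simeq x_\ell$, $\whx_{n-1}\simeq x_k$, and so on, that the six nontrivial corner equations of the octahedron (recall that $S^{ijk}$ for a 3-point form depends on neither $x$ nor $x_{ijk}$) are precisely the following: the downshift of \eqref{eq: rtl- E} is $(\mathcal{E}_{k\ell})$; the downshift of \eqref{eq: rtl- E12} is $(\mathcal{E}_0)$; the downshifts of \eqref{eq: rtl- S1} and \eqref{eq: rtl- S2} are $(\mathcal{E}_\ell)$ and $(\mathcal{E}_k)$; and \eqref{eq: rtl- S3}, \eqref{eq: rtl- S4} are $(\mathcal{E}_{0\ell})$ and $(\mathcal{E}_{0k})$. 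Carrying out this substitution is routine but it is the step most prone to sign and shift slips, so I would do it carefully, using the skew-symmetry $\phi_{ij}(x)=\phi_{ji}(-x)$ (coming from $\Lambda_{ij}(x)=-\Lambda_{ji}(-x)$) to reconcile $\phi_{k\ell}$ with $\phi_{\ell k}$ in the formulae \eqref{eq: rtl- S1}--\eqref{eq: rtl- S4}.

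Granting the dictionary, the conclusion is immediate. By hypothesis the system of 3D~corner equations for $\cL$ is consistent, i.e.\ has rank $2$ (Definition~\ref{def: corner eqs consist}); hence, given \eqref{eq: rtl- E} and \emph{any one} of \eqref{eq: rtl- S1}--\eqref{eq: rtl- S4}, the equation \eqref{eq: rtl- E12} and the remaining three of \eqref{eq: rtl- S1}--\eqref{eq: rtl- S4} follow automatically. This proves simultaneously that the four formulae \eqref{eq: rtl- S1}--\eqref{eq: rtl- S4} are equivalent modulo \eqref{eq: rtl- E} (so that $\widehat{\wx}$ is well defined), and that \eqref{eq: rtl- E12} holds. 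It remains to recover \eqref{eq: rtl- E1} and \eqref{eq: rtl- E2}; these are not 3D~corner equations of $\cL$ but, exactly as in Theorem~\ref{th: rtl+}, each is a sum of two 3D~corner equations sharing a common face (compare Figure~\ref{fig: 2d for Gk, Gl}). Concretely, \eqref{eq: rtl- E1} is the difference of \eqref{eq: rtl- S1} and \eqref{eq: rtl- S4}, while \eqref{eq: rtl- E2} is the difference of \eqref{eq: rtl- S2} and \eqref{eq: rtl- S3}; since all of \eqref{eq: rtl- S1}--\eqref{eq: rtl- S4} have just been established, both follow, and the proof is complete.

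I expect the only genuine difficulty to be the combinatorial/bookkeeping one flagged above: correctly pinning down the field identification and the downshift pattern that makes the six-equation list coincide with $(\mathcal{E}_0),(\mathcal{E}_k),(\mathcal{E}_\ell),(\mathcal{E}_{0k}),(\mathcal{E}_{0\ell}),(\mathcal{E}_{k\ell})$. Once this is in place, the equivalence of the superposition formulae and the derivation of \eqref{eq: rtl- E1}, \eqref{eq: rtl- E2} are purely formal consequences of the rank-$2$ consistency assumption, just as in the $F_i$, $F_j$ case.
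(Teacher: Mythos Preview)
Your proposal is correct and follows essentially the same route as the paper's own proof: the identical field identification on the cube of Figure~\ref{fig: 3d for Gk, Gl}, the same dictionary matching (downshifted) \eqref{eq: rtl- E}, \eqref{eq: rtl- E12}, \eqref{eq: rtl- S1}--\eqref{eq: rtl- S4} with the six 3D~corner equations $(\mathcal{E}_{k\ell})$, $(\mathcal{E}_0)$, $(\mathcal{E}_\ell)$, $(\mathcal{E}_k)$, $(\mathcal{E}_{0\ell})$, $(\mathcal{E}_{0k})$, the rank-$2$ consistency argument, and the recovery of \eqref{eq: rtl- E1}, \eqref{eq: rtl- E2} as the differences $(S1)-(S4)$ and $(S2)-(S3)$. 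Your additional remark on using the skew-symmetry $\phi_{ij}(x)=\phi_{ji}(-x)$ to reconcile $\phi_{k\ell}$ with $\phi_{\ell k}$ is a helpful practical note but not a deviation in strategy.
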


\subsection{Proof of commutativity of the maps \texorpdfstring{$F_i$, $G_\ell$}{F\_i, G\_l}}
\label{sect: rtl+-}

The 2D~corner equations for the pluri-Lagrangian system corresponding to the maps $F_i$ and $G_\ell$ (whose actions are encoded by $\widetilde{\hspace{5pt}}$ and $\widehat{\hspace{5pt}}$, respectively) are given by:
\begin{align}
\label{eq: rtl+- E}\tag{$E$}
& \begin{aligned}
&\psi_i(\wx_n-x_n)+\phi_{i0}(x_n-\wx_{n-1})-\psi_0(x_{n+1}-x_n)+\psi_0(x_n-x_{n-1})\\
&\hspace{7cm} =\phi_{\ell 0}(\whx_n-x_n)+\psi_\ell(x_n-\whx_{n-1}),
\end{aligned}\\
\label{eq: rtl+- E2}\tag{$E_i$}
&\begin{aligned}
&\psi_i(\wx_n-x_n)+\phi_{i0}(x_{n+1}-\wx_n)=
\phi_{\ell 0}(\widehat{\wx}_n-\wx_n)+\psi_\ell(\wx_n-\widehat{\wx}_{n-1}),
\end{aligned}\\\label{eq: rtl+- E1}\tag{$E_\ell$}
&\begin{aligned}
\psi_i(\widehat{\wx}_n-\whx_n)+\phi_{i0}(\whx_n-\widehat{\wx}_{n-1})=
\phi_{\ell 0}(\whx_n-x_n)+\psi_\ell(x_{n+1}-\whx_n),
\end{aligned}\\\label{eq: rtl+- E12}\tag{$E_{i\ell}$}
&\begin{aligned}
&\psi_i(\widehat{\wx}_n-\whx_n)+\phi_{i0}(\whx_{n+1}-\widehat{\wx}_n)=\phi_{\ell 0}(\widehat{\wx}_n-\wx_n)+\psi_\ell(\wx_{n+1}-\widehat{\wx}_n)\\
&\hspace{7cm}-\psi_0(\widehat{\wx}_{n+1}-\widehat{\wx}_n)+\psi_0(\widehat{\wx}_n-\widehat{\wx}_{n-1}).
\end{aligned}
\end{align}

A visualization of the 2D~corner equations embedded in $\Z^{3}$ is given in Figure~\ref{fig:8}.
\begin{figure}[htb]
   \centering
   \subfloat[Domain of the equation \eqref{eq: rtl+- E}]{\label{fig:8a}
   \begin{tikzpicture}[scale=0.6,inner sep=2]
      \node (x) at (0,0) [circle,fill,label=-90:$\whx_{n-2}$] {};
      \node (x1) at (3,0) [circle,fill,label=-90:$\whx_{n-1}$] {};
      \node (x2) at (1,1) [circle,fill,label=135:$x_{n-1}$] {};
      \node (x11) at (6,0) [circle,fill,label=-90:$\whx_{n}$] {};
      \node (x12) at (4,1) [circle,fill,label=45:$x_{n}$] {};
      \node (x23) at (1,4) [circle,fill,label=90:$\wx_{n-1}$] {};
      \node (x112) at (7,1) [circle,fill,label=45:$x_{n+1}$] {};
      \node (x123) at (4,4) [circle,fill,label=90:$\wx_{n}$] {};
      \node (x1123) at (7,4) [circle,fill,label=90:$\wx_{n+1}$] {};
      \node (y) at (9,0) [circle,fill,label=-90:$\whx$] {};
      \node (y2) at (10,1) [circle,fill,label=45:$x$] {};
      \node (y23) at (10,4) [circle,fill,label=90:$\wx$] {};
      \draw (x) to (x1);
      \draw (x) to (x2);
      \draw (x1) to (x11);
      \draw [ultra thick] (x1) to (x12);
      \draw [ultra thick] (x2) to (x12);
      \draw (x2) to (x23);
      \draw [ultra thick] (x11) to (x12);
      \draw (x11) to (x112);
      \draw [ultra thick] (x12) to (x23);
      \draw [ultra thick] (x12) to (x112);
      \draw [ultra thick] (x12) to (x123);
      \draw (x23) to (x123);
      \draw (x112) to (x1123);
      \draw (x123) to (x1123);
      \draw [ultra thick] (y) to (y2) to (y23);
   \end{tikzpicture}
   }\qquad
   \subfloat[Domain of the equation \eqref{eq: rtl+- E2}]{\label{fig:8b}
   \begin{tikzpicture}[scale=0.6,inner sep=2]
      \node (x2) at (1,1) [circle,fill,label=-90:$x_{n-1}$] {};
      \node (x3) at (0,3) [circle,fill,label=-90:$\widehat{\wx}_{n-2}$] {};
      \node (x12) at (4,1) [circle,fill,label=-90:$x_{n}$] {};
      \node (x13) at (3,3) [circle,fill,label=-90:$\widehat{\wx}_{n-1}$] {};
      \node (x23) at (1,4) [circle,fill,label=90:$\wx_{n-1}$] {};
      \node (x112) at (7,1) [circle,fill,label=-90:$x_{n+1}$] {};
      \node (x113) at (6,3) [circle,fill,label=-45:$\widehat{\wx}_{n}$] {};
      \node (x123) at (4,4) [circle,fill,label=90:$\wx_{n}$] {};
      \node (x1123) at (7,4) [circle,fill,label=90:$\wx_{n+1}$] {};
      \node (y3) at (9,3) [circle,fill,label=-90:$\widehat{\wx}$] {};
      \node (y2) at (10,1) [circle,fill,label=-90:$x$] {};
      \node (y23) at (10,4) [circle,fill,label=90:$\wx$] {};
      \draw (x2) to (x12);
      \draw (x2) to (x23);
      \draw (x3) to (x13);
      \draw (x3) to (x23);
      \draw (x12) to (x112);
      \draw [ultra thick] (x12) to (x123);
      \draw (x13) to (x113);
      \draw [ultra thick] (x13) to (x123);
      \draw (x23) to (x123);
      \draw [ultra thick] (x112) to (x123);
      \draw (x112) to (x1123);
      \draw [ultra thick] (x113) to (x123);
      \draw (x113) to (x1123);
      \draw (x123) to (x1123);
      \draw [ultra thick] (y2) to (y23) to (y3);
   \end{tikzpicture}
   }\\
   \subfloat[Domain of the equation \eqref{eq: rtl+- E1}]{\label{fig:8c}
   \begin{tikzpicture}[scale=0.6,inner sep=2]
      \node (x) at (0,0) [circle,fill,label=-90:$\whx_{n-1}$] {};
      \node (x1) at (3,0) [circle,fill,label=-90:$\whx_{n}$] {};
      \node (x2) at (1,1) [circle,fill,label=90:$x_{n}$] {};
      \node (x3) at (0,3) [circle,fill,label=90:$\widehat{\wx}_{n-1}$] {};
      \node (x11) at (6,0) [circle,fill,label=-90:$\whx_{n+1}$] {};
      \node (x12) at (4,1) [circle,fill,label=90:$x_{n+1}$] {};
      \node (x13) at (3,3) [circle,fill,label=90:$\widehat{\wx}_{n}$] {};
      \node (x112) at (7,1) [circle,fill,label=90:$x_{n+2}$] {};
      \node (x113) at (6,3) [circle,fill,label=90:$\widehat{\wx}_{n+1}$] {};
      \node (y) at (9,0) [circle,fill,label=-90:$\whx$] {};
      \node (y2) at (10,1) [circle,fill,label=90:$x$] {};
      \node (y3) at (9,3) [circle,fill,label=90:$\widehat{\wx}$] {};
      \draw (x) to (x1);
      \draw (x) to (x2);
      \draw (x) to (x3);
      \draw [ultra thick] (x1) to (x2);
      \draw [ultra thick] (x1) to (x3);
      \draw (x1) to (x11);
      \draw [ultra thick] (x1) to (x12);
      \draw [ultra thick] (x1) to (x13);
      \draw (x2) to (x12);
      \draw (x3) to (x13);
      \draw (x11) to (x112);
      \draw (x11) to (x113);
      \draw (x12) to (x112);
      \draw (x13) to (x113);
      \draw [ultra thick] (y2) to (y) to (y3);
   \end{tikzpicture}
   }\qquad
   \subfloat[Domain of the equation \eqref{eq: rtl+- E12}]{\label{fig:8d}
   \begin{tikzpicture}[scale=0.6,inner sep=2]
      \node (x) at (0,0) [circle,fill,label=-90:$\whx_{n-1}$] {};
      \node (x1) at (3,0) [circle,fill,label=-90:$\whx_{n}$] {};
      \node (x3) at (0,3) [circle,fill,label=-135:$\widehat{\wx}_{n-1}$] {};
      \node (x11) at (6,0) [circle,fill,label=-90:$\whx_{n+1}$] {};
      \node (x13) at (3,3) [circle,fill,label=-135:$\widehat{\wx}_{n}$] {};
      \node (x23) at (1,4) [circle,fill,label=90:$\wx_{n}$] {};
      \node (x113) at (6,3) [circle,fill,label=-45:$\widehat{\wx}_{n+1}$] {};
      \node (x123) at (4,4) [circle,fill,label=90:$\wx_{n+1}$] {};
      \node (x1123) at (7,4) [circle,fill,label=90:$\wx_{n+2}$] {};
      \node (y) at (9,0) [circle,fill,label=-90:$\whx$] {};
      \node (y23) at (10,4) [circle,fill,label=90:$\wx$] {};
      \node (y3) at (9,3) [circle,fill,label=-45:$\widehat{\wx}$] {};
      \draw (x) to (x1);
      \draw (x) to (x3);
      \draw (x1) to (x11);
      \draw [ultra thick] (x1) to (x13);
      \draw [ultra thick] (x3) to (x13);
      \draw (x3) to (x23);
      \draw [ultra thick] (x11) to (x13);
      \draw (x11) to (x113);
      \draw [ultra thick] (x13) to (x23);
      \draw [ultra thick] (x13) to (x113);
      \draw [ultra thick] (x13) to (x123);
      \draw (x23) to (x123);
      \draw (x113) to (x1123);
      \draw (x123) to (x1123);
      \draw [ultra thick] (y) to (y3) to (y23);
   \end{tikzpicture}
   }
   \caption{2D~corner equations for the maps $F_i$ and $G_\ell$: 2D~corner equations \eqref{eq: rtl+- E2}, \eqref{eq: rtl+- E1} are 3D~corner equations of the corresponding discrete 2-form, while each one of the 2D~corner equations \eqref{eq: rtl+- E}, \eqref{eq: rtl+- E12} is a sum of two 3D~corner equations sharing one common face.}
   \label{fig:8}
\end{figure}

\begin{theo}\label{th: rtl+-}
Suppose that the fields $x$, $\wx$, and $\whx$ satisfy 2D~corner equations \eqref{eq: rtl+- E}. Define the fields $\widehat{\wx}$ by any of the following two formulae, which are equivalent by virtue of \eqref{eq: rtl+- E}:
\begin{align}
\label{eq: rtl+- S1}\tag{$S1$}
& \psi_i(\wx_n-x_n)+\phi_{i\ell}(x_n-\widehat{\wx}_{n-1})=\phi_{\ell 0}(\whx_n-x_n)+\psi_0(x_{n+1}-x_n), \\
\label{eq: rtl+- S2}\tag{$S2$}
& \psi_0(x_{n+1}-x_n)+\phi_{i0}(x_{n+1}-\wx_n)=
\psi_\ell(x_{n+1}-\whx_n)+\phi_{i\ell}(x_{n+1}-\widehat{\wx}_n),
\end{align}
called superposition formulae. Then the 2D~corner equations \eqref{eq: rtl+- E2}, \eqref{eq: rtl+- E1}, \eqref{eq: rtl+- E12} and the following two equations are satisfied, as well:
\begin{align}
\label{eq: rtl+- S3}\tag{$S3$}
& \psi_0(\widehat{\wx}_n-\widehat{\wx}_{n-1})+\phi_{i0}(\whx_n-\widehat{\wx}_{n-1})
  =\psi_\ell(\wx_n-\widehat{\wx}_{n-1})+\phi_{i\ell}(x_n-\widehat{\wx}_{n-1}), \\
\label{eq: rtl+- S4}\tag{$S4$}
& \psi_i(\widehat{\wx}_n-\whx_n)+\phi_{i\ell}(x_{n+1}-\widehat{\wx}_n)=
  \phi_{\ell 0}(\widehat{\wx}_n-\wx_n)+\psi_0(\widehat{\wx}_n-\widehat{\wx}_{n-1}).
\end{align}

\begin{figure}[htb]
   \centering
   \begin{tikzpicture}[auto,scale=0.7,inner sep=2,>=stealth']
      \node (x) at (0,0) [circle,fill,label=-135:{$\widehat{x}_{n-1}\simeq x$}] {};
      \node (x1) at (3,0) [circle,fill,label=-45:{$\widehat{x}_{n}\simeq x_{0}$}] {};
      \node (x2) at (1,1) [circle,fill,label=135:{$x_{n}\simeq x_{\ell}$}] {};
      \node (x3) at (0,3) [circle,fill,label=-135:{$\htilde{x}_{n-1}\simeq x_{i}$}] {};
      \node (x12) at (4,1) [circle,fill,label=45:{$x_{n+1}\simeq x_{\ell0}$}] {};
      \node (x23) at (1,4) [circle,fill,label=135:{$\widetilde{x}_{n}\simeq x_{i\ell}$}] {};
      \node (x13) at (3,3) [circle,fill,label=-45:{$\htilde{x}_{n}\simeq x_{i0}$}] {};
      \node (x123) at (4,4) [circle,fill,label=45:{$\widetilde{x}_{n+1}\simeq x_{i\ell0}$}] {};
      \draw (x) to (x1) to (x13) to (x3) to (x);
      \draw (x1) to (x12) to (x123) to (x23) to (x3);
      \draw (x13) to (x123);
      \draw [dashed] (x) to (x2) to (x12);
      \draw [dashed] (x2) to (x23);
      \node (x13) at (3,3) [circle,fill,label={-45,fill=white:{$\htilde{x}_{n}\simeq x_{i0}$}}] {};
      \node (x2) at (1,1) [circle,fill,label={135,fill=white:{$x_{n}\simeq x_{\ell}$}}] {};
      \draw [ultra thick,<->] (9,0) to node [swap] {$G_{\ell}$} (10,1) to node [swap] {$F_{i}$} (10,4);
   \end{tikzpicture}
   \caption{Identification of fields: the 0\textsuperscript{th} coordinate direction enumerates the lattice cites, the coordinate directions $i$, $\ell$ correspond to the maps $F_i$, $G_\ell$, respectively.}
   \label{fig:10}
\end{figure}
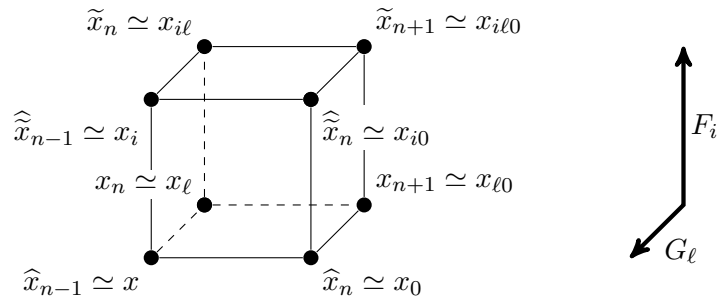

\begin{proof}
We see that the present theorem works in the same manner as Theorems \ref{th: rtl+} and \ref{th: rtl-}, with the only difference that now we only have two local superposition formulae \eqref{eq: rtl+- S1} and \eqref{eq: rtl+- S2}.
We identify the fields in the way which is in Figure~\ref{fig:10}. Then equations \eqref{eq: rtl+- E1}, \eqref{eq: rtl+- E2} and \eqref{eq: rtl+- S1}--\eqref{eq: rtl+- S4} build the system of consistent 3D~corner equations \eqref{eq: 3point Ei}, \eqref{eq: 3point Eij}. More precisely, the correspondence is as follows:\par
\newlength{\firstcol}
\setlength{\firstcol}{0.5\textwidth}
\addtolength{\firstcol}{-34.88786pt}
\begin{tabular}{p{\firstcol}p{0.5\textwidth}}
\begin{itemize}
\item Equation \eqref{eq: rtl+- E1} is $({\mathcal E}_0)$.
\item Equation \eqref{eq: rtl+- E2} is $({\mathcal E}_{i\ell})$.
\item Equation \eqref{eq: rtl+- S1} is $({\mathcal E}_\ell)$.
\end{itemize}&
\begin{itemize}
\item Equation \eqref{eq: rtl+- S2} is $({\mathcal E}_{0\ell})$.
\item Equation \eqref{eq: rtl+- S3} is $({\mathcal E}_i)$.
\item Equation \eqref{eq: rtl+- S4} is $({\mathcal E}_{0i})$.
\end{itemize}
\end{tabular}\par
First, we observe that equation \eqref{eq: rtl+- E} is a sum of equation \eqref{eq: rtl+- S1} and (a downshifted version of) equation \eqref{eq: rtl+- S2}. Therefore, if \eqref{eq: rtl+- E} and one of equations \eqref{eq: rtl+- S1}, \eqref{eq: rtl+- S2} hold, the remaining equations of \eqref{eq: rtl+- S1}, \eqref{eq: rtl+- S2} holds, too.

Due to the fact that the system of 3D~corner equations has rank 2, we can claim the following: if \eqref{eq: rtl+- E} and one of equations \eqref{eq: rtl+- S1}, \eqref{eq: rtl+- S2} is satisfied, then equations \eqref{eq: rtl+- E1}, \eqref{eq: rtl+- E2}, \eqref{eq: rtl+- S3} and \eqref{eq: rtl+- S4} are fulfilled, as well.

Finally, we observe that equation \eqref{eq: rtl+- E12} is a sum of (an upshifted version of) equation \eqref{eq: rtl+- S3} and equation \eqref{eq: rtl+- S4}. This completes the proof.
\end{proof}
\end{theo}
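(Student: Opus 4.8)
The plan is to follow the same embedding strategy already used for Theorems~\ref{th: rtl+} and~\ref{th: rtl-}: realize the four 2D~corner equations \eqref{eq: rtl+- E}--\eqref{eq: rtl+- E12} together with the four superposition formulae \eqref{eq: rtl+- S1}--\eqref{eq: rtl+- S4} as (combinations of) the eight 3D~corner equations \eqref{eq: 3point Ei}, \eqref{eq: 3point Eij} of the underlying 3-point 2-form $\cL$, and then invoke the assumed consistency (rank~2) of that system.

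First I would fix the identification of variables displayed in Figure~\ref{fig:10}: the $0$\textsuperscript{th} coordinate direction carries the lattice index $n$, the $i$\textsuperscript{th} direction is the time of $F_i$ (accent $\widetilde{\phantom{x}}$\,), and the $\ell$\textsuperscript{th} direction is the time of $G_\ell$ (accent $\widehat{\phantom{x}}$\,), so that $x\simeq\whx_{n-1}$, $x_0\simeq\whx_n$, $x_\ell\simeq x_n$, $x_i\simeq\htilde{x}_{n-1}$, and so on. Substituting these identifications into \eqref{eq: 3point Ei}, \eqref{eq: 3point Eij} written for the triple $(i,\ell,0)$ and its relevant shifts, and matching term by term, yields the dictionary \eqref{eq: rtl+- E1}$=({\mathcal E}_0)$, \eqref{eq: rtl+- E2}$=({\mathcal E}_{i\ell})$, \eqref{eq: rtl+- S1}$=({\mathcal E}_\ell)$, \eqref{eq: rtl+- S2}$=({\mathcal E}_{0\ell})$, \eqref{eq: rtl+- S3}$=({\mathcal E}_i)$, \eqref{eq: rtl+- S4}$=({\mathcal E}_{0i})$. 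This bookkeeping is the only computational content of the argument.

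With the dictionary in place the logic is short. Since the 3D~corner system is consistent in the sense of Definition~\ref{def: corner eqs consist}, exactly two of its eight equations are independent, so any two of them imply the remaining six. I would take as generators equation \eqref{eq: rtl+- E} — which one checks directly to equal \eqref{eq: rtl+- S1} plus a once-downshifted copy of \eqref{eq: rtl+- S2}, thereby simultaneously exhibiting it as a sum of two corner equations (cf.\ Figure~\ref{fig:8}) and proving the stated equivalence of \eqref{eq: rtl+- S1} and \eqref{eq: rtl+- S2} modulo \eqref{eq: rtl+- E} — together with one of \eqref{eq: rtl+- S1}, \eqref{eq: rtl+- S2}. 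Consistency then forces \eqref{eq: rtl+- E1}$=({\mathcal E}_0)$, \eqref{eq: rtl+- E2}$=({\mathcal E}_{i\ell})$, \eqref{eq: rtl+- S3}$=({\mathcal E}_i)$ and \eqref{eq: rtl+- S4}$=({\mathcal E}_{0i})$. Finally, \eqref{eq: rtl+- E12} is again not itself a 3D~corner equation but the sum of an upshifted copy of \eqref{eq: rtl+- S3} and \eqref{eq: rtl+- S4}, hence it holds as well, which completes all assertions.

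The main obstacle I expect is organizational rather than conceptual: keeping the accents and lattice-index shifts coherent when passing between the one-dimensional relativistic-Toda labeling and the three-dimensional cube labeling. In the $F_i$--$G_\ell$ case only two of the four superposition formulae are local in $\widehat{\wx}$ (in contrast with the $F$--$F$ and $G$--$G$ cases, where all four are local), so one must be especially careful to read \eqref{eq: rtl+- S3} and \eqref{eq: rtl+- S4} off the cube correctly and to pin down the precise linear combinations (with the precise shifts) that express the non-corner equations \eqref{eq: rtl+- E} and \eqref{eq: rtl+- E12}. Once those combinations are verified, everything else is immediate from the rank~2 property.
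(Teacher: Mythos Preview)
Your proposal is correct and follows essentially the same route as the paper: the same field identification (Figure~\ref{fig:10}), the same dictionary matching \eqref{eq: rtl+- E1}, \eqref{eq: rtl+- E2}, \eqref{eq: rtl+- S1}--\eqref{eq: rtl+- S4} with the six 3D~corner equations $({\mathcal E}_0)$, $({\mathcal E}_{i\ell})$, $({\mathcal E}_\ell)$, $({\mathcal E}_{0\ell})$, $({\mathcal E}_i)$, $({\mathcal E}_{0i})$, the observation that \eqref{eq: rtl+- E} is \eqref{eq: rtl+- S1} plus a downshifted \eqref{eq: rtl+- S2}, and that \eqref{eq: rtl+- E12} is an upshifted \eqref{eq: rtl+- S3} plus \eqref{eq: rtl+- S4}, with the rank-2 consistency of the 3D~system doing the rest. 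One small slip: for the 3-point 2-form $\cL$ the function $S^{ijk}$ does not depend on $x$ or $x_{ijk}$, so there are six 3D~corner equations on the octahedron, not eight; this does not affect your argument.
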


\section{B\"acklund transformations for symmetric systems of the relativistic Toda type}
\label{sect: sym}

We start with the pluri-Lagrangian systems related to the quad-equation $Q_{1}^{0}$, $Q_1^1$, and $Q_3^0$ from the ABS~list (see~\cite{octahedron} for further information). Each of these systems is constructed with the help of just one fundamental function $\phi(x)$, which is given for these three cases by
\begin{equation}\label{eq: sym phi}
\phi(x;\alpha)=\frac{\alpha}{x},\qquad \phi(x;\alpha)=\frac{1}{2}\log\frac{x+\alpha}{x-\alpha},\qquad\text{and}\qquad
\phi(x;\alpha)=\frac{1}{2}\log\frac{\sinh(x+\alpha)}{\sinh(x-\alpha)},
\end{equation}
respectively. The corner equations are given by \eqref{eq: 3point Ei}, \eqref{eq: 3point Eij} with the leg functions
\[
\psi_i(x)=\phi(x;\alpha_i),\qquad
\phi_{ij}(x)=\phi(x;\alpha_i-\alpha_j),
\]
as well as the following choice of parameters:
\[
\alpha_0=\alpha,\qquad
\alpha_i=\lambda,\qquad
\alpha_j=\mu,\qquad
\alpha_k=\lambda+\alpha,\qquad
\alpha_\ell=\mu+\alpha.
\]
Thus, the two mutually commuting families of B\"acklund transformations are given by
\begin{equation}\label{eq: dRTL1 map}
    F_{i}: \begin{cases}[2]
\ p_n=\phi(\wx_n-x_n;\lambda)+\phi(x_n-\wx_{n-1};\lambda-\alpha)
-\phi(x_{n+1}-x_n;\alpha)+\phi(x_n-x_{n-1};\alpha),\\
\ \widetilde{p}_n=\phi(\wx_n-x_n;\lambda)+\phi(x_{n+1}-\wx_n;\lambda-\alpha),
\end{cases}
\end{equation}
and
\begin{equation}\label{eq: dRTL2 map}
G_{k}: \begin{cases}[2]
\ p_n=\phi(\wx_n-x_n;\lambda)+\phi(x_n-\wx_{n-1};\lambda+\alpha),\\
\ \widetilde{p}_n=\phi(\wx_n-x_n;\lambda)+\phi(x_{n+1}-\wx_n;\lambda+\alpha)-
\phi(\wx_{n+1}-\wx_n;\alpha)+\phi(\wx_n-\wx_{n-1};\alpha).
\end{cases}
\end{equation}
The functions $\phi_{ij}(x)$, $\phi_{k\ell}(x)$, and $\phi_{i\ell}(x)$ used in Theorems~\ref{th: rtl+}, \ref{th: rtl-}, and \ref{th: rtl+-} to prove commutativity of any two of the maps $F_i$, $F_j$, $G_k$, and $G_\ell$, are given by
\[
\phi_{ij}(x)=\phi(x;\lambda-\mu),\qquad \phi_{k\ell}(x)=\phi(x;\lambda-\mu),\qquad\text{and}\qquad
\phi_{i\ell}(x)=\phi(x;\lambda-\mu-\alpha).
\]

\section{B\"acklund transformations for the modified exponential system of the relativistic Toda type}\label{sect: mod Toda}

The pluri-Lagrangian system playing the ``master'' role of the  system for all the asymmetric systems of the relativistic Toda type is described in the following proposition.

\begin{prop}\label{th: 3Dcorners asym}
The pluri-Lagrangian system consisting of the 3D~corner equations \eqref{eq: 3point Ei}, \eqref{eq: 3point Eij} with the leg functions
\begin{equation}\label{eq:master legs}
\psi_j(x)=\log\frac{\alpha_j+e^x}{\beta_j}, \qquad
\phi_{ij}(x)=\log\frac{\beta_j-\beta_i e^x}{\alpha_j-\alpha_i e^x}
\end{equation}
is consistent, and the corresponding discrete Lagrangian 2-form $\Ell$ is closed on its solutions.
\end{prop}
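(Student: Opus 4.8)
The plan is to verify directly that the system of $3$D~corner equations \eqref{eq: 3point Ei}, \eqref{eq: 3point Eij} built from the leg functions \eqref{eq:master legs} has rank~$2$, and that the three constants $c^{ijk}$ of Theorem~\ref{Th: almost closed} all vanish. Recall that for a $3$-point $2$-form the action $S^{ijk}$ is supported on the octahedron of Figure~\ref{octahedron}, so there are only six corner equations $({\mathcal E}_i),({\mathcal E}_j),({\mathcal E}_k),({\mathcal E}_{ij}),({\mathcal E}_{ik}),({\mathcal E}_{jk})$ per cube, and consistency in the sense of Definition~\ref{def: corner eqs consist} means exactly two of these are independent.

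The first step is to introduce exponential coordinates. Since every $\psi_j$ and $\phi_{ij}$ depends on its argument only through $e^x$, it is natural to set $u=e^{x}$, $u_i=e^{x_i}$, etc., and to rewrite $\psi_j(x_{\bullet}-x_{\circ})$ and $\phi_{ij}(x_{\bullet}-x_{\circ})$ as rational expressions in the ratios $u_\bullet/u_\circ$. Concretely,
\[
\psi_j(x_\bullet-x_\circ)=\log\frac{\alpha_j u_\circ+u_\bullet}{\beta_j u_\circ},
\qquad
\phi_{ij}(x_\bullet-x_\circ)=\log\frac{\beta_j u_\circ-\beta_i u_\bullet}{\alpha_j u_\circ-\alpha_i u_\bullet}.
\]
Exponentiating a corner equation such as \eqref{eq: 3point Ei} then turns it into a \emph{multiplicative} (rational, in fact multi-affine) relation among the eight vertex variables $u,u_i,\dots,u_{ijk}$. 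One should recognize the resulting relations on the faces of the octahedron as quad-equations from the ABS list (the functions \eqref{eq:master legs} are precisely the leg functions attached to $Q$-type equations with the standard additive parameters $\alpha_i$), so that consistency of the corner system reduces to the already established multidimensional consistency of those quad-equations around a cube; this is the strategy hinted at by the reference to \cite{octahedron}. Alternatively, and more self-containedly, one computes: given $u,u_i,u_j,u_k$ one solves $({\mathcal E}_i)$ for $u_{jk}$ (it is affine in $u_{jk}$), $({\mathcal E}_j)$ for $u_{ik}$, $({\mathcal E}_k)$ for $u_{ij}$, and then checks that the three remaining equations $({\mathcal E}_{ij}),({\mathcal E}_{ik}),({\mathcal E}_{jk})$ hold identically — each of them becoming, after clearing denominators, a polynomial identity in $u,u_i,u_j,u_k$ and the parameters $\alpha_i,\beta_i$ that factors to $0$.

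The second step is the closure relation $d\Ell=0$ on solutions, i.e.\ $c^{ijk}=0$. By Theorem~\ref{Th: almost closed} the quantity $S^{ijk}$ of \eqref{eq: 3point S} is already constant on solutions; what remains is to evaluate that constant and show it is zero. The clean way is to exploit the freedom in the additive normalization of the Lagrangians $L_i$ and $\Lambda_{ij}$: a constant shift $L_i\mapsto L_i+c_i$ (or a linear-in-$x$ term) changes $\Ell$ by an exact form and shifts $S^{ijk}$ by a controllable amount, so it suffices to find antiderivatives $L_i$, $\Lambda_{ij}$ of the given $\psi_i$, $\phi_{ij}$ for which a convenient special solution (e.g.\ one on which many differences $x_\bullet-x_\circ$ coincide, or a ``vacuum'' solution adapted to the exponential structure) gives $S^{ijk}=0$; since the constant does not depend on the solution, this settles it for all solutions. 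Carrying this out requires writing down $L_i(x)=\int\log\frac{\alpha_i+e^x}{\beta_i}\,dx$ and $\Lambda_{ij}(x)=\int\log\frac{\beta_j-\beta_i e^x}{\alpha_j-\alpha_i e^x}\,dx$ in terms of dilogarithms and checking the telescoping cancellation of the twelve terms in \eqref{eq: 3point S}.

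The main obstacle is the consistency verification in Step~1: even after passing to exponential variables, checking that $({\mathcal E}_{ij}),({\mathcal E}_{ik}),({\mathcal E}_{jk})$ are consequences of $({\mathcal E}_i),({\mathcal E}_j),({\mathcal E}_k)$ is a substantial multi-affine elimination, and the cleanest route is almost certainly to identify the octahedron relations with known consistent-around-the-cube quad-equations rather than to brute-force the polynomial identities; getting the parameter dictionary ($\alpha_i,\beta_i$ versus the ABS parameters) exactly right is the delicate point. The closure relation, by contrast, should follow routinely once the right antiderivatives are fixed, since Theorem~\ref{Th: almost closed} has already reduced it to the evaluation of a single constant.
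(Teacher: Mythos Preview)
Your plan is workable in principle but takes a far more laborious route than the paper. The paper's proof is essentially a one-line reduction: it observes that the leg functions \eqref{eq:master legs} arise from the one-parameter leg functions $\bar\psi_j(x)=\log(\gamma_j-e^x)$, $\bar\phi_{ij}(x)=\log\frac{\gamma_i-\gamma_j e^x}{\gamma_j-\gamma_i e^x}$ (for which consistency and closure were already proven in \cite{variational}) via the affine change of variables $x_i=\bar x_i+\log(-\gamma_i\beta_i)$, $x_{ij}=\bar x_{ij}+\log(\gamma_i\gamma_j\beta_i\beta_j)$ and the reparametrization $\alpha_i=\gamma_i^2\beta_i$. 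Under this substitution one finds $\psi_j=\bar\psi_j+\log\gamma_j$ and $\phi_{ij}=\bar\phi_{ij}-\log(\gamma_i\gamma_j)$, i.e.\ a shift of the type $\psi_j\rightsquigarrow\psi_j+\epsilon_j$, $\phi_{ij}\rightsquigarrow\phi_{ij}-\epsilon_i-\epsilon_j$, which affects neither the 3D~corner equations nor the value of $S^{ijk}$. Hence both consistency and closure are inherited wholesale from the known case, with no fresh elimination and no dilogarithms.

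Your direct-verification approach---exponentiate, identify the octahedron relations with ABS quad-equations, then compute $S^{ijk}$ explicitly---would also succeed, but you yourself flag the sticking point (matching the two-parameter family $(\alpha_i,\beta_i)$ to the ABS dictionary) and do not actually carry it out, nor do you execute the closure evaluation. The paper sidesteps both obstacles by noticing that the apparent two-parameter dependence is fictitious: one parameter can be absorbed into a shift of the field, reducing to the one-parameter system already in the literature. Before launching a brute-force consistency check, it is worth looking for exactly this kind of gauge reduction.
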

\begin{proof}
The system in question is a slight generalization of a similar pluri-Lagrangian system for which the analogous statements were proven in \cite{variational}, and which consists of the 3D~corner equations \eqref{eq: 3point Ei}, \eqref{eq: 3point Eij} with the leg functions
\begin{equation}\label{eq: asym legs}
\bar\psi_j(x)=\log(\gamma_j-e^x), \qquad
\bar\phi_{ij}(x)=\log\frac{\gamma_i-\gamma_j e^x}{\gamma_j-\gamma_i e^x}.
\end{equation}
Setting
\begin{equation}
x_i=\bar{x}_i+\log(-\gamma_i\beta_i),\qquad
x_{ij}=\bar{x}_{ij}+\log(\gamma_i\gamma_j\beta_i\beta_j),\qquad
\text{and}\qquad
\alpha_i=\gamma_i^2\beta_i,
\end{equation}
we find:
\begin{equation}\label{eq: x psi shifts}
\psi_j(x_{ij}-x_i)=\bar{\psi}_j(\bar{x}_{ij}-\bar{x}_i)+\log\gamma_j,
\qquad
\phi_{ij}(x_j-x_i)=\bar{\phi}_{ij}(\bar{x}_j-\bar{x}_i)-\log(\gamma_i\gamma_j).
\end{equation}
This yields, up to additive constants,
\[
L_j(x_{ij}-x_i)=\bar{L}_j(\bar{x}_{ij}-\bar{x}_i)+\log\gamma_j(\bar{x}_{ij}-\bar{x}_i),
\]
\[
\Lambda_{ij}(x_j-x_i)=\bar{\Lambda}_{ij}(\bar{x}_j-\bar{x}_i)-\log(\gamma_i\gamma_j)(\bar{x}_j-\bar{x}_i).
\]
Now, the closure relation for the system with the leg functions \eqref{eq:master legs} is immediately seen to be equivalent to the closure relation for the system with the leg functions \eqref{eq: asym legs}.
\end{proof}

In identifying the leg functions, we will often silently perform shifts similar to \eqref{eq: x psi shifts},
\begin{equation}\label{eq: psi shifts}
\psi_j \rightsquigarrow \psi_j+\epsilon_j, \qquad \phi_{ij}\rightsquigarrow \phi_{ij}-\epsilon_i-\epsilon_j,
\end{equation}
which affect neither the 3D-corner equations nor the closedness of the pluri-Lagrangian 2-form $\Ell$.

We start with the following choice of parameters:
\begin{alignat*}{5}
&\alpha_0=\frac{1}{\alpha},&\qquad
&\alpha_i=-1,&\qquad
&\alpha_j=-1,&\qquad
&\alpha_{k}=\frac{1}{\lambda+\alpha},&\qquad
&\alpha_{\ell}=\frac{1}{\mu+\alpha},\\
&\beta_0=-1,&\qquad
&\beta_i=\lambda,&\qquad
&\beta_j=\mu,&\qquad
&\beta_{k}=-1,&\qquad
&\beta_{\ell}=-1.
\end{alignat*}
With this choice of parameters, the relevant leg functions for the map $F_i$ are
\[
\psi_0(x)=\log(1+\alpha e^x),\qquad \psi_i(x)=\log\frac{e^{x}-1}{\lambda},\qquad
\phi_{i0}(x)=\log\frac{1+\lambda e^x}{1+\alpha e^x},
\]
and those relevant for the map $G_k$ are
\[
\psi_k(x)=\log(1+(\lambda+\alpha)e^x),\qquad
\phi_{k0}(x)=\log\frac{\lambda^{-1}(e^x-1)}{1-\lambda^{-1}\alpha(e^x-1)}.
\]
Moreover, we always assume that the functions $\psi_j$, $\phi_{j0}$ are obtained from $\psi_i$, $\phi_{i0}$ by replacing $\lambda$ by $\mu$, and, the functions $\psi_\ell$, $\phi_{\ell 0}$ are obtained from $\psi_k$, $\phi_{k0}$ in the same way. Thus, we are dealing with the following two families of B\"acklund transformations:
\begin{align}
\label{eq: mod Toda F}
F_{i}:&
\begin{cases}[2]
e^{p_{n}}=\dfrac{e^{\widetilde{x}_{n}-x_{n}}-1}{\lambda}\cdot
\dfrac{1+\lambda e^{x_{n}-\widetilde{x}_{n-1}}}{1+\alpha e^{x_{n}-\widetilde{x}_{n-1}}}\cdot
\dfrac{1+\alpha e^{x_{n}-x_{n-1}}}{1+\alpha e^{x_{n+1}-x_{n}}}, \\
e^{\widetilde{p}_{n}}=\dfrac{e^{\widetilde{x}_{n}-x_{n}}-1}{\lambda}\cdot
\dfrac{1+\lambda e^{x_{n}+1-\widetilde{x}_{n}}}{1+\alpha e^{x_{n+1}-\widetilde{x}_{n}}};
\end{cases}\\
\intertext{and}
\label{eq: mod Toda G}
G_{k}:&
\begin{cases}[2]
e^{p_{n}}=
\dfrac{\lambda^{-1}(e^{\widetilde{x}_{n}-x_{n}}-1)}{1-\lambda^{-1}\alpha(e^{\widetilde{x}_{n}-x_{n}}-1)}\cdot
(1+(\lambda+\alpha)e^{x_{n}-\widetilde{x}_{n-1}}),\\
e^{\widetilde{p}_{n}}=
\dfrac{\lambda^{-1}(e^{\widetilde{x}_{n}-x_{n}}-1)}{1-\lambda^{-1}\alpha(e^{\widetilde{x}_{n}-x_{n}}-1)}\cdot
(1+(\lambda+\alpha)e^{x_{n+1}-\widetilde{x}_{n}})\cdot
\dfrac{1+\alpha e^{\widetilde{x}_{n}-\widetilde{x}_{n-1}}}{1+\alpha e^{\widetilde{x}_{n+1}-\widetilde{x}_{n}}}.
\end{cases}
\end{align}
The functions $\phi_{ij}(x)$, $\phi_{k\ell}(x)$, and $\phi_{i\ell}(x)$ used in Theorems~\ref{th: rtl+}, \ref{th: rtl-}, and \ref{th: rtl+-} to prove commutativity of any two of the maps $F_i$, $F_j$, $G_k$, and $G_\ell$, are given by
\[
\phi_{ij}(x)=\log\frac{\lambda e^x-\mu}{e^x-1},\qquad
\phi_{k\ell}(x)=\log\frac{e^x-1}{(\mu+\alpha)e^x-(\lambda+\alpha)},\qquad
\phi_{i\ell}(x)=\log\frac{1+\lambda e^x}{1+(\mu+\alpha)e^x}.
\]

\section{B\"acklund transformations for the ``master'' exponential system of the relativistic Toda type}
\label{sect: master}

The following system is algebraically similar to the modified exponential one, but provides us with more freedom in the choice of parameters. We set in the system of Proposition \ref{th: 3Dcorners asym}:
\begin{alignat*}{5}
&\alpha_0=\frac{1}{\epsilon\alpha},&\qquad
&\alpha_i=\frac{\lambda-\epsilon}{\epsilon},&\qquad
&\alpha_j=\frac{\mu-\epsilon}{\epsilon},&\qquad
&\alpha_{k}=\frac{1}{\epsilon(\alpha+\lambda)},&\qquad
&\alpha_{\ell}=\frac{1}{\epsilon(\alpha+\mu)},\\
&\beta_0=\frac{1}{\alpha-\epsilon},&\qquad
&\beta_i=\lambda,&\qquad
&\beta_j=\mu,&\qquad
&\beta_{k}=\frac{1}{\alpha+\lambda-\epsilon},&\qquad
&\beta_{\ell}=\frac{1}{\alpha+\mu-\epsilon}.
\end{alignat*}
Up to the shifts of the type \eqref{eq: psi shifts}, we find:
\begin{align*}
&\psi_0(x)=\log(1+\epsilon\alpha e^x),\qquad
\psi_i(x)=\log\left(1+\frac{\epsilon}{\lambda}(e^x-1)\right),\qquad
\phi_{i0}(x)=\log\frac{1-\lambda(\alpha-\epsilon)e^x}{1-\alpha(\lambda-\epsilon)e^x},\\
\intertext{and}
&\psi_k(x)=\log(1+\epsilon(\lambda+\alpha)e^x),\qquad
\phi_{k0}(x)=\log\frac{1-\lambda^{-1}(\alpha-\epsilon)(e^x-1)}{1-\lambda^{-1}\alpha(e^x-1)},
\end{align*}
so that we are dealing with the following two families of B\"acklund transformations:
\begin{align}
F_{i}:&
\begin{cases}[2]
e^{\epsilon p_n}=
\left(1+\epsilon\lambda^{-1}\left(e^{\wx_n-x_n}-1\right)\right)\cdot
\dfrac{1-\lambda(\alpha-\epsilon)e^{x_n-\wx_{n-1}}}{1-\alpha(\lambda-\epsilon)e^{x_n-\wx_{n-1}}}\cdot
\dfrac{1+\epsilon\alpha e^{x_n-x_{n-1}}}{1+\epsilon\alpha e^{x_{n+1}-x_n}},\\
e^{\epsilon\widetilde{p}_n}=\left(1+\epsilon\lambda^{-1}\left(e^{\wx_n-x_n}-1\right)\right)\cdot
\dfrac{1-\lambda(\alpha-\epsilon)e^{x_{n+1}-\wx_n}}{1-\alpha(\lambda-\epsilon)e^{x_{n+1}-\wx_n}};
\end{cases}\\
\intertext{and}
G_{k}:&
\begin{cases}[2]
e^{\epsilon p_n}=
\dfrac{1-(\alpha-\epsilon)\lambda^{-1}\left(e^{\wx_n-x_n}-1\right)}
{1-\alpha\lambda^{-1}\left(e^{\wx_n-x_n}-1\right)}\cdot
\left(1+\epsilon(\alpha+\lambda)e^{x_n-\wx_{n-1}}\right),\\
e^{\epsilon\widetilde{p}_n}=
\dfrac{1-(\alpha-\epsilon)\lambda^{-1}\left(e^{\wx_n-x_n}-1\right)}
{1-\alpha\lambda^{-1}\left(e^{\wx_n-x_n}-1\right)}
\cdot
\dfrac{1+\epsilon\alpha e^{\wx_n-\wx_{n-1}}}{1+\epsilon\alpha e^{\wx_{n+1}-\wx_n}}
\cdot
\left(1+\epsilon(\alpha+\lambda)e^{x_{n+1}-\wx_n}\right).
\end{cases}
\end{align}
The functions $\phi_{ij}(x)$, $\phi_{k\ell}(x)$, and $\phi_{i\ell}(x)$ used in Theorems~\ref{th: rtl+}, \ref{th: rtl-}, and \ref{th: rtl+-} to prove commutativity of any two of the maps $F_i$, $F_j$, $G_k$, and $G_\ell$, are given by
\begin{align*}
&\phi_{ij}(x)=\log\frac{\lambda e^x-\mu}{(\lambda-\epsilon)e^x-(\mu-\epsilon)},\qquad
\phi_{k\ell}(x)=\log\frac{(\mu+\alpha-\epsilon)e^x-(\lambda+\alpha-\epsilon)}
{(\mu+\alpha)e^x-(\lambda+\alpha)},\\
&\phi_{i\ell}(x)=\log\frac{1-\lambda(\mu+\alpha-\epsilon)e^x}{1-(\lambda-\epsilon)(\mu+\alpha)e^x}.
\end{align*}

\section{B\"acklund transformations for the additive exponential system of the relativistic Toda type}
\label{sect: asym Toda}

Performing the limit $\epsilon\to 0$ in the pluri-Lagrangian system of Section \ref{sect: master}, we arrive at the following leg functions:
\begin{align*}
&\psi_0(x)=\alpha e^x,\qquad
\psi_i(x)=\frac{1}{\lambda}(e^x-1),\qquad
\phi_{i0}(x)=\frac{(\lambda-\alpha)e^x}{1-\lambda\alpha e^x},\\
\intertext{and}
&\psi_k(x)=(\lambda+\alpha)e^x,\qquad
\phi_{k0}(x)=\frac{\lambda^{-1}(e^x-1)}{1-\lambda^{-1}\alpha(e^x-1)}.
\end{align*}
Thus, we are dealing with the following two families of B\"acklund transformations:
\begin{align}
\label{eq: Toda F}
F_{i}:&
\begin{cases}[2]
p_n=\dfrac{e^{\wx_n-x_n}-1}{\lambda}+\dfrac{(\lambda-\alpha)e^{x_n-\wx_{n-1}}}{1-\lambda\alpha e^{x_n-\wx_{n-1}}}+
\alpha e^{x_n-x_{n-1}}-\alpha e^{x_{n+1}-x_n},\\
\widetilde{p}_n=\dfrac{e^{\wx_n-x_n}-1}{\lambda}+\dfrac{(\lambda-\alpha)e^{x_{n+1}-\wx_n}}{1-\lambda\alpha e^{x_{n+1}-\wx_n}};
\end{cases}\\
\intertext{and}
\label{eq: Toda G}
G_{k}:&
\begin{cases}[2]
p_n=\dfrac{\lambda^{-1}\left(e^{\wx_n-x_n}-1\right)}{1-\alpha\lambda^{-1}\left(e^{\wx_n-x_n}-1\right)}+
(\lambda+\alpha)e^{x_n-\wx_{n-1}},\\
\widetilde{p}_n=\dfrac{\lambda^{-1}\left(e^{\wx_n-x_n}-1\right)}{1-\alpha\lambda^{-1}
\left(e^{\wx_n-x_n}-1\right)}+
(\lambda+\alpha)e^{x_{n+1}-\wx_n}-\alpha e^{\wx_{n+1}-\wx_n}+\alpha e^{\wx_n-\wx_{n-1}}.
\end{cases}
\end{align}
Any two of the symplectic maps $F_{i}$, $F_{j}$, $G_{k}$, and $G_{\ell}$ commute, which is demonstrated
in Theorems~\ref{th: rtl+}, \ref{th: rtl-}, and \ref{th: rtl+-}, with the following functions $\phi_{ij}$, $\phi_{k\ell}$, and $\phi_{i\ell}$:
\[
\phi_{ij}(x)=\frac{e^x-1}{\lambda e^x-\mu},\qquad
\phi_{k\ell}(x)=-\frac{e^x-1}{(\mu+\alpha)e^x-(\lambda+\alpha)},\qquad
\phi_{i\ell}(x)=\frac{(\lambda-\mu-\alpha) e^x}{1-\lambda(\mu+\alpha)e^x}.
\]

\section{B\"acklund transformations for the asymmetric rational system of the relativistic Toda type}
\label{sect: rat}

The last pluri-Lagrangian system we consider in the present paper is described in the following proposition.
\begin{prop}
The system of 3D-corner equations \eqref{eq: 3point Ei}, \eqref{eq: 3point Eij} with the leg functions
\begin{equation}\label{eq: legs asym rat}
\psi_j(x)=\log(x+\alpha_{j}),\qquad \phi_{ij}(x)=\log\frac{x+\beta_{j}-\beta_{i}}{x+\alpha_{j}-\alpha_{i}}
\end{equation}
is consistent, and the corresponding discrete Lagrangian 2-form $\Ell$ is closed on its solutions.
\end{prop}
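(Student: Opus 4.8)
The plan is to reduce the assertion to the analogous result for the additive exponential system of Section~\ref{sect: asym Toda}, exactly in the spirit of the proof of Proposition~\ref{th: 3Dcorners asym}. Concretely, the leg functions \eqref{eq: legs asym rat} should be obtained from those of Section~\ref{sect: asym Toda} by a substitution of the form $e^{x}\rightsquigarrow 1+\varepsilon x$ followed by a limit $\varepsilon\to 0$, together with an appropriate rescaling of the parameters $\alpha_j,\beta_j$. First I would write $\psi_j(x)=\log(x+\alpha_j)$ and $\phi_{ij}(x)=\log\frac{x+\beta_j-\beta_i}{x+\alpha_j-\alpha_i}$ and check that, after the shifts \eqref{eq: psi shifts} (which affect neither the 3D-corner equations nor the closedness of $\Ell$), these coincide, up to the limiting procedure, with the functions $\psi_j(x)=\log(\alpha_j+e^x)-\log\beta_j$, $\phi_{ij}(x)=\log\frac{\beta_j-\beta_ie^x}{\alpha_j-\alpha_ie^x}$ of \eqref{eq:master legs} under $e^x\mapsto 1+\varepsilon x$ with the parameters rescaled so that the $\varepsilon^0$-term survives. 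Thus the system \eqref{eq: legs asym rat} is a degeneration of the master system of Proposition~\ref{th: 3Dcorners asym}.

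Next I would invoke the structural fact established in Section~\ref{sect: discr results}: consistency of the system of 3D-corner equations means it has rank~$2$, and by Theorem~\ref{Th: almost closed} the action $S^{ijk}$ over an elementary cube is then constant modulo the corner equations, the constant being $c^{ijk}$; closedness is the statement $c^{ijk}=0$. Both of these are \emph{algebraic} (indeed birational/logarithmic-derivative) conditions on the leg functions $\psi_j,\phi_{ij}$, hence they are preserved under specialization and under continuous limits of the parameters, provided the limit is nondegenerate (the equations do not collapse). Since Proposition~\ref{th: 3Dcorners asym} gives consistency and $d\Ell=0$ for the master system, and since \eqref{eq: legs asym rat} is the image of that system under the admissible substitution above, consistency and closedness for \eqref{eq: legs asym rat} follow by continuity.

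For the part of the proof where one wants an explicit verification rather than a limiting argument, the key algebraic identity to exhibit is the superposition-type relation underlying \eqref{eq: 3point Ei}: one shows directly that with $\psi_j,\phi_{ij}$ as in \eqref{eq: legs asym rat} there is a consistent way to solve \eqref{eq: 3point Ei} for $x_{ij}-x_i$ and that the resulting $x_{ij}$ satisfies \eqref{eq: 3point Eij} identically. This amounts to checking that a certain three-term logarithmic equation in the differences $x_i-x, x_j-x, x_k-x, x_{ij}, x_{jk}, x_{ik}$ has the rank-$2$ solution variety; after exponentiating it becomes a rational (in fact multilinear, once one clears denominators) identity among cross-ratio-like expressions built from the $\alpha$'s and $\beta$'s, which can be verified by direct computation. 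The closure relation $d\Ell=0$ then follows by evaluating \eqref{eq: 3point S} on this solution variety and observing that the $\Lambda$-terms telescope against the $L$-terms; since $S^{ijk}$ is already constant by Theorem~\ref{Th: almost closed}, it suffices to compute its value at one convenient point (e.g.\ a fully symmetric configuration) and see that it vanishes.

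The main obstacle is the nondegeneracy of the limit: one must make sure that under the substitution $e^x\mapsto 1+\varepsilon x$ and the accompanying parameter rescaling, neither the corner equations \eqref{eq: 3point Ei}, \eqref{eq: 3point Eij} degenerate to tautologies nor the constant $c^{ijk}$ picks up a spurious nonzero contribution from lower-order terms in $\varepsilon$; this is a bookkeeping issue about which combination of parameters is sent to which order in $\varepsilon$, and it is the only place where genuine care is needed. Everything else is either quoted from Proposition~\ref{th: 3Dcorners asym} and Theorem~\ref{Th: almost closed} or reduces to a routine rational identity.
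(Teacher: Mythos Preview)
Your approach is essentially the same as the paper's: both obtain the rational leg functions \eqref{eq: legs asym rat} as a degeneration of the master system of Proposition~\ref{th: 3Dcorners asym}. Two remarks, though. First, your opening reference to ``the additive exponential system of Section~\ref{sect: asym Toda}'' is a slip --- those leg functions are not logarithmic (they are things like $\alpha e^x$ and $\lambda^{-1}(e^x-1)$) and cannot produce $\log(x+\alpha_j)$ under $e^x\mapsto 1+\varepsilon x$; the correct source, which you do switch to a few lines later, is indeed the master system \eqref{eq:master legs}. Second, the ``appropriate rescaling of the parameters'' that you leave unspecified is the whole content of the paper's proof: one substitutes $x\leadsto hx$, $\alpha_i\leadsto -1+h\alpha_i$, $\beta_i\leadsto -1+h\beta_i$ into \eqref{eq:master legs} and sends $h\to 0$. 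Once this is written down, the nondegeneracy worry you raise evaporates (each leg function has a well-defined nontrivial limit, up to an additive constant absorbed by \eqref{eq: psi shifts}), and all the auxiliary machinery you sketch --- direct rank-$2$ verification, evaluating $S^{ijk}$ at a symmetric point --- is unnecessary.
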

\begin{proof}
The leg functions \eqref{eq: legs asym rat} are obtained from those in \eqref{eq:master legs} via the following changes of variables and transformations of parameters:
\[
x\leadsto hx,\qquad
\alpha_i\leadsto-1+h \alpha_{i},\qquad
\beta_{i}\leadsto-1+h\beta_{i},
\]
and then sending $h\to 0$.
\end{proof}

We make the following choice of parameters:
\begin{alignat*}{5}
&\alpha_0=\frac{1}{\alpha},&\qquad
&\alpha_i=0,&\qquad
&\alpha_j=0,&\qquad
&\alpha_{k}=\frac{1}{\alpha+\lambda},&\qquad
&\alpha_{\ell}=\frac{1}{\alpha+\mu},\\
&\beta_0=0,&\qquad
&\beta_i=-\frac{1}{\lambda},&\qquad
&\beta_j=-\frac{1}{\mu},&\qquad
&\beta_{k}=0,&\qquad
&\beta_{\ell}=0.
\end{alignat*}
With this choice of parameters, we find the following leg functions:
\begin{align*}
&\psi_0(x)=\log(1+\alpha x),\qquad
\psi_i(x)=\log \frac{x}{\lambda},\qquad
\phi_{i0}(x)=\log\frac{1+\lambda x}{1+\alpha x},\\
\intertext{and}
&\psi_k(x)=\log(1+(\lambda+\alpha)x),\qquad
\phi_{k0}(x)=\log\frac{x}{\lambda+\alpha(\lambda+\alpha)x}.
\end{align*}
This corresponds to the following two families of B\"acklund transformations:
\begin{align}
F_i:&
\begin{cases}[2]
e^{p_{n}}=\dfrac{\widetilde{x}_{n}-x_{n}}{\lambda}\cdot \dfrac{1+\lambda(x_{n}-\widetilde{x}_{n-1})}{1+\alpha(x_{n}-\widetilde{x}_{n-1})}\cdot
\dfrac{1+\alpha(x_{n}-x_{n-1})}{1+\alpha(x_{n+1}-x_{n})},\\
e^{\widetilde{p}_{n}}=\dfrac{\widetilde{x}_{n}-x_{n}}{\lambda}\cdot
\dfrac{1+\lambda(x_{n+1}-\widetilde{x}_{n})}{1+\alpha(x_{n+1}-\widetilde{x}_{n})}.\end{cases}\\
\intertext{and}
G_k:&
\begin{cases}[2]
e^{p_{n}}=\dfrac{\widetilde{x}_{n}-x_{n}}{\lambda+\alpha(\lambda+\alpha)(\widetilde{x}_{n}-x_{n})}\cdot
(1+(\lambda+\alpha)(x_{n}-\widetilde{x}_{n-1})),\\
e^{\widetilde{p}_{n}}=\dfrac{\widetilde{x}_{n}-x_{n}}{\lambda+\alpha(\lambda+\alpha)(\widetilde{x}_{n}-x_{n})}
\cdot
(1+(\lambda+\alpha)(x_{n+1}-\widetilde{x}_{n}))\cdot
\dfrac{1+\alpha(\widetilde{x}_{n}-\widetilde{x}_{n-1})}{1+\alpha(\widetilde{x}_{n+1}-\widetilde{x}_{n})}.
\end{cases}
\end{align}
Any two of the symplectic maps $F_{i}$, $F_{j}$, $G_{k}$, and $G_{\ell}$ commute, as follows from Theorems
\ref{th: rtl+}, \ref{th: rtl-}, and \ref{th: rtl+-} with the functions
\begin{align*}
&\phi_{ij}(x)=\log\frac{\mu-\lambda+\lambda\mu x}{x},\qquad
\phi_{k\ell}(x)=\log\frac{x}{\lambda-\mu+(\lambda+\alpha)(\mu+\alpha)x},\\
&\phi_{i\ell}(x)=\log\frac{1+\lambda x}{1+(\mu+\alpha)x}.
\end{align*}

\section{Spectrality, integrals of motion, and conservation laws}
\label{sect: spectrality}

In all our examples the maps $F_i$ and $G_k$ as introduced in Definitions \ref{def: Fi} and \ref{def: Gk} are instances of {\em B\"acklund transformations}, i.e., one-parameter families of commuting symplectic maps characterized by parameter-dependent Lagrange functions $\ELL(x,\wx;\lambda)$, resp.\ $\M(x,\wx;\lambda)$. For such families, the following remarkable property was established in \cite{S}:
\begin{theo}\label{th: spectrality}
For the pluri-Lagrangian system built by two commuting maps $F_i$, $F_j$ from one family of B\"acklund transformations (whose action is denoted by $\widetilde{\phantom{x}}$ and $\widehat{\phantom{x}}$, respectively), the discrete multi-time Lagrangian 1-form is closed on solutions if and only if the quantity $P_F(x,\wx;\lambda):=\partial\ELL(x,\wx;\lambda)/\partial \lambda$ is a common integral of motion for all $F_j$.
\end{theo}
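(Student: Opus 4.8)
The plan is to obtain the spectrality relation by differentiating the closure relation $d\Ell(\sigma_{ij})=0$ with respect to the B\"acklund parameter $\lambda$ of the family to which $F_i$ belongs, and to read off from the answer exactly the $F_j$-invariance of $P_F$. First I would identify the elementary square $\sigma_{ij}$ with a pair of B\"acklund edges: the $i$-edge $(x,\wx)$ carries the parameter $\lambda$ of the family, the $j$-edge $(x,\whx)$ carries the parameter $\mu$, and $\widehat{\wx}=\widehat{\widetilde{x}}=\widetilde{\widehat{x}}$ by the commutativity \eqref{eq: commute} of $F_i$ and $F_j$. With this identification,
\[
d\Ell(\sigma_{ij})=\ELL(x,\wx;\lambda)+\ELL(\wx,\widehat{\wx};\mu)-\ELL(\whx,\widehat{\wx};\lambda)-\ELL(x,\whx;\mu),
\]
and by Theorem~\ref{th: discr almost closed} this quantity equals a constant $\ell^{ij}=\ell^{ij}(\lambda,\mu)$ on solutions of the $2$D corner equations; closedness of the $1$-form means $\ell^{ij}\equiv 0$ for every choice of $\mu$.

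Next I would compute $d\ell^{ij}/d\lambda$ along a solution, treating the phase point $(x,p)$ as $\lambda$-independent data and $\wx,\widehat{\wx}$ as functions of $\lambda$. Two simplifications occur: $\whx=F_j(x)$ does not depend on $\lambda$, so both $\ELL(x,\whx;\mu)$ and the term $\partial_{\whx_n}\ELL(\whx,\widehat{\wx};\lambda)\,d\whx_n/d\lambda$ drop out; and in the remaining chain-rule sum the coefficient of $d\wx_n/d\lambda$ is $\partial_{\wx_n}\ELL(x,\wx;\lambda)+\partial_{\wx_n}\ELL(\wx,\widehat{\wx};\mu)$, which is the corner equation $(E_i)$ at the vertex $\wx$ and hence vanishes on solutions, while the coefficient of $d\widehat{\wx}_n/d\lambda$ is $\partial_{\widehat{\wx}_n}\ELL(\wx,\widehat{\wx};\mu)-\partial_{\widehat{\wx}_n}\ELL(\whx,\widehat{\wx};\lambda)$, which is the corner equation $(E_{ij})$ at the vertex $\widehat{\wx}$ and likewise vanishes. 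Only the two explicit $\lambda$-derivatives survive, giving, on solutions,
\[
\frac{d}{d\lambda}\,\ell^{ij}=P_F(x,\wx;\lambda)-P_F(\whx,\widehat{\wx};\lambda);
\]
since $(\whx,\widehat{\wx})$ is precisely the image of the $i$-edge $(x,\wx)$ under $F_j$, the right-hand side is $P_F-P_F\circ F_j$ evaluated along the $F_i$-dynamics.

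Both implications then follow from this identity. For the ``only if'' part, closedness gives $\ell^{ij}\equiv 0$ for every $\mu$, hence $P_F=P_F\circ F_j$ for every map $F_j$ of the family, so that $P_F(x,\wx;\lambda)$ is a common integral of motion of the whole family (and, being $\lambda$-dependent, a generating function of such integrals). For the ``if'' part, invariance of $P_F$ under every $F_j$ makes $\partial\ell^{ij}/\partial\lambda=0$, so that $\ell^{ij}(\lambda,\mu)$ is independent of $\lambda$; letting $\lambda\to\mu$, where $F_i$ and $F_j$ coincide and the square $\sigma_{ij}$ degenerates so that its two opposite edges carry equal Lagrangians and $d\Ell(\sigma_{ij})\equiv 0$, I would conclude $\ell^{ij}\equiv 0$, i.e.\ the $1$-form is closed.

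The hard part is the bookkeeping in the differentiation step: one must check that precisely the corner equations $(E_i)$ and $(E_{ij})$, and no others, are needed to annihilate the field-variation terms, which hinges on the exact dictionary between $\sigma_{ij}$ and the pair of B\"acklund edges and on the identity $\widehat{\widetilde{x}}=\widetilde{\widehat{x}}$. A secondary point, relevant only for the converse, is the continuation in $\lambda$ to the diagonal $\lambda=\mu$ that forces the $\lambda$-independent constant $\ell^{ij}$ to vanish; for all the concrete systems of Sections~\ref{sect: sym}--\ref{sect: rat} the Lagrangian $\ELL(x,\wx;\lambda)$ is rational in $\lambda$ (or in $e^{\lambda}$ together with $\lambda$), so this causes no difficulty.
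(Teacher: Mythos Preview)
Your argument is correct and matches the paper's approach. The paper does not actually prove Theorem~\ref{th: spectrality} itself---it cites \cite{S} for the result---but its proof of the closely related Theorem~\ref{th: spectrality weak} is exactly your differentiation argument: write out the closure relation for the elementary square, differentiate with respect to $\lambda$, and note that the chain-rule contributions from $\partial\wx/\partial\lambda$ and $\partial\widehat{\wx}/\partial\lambda$ vanish by the corner equations, leaving only the two explicit $\partial\ELL/\partial\lambda$ terms. Your bookkeeping (that it is precisely $(E_i)$ and $(E_{ij})$ which are needed) is accurate.

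You go beyond the paper by also supplying the converse direction via the degeneration $\lambda\to\mu$; the paper's proof of Theorem~\ref{th: spectrality weak} treats only the forward implication, and Theorem~\ref{th: spectrality} is simply quoted from \cite{S} without a written-out argument for either direction.
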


\noindent
This is a re-formulation of the mysterious ``spectrality property'' of B\"acklund transformations discovered by Kuznetsov and Sklyanin \cite{KS}. In the specific context of relativistic Toda-type systems, we are actually dealing with two mutually commuting families of B\"acklund transformations. In this situation, the following weaker statement can be made.
\begin{theo}\label{th: spectrality weak}
For the pluri-Lagrangian system built by two commuting maps $G_k$ and $F_j$ (whose action is denoted by
$\widetilde{\phantom{x}}$ and $\widehat{\phantom{x}}$, respectively) from two mutually commuting families of B\"acklund transformations, if the discrete multi-time Lagrangian 1-form is closed on solutions, then the quantity $P_G(x,\wx;\lambda):=\partial\M(x,\wx;\lambda)/\partial \lambda$ is a common integral of motion for all $F_j$.
\end{theo}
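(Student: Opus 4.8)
The plan is to replay the argument behind the spectrality property (Theorem~\ref{th: spectrality}), the only novelty being that one of the two commuting maps is now taken from the $G$-family. Fix a map $G_k$ with B\"acklund parameter $\lambda$ and an arbitrary map $F_j$ of the $F$-family with parameter $\mu$, and regard the pair $(G_k,F_j)$ as a one-dimensional pluri-Lagrangian system in the sense of Section~\ref{one-dimensional}: $G_k$ acts along one coordinate direction (shift $\widetilde{\phantom{x}}$), $F_j$ along another (shift $\widehat{\phantom{x}}$), the discrete $1$-form being given by $\M(\cdot,\cdot;\lambda)$ on the $G_k$-edges and by $\ELL(\cdot,\cdot;\mu)$ on the $F_j$-edges. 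Put $\wx=G_k(x)$, $\whx=F_j(x)$, and let $\htilde{x}$ be the field produced by commutativity, i.e.\ the $\widetilde{\phantom{x}}$-shift of $\whx$ and equally the $\widehat{\phantom{x}}$-shift of $\wx$ (in the periodic, double-valued situation one fixes a consistent pair of branches, as in the discussion after Theorem~\ref{th: rtl+-}). Closedness of the $1$-form --- the hypothesis --- means, by Theorem~\ref{th: discr almost closed}, that the action around the elementary square with vertices $x,\wx,\htilde{x},\whx$ vanishes on solutions of the $2$D~corner equations; up to an overall sign (to which nothing in the argument is sensitive) this reads
\[
\M(x,\wx;\lambda)+\ELL(\wx,\htilde{x};\mu)=\M(\whx,\htilde{x};\lambda)+\ELL(x,\whx;\mu).
\]

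The crucial point is that, along a solution parametrised by initial data $(x,p)$ with $\mu$ fixed, the field $\whx=F_j(x,p)$ is \emph{independent} of the parameter $\lambda$, whereas $\wx=\wx(x,p;\lambda)$ and $\htilde{x}=\htilde{x}(x,p;\lambda)$ are not. Differentiating the closure identity with respect to $\lambda$, the chain-rule term through $\whx$ drops out because $\partial\whx/\partial\lambda=0$; the coefficient of $\partial\wx/\partial\lambda$ equals $\partial_{\wx}(\M(x,\wx;\lambda)+\ELL(\wx,\htilde{x};\mu))$, which is precisely the corner equation \eqref{eq: E1} at the vertex $\wx$ and hence vanishes on solutions; and the coefficient of $\partial\htilde{x}/\partial\lambda$ equals $\partial_{\htilde{x}}(\ELL(\wx,\htilde{x};\mu)-\M(\whx,\htilde{x};\lambda))$, which is the corner equation \eqref{eq: E12} at the vertex $\htilde{x}$ and likewise vanishes. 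Only the explicit $\lambda$-derivatives survive, giving
\[
P_G(x,\wx;\lambda)=P_G(\whx,\htilde{x};\lambda),\qquad P_G(x,\wx;\lambda):=\partial\M(x,\wx;\lambda)/\partial\lambda .
\]

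It remains to read this as an invariance statement. Solving the first equation of motion of $G_k$, $p=-\partial\M/\partial x$, for $\wx$ in terms of $(x,p;\lambda)$ (Legendre transformability) and substituting it into $P_G$ turns the latter into a function $\mathcal I(x,p;\lambda)$ on phase space. On the right-hand side, $(\whx,\htilde{x})$ is precisely the pair $(x,\wx)$ associated with the phase point $F_j(x,p)$ --- indeed $\htilde{x}$ is the $\widetilde{\phantom{x}}$-shift of $\whx$, i.e.\ $G_k$ applied to $F_j(x,p)$ --- so $P_G(\whx,\htilde{x};\lambda)=\mathcal I(F_j(x,p);\lambda)$. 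Hence $\mathcal I(\cdot;\lambda)\circ F_j=\mathcal I(\cdot;\lambda)$, and since $\mu$ was arbitrary this holds for every member of the $F$-family, which is the claim.

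I expect the genuine difficulty to be bookkeeping rather than conceptual: fixing the orientations and signs in the closure identity in the presence of the ``negatively directed edge'' convention attached to $G_k$, and checking that the two corner equations invoked really are the ones sitting at the vertices $\wx$ and $\htilde{x}$ of the square (the identification of Figure~\ref{fig:10} is the relevant picture here). It is also worth remarking why only the one-directional statement survives in this two-family setting: differentiating the same identity with respect to $\mu$ instead gives that $\partial\ELL/\partial\mu$ is conserved under $G_k$, but neither consequence --- nor the two together --- lets one run the computation backwards to recover closedness, so the equivalence of Theorem~\ref{th: spectrality} degenerates to a single implication.
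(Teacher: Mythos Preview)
Your proof is correct and follows essentially the same route as the paper's: write the closure relation for the elementary square, differentiate in $\lambda$, and observe that the chain-rule contributions through $\wx$ and $\htilde{x}$ vanish by the 2D~corner equations, leaving exactly $P_G(x,\wx;\lambda)=P_G(\whx,\htilde{x};\lambda)$. You have simply spelled out the details the paper suppresses (the $\lambda$-independence of $\whx$, the identification of the relevant corner equations at $\wx$ and $\htilde{x}$, and the passage to a phase-space function via Legendre transformability), and your closing remark correctly explains why only the one-way implication survives in the two-family setting.
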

\begin{proof}
The closedness of the discrete multi-time Lagrangian 1-form is expressed by the following formula:
\[
\M(x,\wx;\lambda)+\ELL(\wx,\widehat{\wx};\mu)-\M(\whx,\widehat{\wx};\lambda)-\ELL(x,\whx;\mu)=0.
\]
Differentiating with respect to $\lambda$ and taking into account that the terms with $\partial\wx/\partial\lambda$ etc. vanish due to the corresponding corner equations, we arrive at
\[
\frac{\partial\M(\whx,\widehat{\wx};\lambda)}{\partial\lambda}-\frac{\partial\M(x,\wx;\lambda)}{\partial\lambda}=0,
\]
which is the required property.
\end{proof}

Thus, for the maps $F_i$ and $G_k$ as introduced in Definitions \ref{def: Fi} and \ref{def: Gk}, with the understanding that the corresponding Lagrangian depend on the parameters, as in the examples discussed above, i.e.,
\begin{align}
&\ELL(x,\wx;\lambda)=\sum_n L_i(\wx_n-x_n;\lambda)-\sum_n L_0(x_{n+1}-x_n;\alpha)-
\sum_n\Lambda_{i0}(x_{n+1}-\wx_n;\lambda,\alpha),\\
&\M(x,\wx;\lambda)=\sum_n \Lambda_{k0}(\wx_n-x_n;\lambda,\alpha)+\sum_n L_0(\wx_n-\wx_{n-1};\alpha)-
\sum_n L_k(x_n-\wx_{n-1};\lambda),
\end{align}
we arrive at the following generating functions of common integrals of all the maps $F_j$, $G_\ell$:
\begin{align}\label{eq:int1}
&P_F(x,\wx;\lambda)=\sum_n\frac{\partial L_i(\wx_n-x_n;\lambda)}{\partial\lambda}-
\sum_n\frac{\partial\Lambda_{i0}(x_{n+1}-\wx_n;\lambda,\alpha)}{\partial\lambda},\\
&P_G(x,\wx;\lambda)=\sum_n\frac{\partial \Lambda_{k0}(\wx_n-x_n;\lambda,\alpha)}{\partial\lambda}-
\sum_n\frac{\partial L_k(x_n-\wx_{n-1};\lambda)}{\partial\lambda}.
\end{align}\par
We are going to demonstrate that the 2-dimensional pluri-Lagrangian interpretation allows us to get additional important information. Namely, we can derive the {\em local form} of the fact that $P_F$ and $P_G$ are the integrals of motion. For the sake of simplicity, we restrict ourselves to the local form of the statement that $P_F(x,\wx;\lambda)$ is an integral of motion of the maps $F_j$.
\begin{theo}\label{th: 2d conserv laws}
If the three-point discrete 2-form $\Ell$ is closed on solutions of the system of 3D corner equations corresponding to the maps $F_i$ and $F_j$, then the latter system admits the conservation law
\begin{equation}\label{eq: 2d conserv law}
    \Delta_j P_{i0}=\Delta_0 P_{ij},
\end{equation}
with the densities
\begin{align}
    &P_{i0} = \frac{\partial L_i(\wx_n-x_n;\lambda)}{\partial\lambda}-
    \frac{\partial\Lambda_{i0}(x_{n+1}-\wx_n;\lambda,\alpha)}{\partial\lambda},
    \label{eq: 2d conserv law dens i0}\\
    &P_{ij} = \frac{\partial L_i(\wx_n-x_n;\lambda)}{\partial\lambda}-
    \frac{\partial\Lambda_{ij}(\whx_n-\wx_n;\lambda,\mu)}{\partial\lambda}.
    \label{eq: 2d conserv law dens ij}
\end{align}
Observe that $P_{i0}$ is the summand in~\eqref{eq:int1}.\par
(Recall that $\Delta_j=T_j-I$, $\Delta_0=T_0-I$, where $T_j$ is the shift corresponding to the map $F_j$ and denoted by $\ \widehat{\ }$, while $T_0$ is the shift $n\to n+1$.)
\end{theo}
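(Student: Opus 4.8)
The plan is to use the closure hypothesis in its \emph{local} form. On solutions of the system of 3D~corner equations, the discrete exterior derivative $S^{0ij}=d\Ell(\sigma_{0ij})$ from \eqref{eq: 3point S} vanishes on \emph{each} elementary cube of the coordinate directions $0$ (the space direction, shift $n\to n+1$, parameter $\alpha=\alpha_0$), $i$ (the $\widetilde{\phantom{x}}$-direction, carrying the parameter $\lambda=\alpha_i$) and $j$ (the $\widehat{\phantom{x}}$-direction, carrying $\mu=\alpha_j$), in accordance with the identifications of Figure~\ref{fig: 3d for Fi, Fj}. It is this per-cube identity --- rather than merely its sum over $n$, which would only reproduce the global statement that $P_F$ is an integral of motion of all $F_j$ --- that will yield the claimed \emph{local} conservation law.

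First I would write $S^{0ij}$ out in relativistic-Toda variables by substituting the identifications of Figure~\ref{fig: 3d for Fi, Fj} into \eqref{eq: 3point S}. One checks that the cube sitting at site $n$ carries an $S^{0ij}$ which depends only on the six ``octahedron'' fields $x_{n+1}$, $\wx_n$, $\wx_{n+1}$, $\whx_n$, $\whx_{n+1}$, $\widehat{\wx}_n$ (neither $x_n$ nor $\widehat{\wx}_{n+1}$ occurs, cf.\ Figure~\ref{octahedron}), and that among its summands the only ones carrying an explicit $\lambda$-dependence are those built from $L_i$, $\Lambda_{ij}$ and $\Lambda_{0i}$; the last of these I would rewrite by the skew-symmetry $\Lambda_{0i}(x;\alpha,\lambda)=-\Lambda_{i0}(-x;\lambda,\alpha)$, so that only $L_i$, $\Lambda_{ij}$ and $\Lambda_{i0}$ remain.

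The core of the argument is then a short differentiation. Since $S^{0ij}\equiv 0$ on solutions of the system of 3D~corner equations --- this is exactly the closure hypothesis --- differentiating along the family of solutions with respect to $\lambda$ gives
\[
0=\frac{\partial S^{0ij}}{\partial\lambda}+\sum_{v}\frac{\partial S^{0ij}}{\partial x_v}\,\frac{\partial x_v}{\partial\lambda},
\]
where the first term is the derivative over the explicit occurrences of $\lambda$ and the sum runs over the six octahedron vertices $v$. Each coefficient $\partial S^{0ij}/\partial x_v$ is precisely one of the six 3D~corner equations of the form \eqref{eq: 3point Ei}, \eqref{eq: 3point Eij} for the triple $0,i,j$, and all of these hold on solutions of the $F_i,F_j$-system by Theorem~\ref{th: rtl+}; hence the sum drops, and the explicit $\lambda$-derivative of $S^{0ij}$ vanishes on solutions. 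It then remains to identify this explicit derivative: carrying out the differentiation of the summands from the previous step, one finds that in $\Delta_0 P_{ij}-\Delta_j P_{i0}$, evaluated at site $n$ with $P_{i0}$, $P_{ij}$ as in \eqref{eq: 2d conserv law dens i0}, \eqref{eq: 2d conserv law dens ij}, the term $\partial L_i(\wx_n-x_n;\lambda)/\partial\lambda$ common to $P_{i0}$ and $P_{ij}$ cancels, and the four remaining $\partial\Lambda/\partial\lambda$-terms reassemble into exactly the explicit $\lambda$-derivative of $S^{0ij}$. Combining the two facts gives $\Delta_j P_{i0}=\Delta_0 P_{ij}$, which is \eqref{eq: 2d conserv law}.

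The main obstacle is purely a matter of bookkeeping: keeping track of signs and index placements in the explicit expansion of $S^{0ij}$ --- in particular that the skew-symmetrisation of the two $\Lambda_{0i}$-terms supplies the relative sign that turns $-\partial\Lambda_{0i}/\partial\lambda$ into $+\partial\Lambda_{i0}/\partial\lambda$ --- and verifying that the octahedron fields of the cube at $n$ are exactly the fields entering $P_{i0}$ (at $n$ and, under the shift $\widehat{\phantom{x}}$, its image) and $P_{ij}$ (at $n$ and $n+1$). Everything else is a one-step computation; and the very same argument with $\mu$ in place of $\lambda$ would produce the companion conservation law, should one wish to record it.
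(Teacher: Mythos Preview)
Your proposal is correct and follows essentially the same route as the paper: write the local closure relation $S^{ij0}=0$ on an elementary cube, differentiate with respect to $\lambda$, use the 3D~corner equations to kill the implicit-derivative terms, and recognize the surviving explicit $\lambda$-derivative as $\Delta_0 P_{ij}-\Delta_j P_{i0}$. One small slip in your bookkeeping summary: after the cancellation of the common $\partial L_i(\wx_n-x_n;\lambda)/\partial\lambda$ term there remain \emph{six} terms (two $\partial L_i/\partial\lambda$-terms from $T_0P_{ij}$ and $T_jP_{i0}$, plus four $\partial\Lambda/\partial\lambda$-terms), not four; but this does not affect the argument.
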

\begin{proof} We start by re-writing expression \eqref{eq: 3point S} for $d\Ell$ in the form specific for our present context:
\begin{equation}\label{eq: dL}
\begin{split}
d\Ell=S^{ij0}& = L_i(\wx_{n+1}-x_{n+1};\lambda)+L_j(\widehat{\wx}_n-\wx_n;\mu)+L_0(\whx_{n+1}-\whx_n;\alpha)\\
       &\quad -L_i(\widehat{\wx}_n-\whx_n;\lambda)-L_j(\whx_{n+1}-x_{n+1};\mu)-L_0(\wx_{n+1}-\wx_n;\alpha)\\
       &\quad -\Lambda_{ij}(\whx_{n+1}-\wx_{n+1};\lambda,\mu)
             -\Lambda_{j0}(\wx_{n+1}-\widehat{\wx}_n;\mu,\alpha)
             +\Lambda_{i0}(\whx_{n+1}-\widehat{\wx}_n;\lambda,\alpha)\\
       &\quad +\Lambda_{ij}(\whx_n-\wx_n;\lambda,\mu)+
             \Lambda_{j0}(x_{n+1}-\whx_n;\mu,\alpha)-
             \Lambda_{i0}(x_{n+1}-\wx_n;\lambda,\alpha)=0.\qquad
\end{split}
\end{equation}
Differentiating equation \eqref{eq: dL} with respect to $\lambda$ and taking into account that the terms containing $\partial \wx_n/\partial\lambda$ etc.\ vanish by virtue of the corresponding 3D corner equations, we arrive at
\begin{multline*}
\frac{\partial L_i(\wx_{n+1}-x_{n+1};\lambda)}{\partial\lambda}-
\frac{\partial L_i(\widehat{\wx}_n-\whx_n;\lambda)}{\partial\lambda}\\
 -\frac{\partial\Lambda_{ij}(\whx_{n+1}-\wx_{n+1};\lambda,\mu)}{\partial\lambda}
+\frac{\partial\Lambda_{i0}(\whx_{n+1}-\widehat{\wx}_n;\lambda,\alpha)}{\partial\lambda}\\
 +\frac{\partial\Lambda_{ij}(\whx_n-\wx_n;\lambda,\mu)}{\partial\lambda}-
             \frac{\partial\Lambda_{i0}(x_{n+1}-\wx_n;\lambda,\alpha)}{\partial\lambda}\, = \,  0.
\end{multline*}
This is equivalent to the statement of the theorem.
\end{proof}

\paragraph{Example:} Modified exponential system. We start with the maps $F_i$. An easy computation shows:
\[
\frac{\partial L_i(x;\lambda)}{\partial\lambda}=-\frac{x}{\lambda},\qquad
\frac{\partial \Lambda_{i0}(x;\lambda,\alpha)}{\partial\lambda}=
\frac{1}{\lambda}\log\left(1+\lambda e^x\right).
\]
Thus, Theorems \ref{th: spectrality} and \ref{th: spectrality weak} lead to the following generating function of integrals of motion for all maps $F_j$, $G_\ell$:
\begin{align}\notag
&\frac{\partial\ELL(x,\wx;\lambda)}{\partial \lambda}=\log P_F(x,\wx;\lambda),\\
\intertext{where}
\label{eq:spect5 mod}
&P_F(x,\wx;\lambda)=\prod_{n=1}^N e^{\wx_n-x_n}\left(1+\lambda e^{x_{n+1}-\wx_n}\right).
\end{align}
It is instructive to have a look at the local form of this result provided by Theorem \ref{th: 2d conserv laws}. We compute:
\[
\frac{\partial \Lambda_{ij}(x;\lambda,\mu)}{\partial\lambda}=
\frac{1}{\lambda}\log\left(\lambda e^x-\mu\right),
\]
and end up with the conservation law \eqref{eq: 2d conserv law} with the densities
\begin{equation}
    P_{i0}=\log e^{\wx_n-x_n}\left(1+\lambda e^{x_{n+1}-\wx_n}\right), \qquad
    P_{ij}=\log e^{\wx_n-x_n}\left(\lambda e^{\whx_n-\wx_n}-\mu\right).
\end{equation}\par

We can give a nice expressions for the generating function of integrals $P_F(x,\wx;\lambda)$ in terms of the canonically conjugate variables $x$, $p$.
\begin{theo}\label{th: mod Toda F generating function}
Set
\begin{align}
&U_n(x,p;\lambda)=
\begin{pmatrix}
1+\lambda\left( e^{p_n}+e^{x_n-x_{n-1}}\right) & -\lambda(\lambda-\alpha)e^{x_n-x_{n-1}+p_{n-1}} \\
1 & 0
\end{pmatrix},\\
\intertext{and further}\notag
&T_{N}(x,p;\lambda)=U_N(x,p;\lambda)\ldots U_2(x,p;\lambda)U_1(x,p;\lambda).
\end{align}
Then, in the periodic case $P_F(x,\wx;\lambda)$ is an eigenvalue of the matrix $T_{N}(x,p;\lambda)$, while in the open-end case $P_F(x,\wx;\lambda)$ is the (1,1)-entry of the matrix $T_{N}(x,p;\lambda)$.
\end{theo}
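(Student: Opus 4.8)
The plan is to read $T_{N}(x,p;\lambda)$ as the monodromy of the scalar three-term recursion attached to the companion-type matrices $U_{n}$, and to identify $P_{F}$ with its Floquet multiplier in the periodic case and with the corner entry in the open-end case. Writing $\Psi_{n}=(\xi_{n},\xi_{n-1})^{T}$, the relation $\Psi_{n}=U_{n}\Psi_{n-1}$ is the recursion
\[
\xi_{n}=\bigl(1+\lambda( e^{p_{n}}+e^{x_{n}-x_{n-1}})\bigr)\xi_{n-1}-\lambda(\lambda-\alpha)e^{x_{n}-x_{n-1}+p_{n-1}}\,\xi_{n-2},
\]
so that $\Psi_{N}=T_{N}(x,p;\lambda)\,\Psi_{0}$. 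It therefore suffices to produce one solution $\xi_{n}$ whose consecutive ratios $\mu_{n}:=\xi_{n}/\xi_{n-1}$ multiply, over one period, to $P_{F}$.

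First I would pass to the variables $v_{n}:=e^{\wx_{n}-x_{n}}$ and $w_{n}:=e^{x_{n+1}-\wx_{n}}$, so that $v_{n}w_{n}=e^{x_{n+1}-x_{n}}$ and $e^{x_{n}-\wx_{n-1}}=w_{n-1}$; then the first equation of the map \eqref{eq: mod Toda F} reads $\lambda e^{p_{n}}=(v_{n}-1)\,\dfrac{(1+\lambda w_{n-1})(1+\alpha v_{n-1}w_{n-1})}{(1+\alpha w_{n-1})(1+\alpha v_{n}w_{n})}$, and, by \eqref{eq:spect5 mod}, $P_{F}=\prod_{n=1}^{N}v_{n}(1+\lambda w_{n})$. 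The core step is to verify that, with the site-local factor
\[
\gamma_{n}:=\frac{1+\alpha w_{n}}{(1+\lambda w_{n})(1+\alpha v_{n}w_{n})},
\]
the ratios $\mu_{n}:=v_{n}(1+\lambda w_{n})\,\gamma_{n}/\gamma_{n-1}$ satisfy the Riccati-type identity $\mu_{n}\mu_{n-1}=A_{n}\mu_{n-1}+B_{n}$, where $A_{n},B_{n}$ denote the entries of $U_{n}$ after substituting the $F_{i}$-equations for $e^{p_{n}}$ and $e^{p_{n-1}}$. After clearing denominators this collapses to the polynomial identity $(1+\alpha w)(1+\lambda vw)-(\lambda-\alpha)w(v-1)=(1+\lambda w)(1+\alpha vw)$ together with $r+v-1=v(1+\alpha w)$ for $r=1+\alpha vw$. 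Granting it, $\xi_{n}:=\prod_{m=1}^{n}\mu_{m}$ solves the recursion.

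In the periodic case all fields are $N$-periodic, hence $\mu_{n+N}=\mu_{n}$ and $\prod_{m=n+1}^{n+N}\mu_{m}=\prod_{m=1}^{N}\mu_{m}=\prod_{m=1}^{N}v_{m}(1+\lambda w_{m})=P_{F}$, the telescoping factor $\gamma_{N}/\gamma_{0}$ being $1$. Thus $\Psi_{n+N}=P_{F}\Psi_{n}$, so $\Psi_{0}$ is an eigenvector of $T_{N}(x,p;\lambda)$ for the eigenvalue $P_{F}$. In the open-end case the boundary conventions of \eqref{eq: dRTL1 Lagr} amount to sending $x_{0},\wx_{0}\to-\infty$ and $x_{N+1},\wx_{N+1}\to-\infty$, so that $w_{0}=w_{N}=0$, whence $\gamma_{0}=\gamma_{N}=1$ and $\mu_{1}=A_{1}$; choosing $\xi_{-1}=0$, $\xi_{0}=1$ then gives $\xi_{N}=\prod_{m=1}^{N}\mu_{m}=P_{F}$, while $\Psi_{N}=T_{N}\Psi_{0}$ reads off $\xi_{N}=(T_{N})_{11}$.

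I expect the main obstacle to be guessing the auxiliary factor $\gamma_{n}$ and carrying out the identity $\mu_{n}\mu_{n-1}=A_{n}\mu_{n-1}+B_{n}$: it uses the $F_{i}$-equations at two neighbouring sites and a repeated collapse of the factors $1+\alpha v_{n}w_{n}$, and this is where the specific structure of the modified exponential system enters. A more conceptual route would be to recognize $U_{n}$ as (a gauge of) the Lax matrix of the modified exponential relativistic Toda chain, whose monodromy trace generates the common integrals of the maps $F_{j}$, cf.\ \cite{S,Todapaper}, and to deduce the statement from the general link between spectrality and the Lax representation; the direct verification above is, however, self-contained, and the open-end claim in addition only requires care with which boundary terms are dropped in \eqref{eq: dRTL1 Lagr}.
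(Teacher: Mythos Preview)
Your approach is essentially the same as the paper's: the paper defines
\[
\gamma_n=e^{\wx_n-x_n}\bigl(1+\lambda e^{x_n-\wx_{n-1}}\bigr)\,
\frac{1+\alpha e^{x_{n+1}-\wx_n}}{1+\alpha e^{x_{n+1}-x_n}}\cdot
\frac{1+\alpha e^{x_n-x_{n-1}}}{1+\alpha e^{x_n-\wx_{n-1}}}
\]
and checks $U_n[\gamma_{n-1}]=\gamma_n$ as a M\"obius transformation, which is exactly your scalar recursion $\mu_n\mu_{n-1}=A_n\mu_{n-1}+B_n$; in your variables $v_n,w_n$ one has $\gamma_n^{\text{paper}}=v_n(1+\lambda w_{n-1})\dfrac{(1+\alpha w_n)(1+\alpha v_{n-1}w_{n-1})}{(1+\alpha v_nw_n)(1+\alpha w_{n-1})}$, which coincides with your $\mu_n$ after inserting your telescoping gauge $\gamma_n^{\text{student}}$. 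The only slip is in the open-end boundary: dropping the terms with $x_0,\wx_0$ in \eqref{eq: mod Toda F} corresponds to $x_0,\wx_0\to+\infty$ (since $\psi_0,\phi_{i0}\to 0$ as their argument $\to-\infty$), not $-\infty$; with the correct sign you indeed get $w_0=0$ and the rest of your argument goes through.
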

\begin{proof}
Set
\[
\gamma_n=e^{\wx_n-x_n}\left(1+\lambda e^{x_n-\wx_{n-1}}\right)
\frac{1+\alpha e^{x_{n+1}-\wx_n}}{1+\alpha e^{x_{n+1}-x_n}}\cdot
\frac{1+\alpha e^{x_n-x_{n-1}}}{1+\alpha e^{x_n-\wx_{n-1}}},
\]
so that $\gamma_N\cdots \gamma_2\gamma_1=P_F(x,\wx;\lambda)$ in both the periodic and the open-end cases. By
a straightforward computation based on the first formula in \eqref{eq: mod Toda F} one checks that
\begin{align*}
&U_n(x,p;\lambda)[\gamma_{n-1}]=\gamma_n,
\intertext{where we write}
&\begin{pmatrix}
a&b\\
c&d
\end{pmatrix}
[z]:=\frac{az+b}{cz+d}.
\end{align*}
Equivalently,
\[
U_n(x,p;\lambda)\begin{pmatrix} \gamma_{n-1} \\ 1 \end{pmatrix}\sim\begin{pmatrix} \gamma_n \\ 1 \end{pmatrix}.
\]
The proportionality coefficient can be determined by comparing the second components of these vectors, which results in
\begin{equation}\label{eq: proof mod Toda aux}
U_n(x,p;\lambda)\begin{pmatrix} \gamma_{n-1} \\ 1 \end{pmatrix}=
\gamma_{n-1}\begin{pmatrix} \gamma_n \\ 1 \end{pmatrix}.
\end{equation}
Thus, in the periodic case $\gamma_N\cdots \gamma_2\gamma_1$ is the eigenvalue of $T_{N}(x,p;\lambda)$ corresponding to the eigenvector $\begin{pmatrix} \gamma_0 \\ 1 \end{pmatrix}$. In the open end case, equation \eqref{eq: proof mod Toda aux} holds true for $2\le n\le N$, and has to be supplemented with
\[
U_1(x,p;\lambda)\begin{pmatrix} 1 \\ 0 \end{pmatrix}=\begin{pmatrix} \gamma_1 \\ 1 \end{pmatrix}\quad \text{and}\quad \begin{pmatrix} 1 & 0 \end{pmatrix}\begin{pmatrix} \gamma_N \\ 1 \end{pmatrix}=\gamma_N.
\]
This yields
\[
\begin{pmatrix} 1 & 0 \end{pmatrix} T_{N}(x,p;\lambda)\begin{pmatrix} 1 \\ 0 \end{pmatrix}=\gamma_N\cdots\gamma_2\gamma_1.
\]
\end{proof}

Turning now to the maps $G_k$, we find:
\begin{align}
& \frac{\partial \Lambda_{k0}(x;\lambda,\alpha)}{\partial\lambda}=
-\frac{1}{\lambda+\alpha}x+
\frac{1}{\lambda+\alpha}\log\left(1-\frac{\alpha}{\lambda}\big(e^x-1\big)\right),\\
& \frac{\partial L_k(x;\lambda)}{\partial\lambda}=
\frac{1}{\lambda+\alpha}\log\left(1+(\lambda+\alpha)e^x\right).
\end{align}
According to Theorems \ref{th: spectrality} and \ref{th: spectrality weak}, we compute:
\begin{align}\notag
&\frac{\partial\M(x,\wx;\lambda)}{\partial\lambda}=-\frac{1}{\lambda+\alpha}\log P_G(x,\wx;\lambda),\\
\intertext{where}
\label{eq:spect6 mod Toda}
&P_G(x,\wx;\lambda)=\prod_{n=1}^{N}\frac{e^{\wx_n-x_n}}{1-\dfrac{\alpha}{\lambda}\left(e^{\wx_n-x_n}-1\right)}
\left(1+(\lambda+\alpha)e^{x_n-\wx_{n-1}}\right).
\end{align}
Again,  $P_G(x,\wx;\lambda)$ is a generating function of common integrals of motion for all the maps $F_j$, $G_\ell$. Remarkably, in terms of the canonically conjugate variables $x$, $p$ this is essentially the same function as $P_F(x,\wx;\lambda)$.
\begin{theo}\label{th: mod Toda G generating function}
In the periodic case, $P_G(x,\wx;\lambda)$ is an eigenvalue of the matrix $T_{N}(x,p;\lambda+\alpha)$, while in the open-end case $P_G(x,\wx;\lambda)$ is the (1,1)-entry of the matrix $T_{N}(x,p;\lambda+\alpha)$.
\end{theo}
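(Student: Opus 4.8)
The strategy is to repeat, almost verbatim, the computation from the proof of Theorem~\ref{th: mod Toda F generating function}, with the single modification that the spectral parameter $\lambda$ of the transfer matrix gets replaced by $\lambda+\alpha$; this is precisely the content of the phrase ``essentially the same function''. In analogy with the auxiliary quantities $\gamma_n$ of that proof, I would introduce a sequence $\delta_1,\dots,\delta_N$ whose $n$-th term is the $n$-th factor of the product \eqref{eq:spect6 mod Toda} defining $P_G(x,\wx;\lambda)$, dressed by a telescoping correction built only from $\alpha$ and nearest-neighbour shifts in $n$. The natural candidate for this correction is read off from the ratio $\dfrac{1+\alpha e^{\wx_n-\wx_{n-1}}}{1+\alpha e^{\wx_{n+1}-\wx_n}}$ appearing in the second line of \eqref{eq: mod Toda G}, the $\wx$-differences here playing the role that the $x$-differences played in $\gamma_n$. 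This dressing is chosen so that it cancels out under the product, giving
\[
\delta_N\cdots\delta_2\delta_1=P_G(x,\wx;\lambda)
\]
in both the periodic and the open-end cases.

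The essential step is then to verify the M\"obius recursion
\[
U_n(x,p;\lambda+\alpha)\bigl[\delta_{n-1}\bigr]=\delta_n,\qquad 2\le n\le N,
\]
by a direct rational computation using the first equation of motion of $G_k$ in \eqref{eq: mod Toda G}, which expresses $e^{p_n}$ in terms of $\wx_n-x_n$ and $x_n-\wx_{n-1}$. The reason the same matrix $U_n$ reappears with argument $\lambda+\alpha$ is structural: in all the $G_k$ formulas the combination $\lambda+\alpha$ sits in exactly the slots that $\lambda$ occupies in the $F_i$ formulas. I expect this verification --- simultaneously matching the telescoping of the product and the M\"obius recursion, and then simplifying one rational identity --- to be the only real, and essentially routine, obstacle.

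Granted the recursion, the conclusion follows exactly as in Theorem~\ref{th: mod Toda F generating function}. Comparing the bottom components fixes the projective ambiguity,
\[
U_n(x,p;\lambda+\alpha)\begin{pmatrix}\delta_{n-1}\\1\end{pmatrix}=\delta_{n-1}\begin{pmatrix}\delta_n\\1\end{pmatrix},
\]
so that in the periodic case $\delta_N\cdots\delta_1=P_G$ is the eigenvalue of $T_N(x,p;\lambda+\alpha)$ with eigenvector $\begin{pmatrix}\delta_0\\1\end{pmatrix}$. In the open-end case the recursion holds for $2\le n\le N$ and is supplemented by $U_1(x,p;\lambda+\alpha)\begin{pmatrix}1\\0\end{pmatrix}=\begin{pmatrix}\delta_1\\1\end{pmatrix}$ together with $\begin{pmatrix}1&0\end{pmatrix}\begin{pmatrix}\delta_N\\1\end{pmatrix}=\delta_N$, reflecting the absence of $x_0$ and $\wx_0$ in the open-end equations of motion; these combine into $\begin{pmatrix}1&0\end{pmatrix}T_N(x,p;\lambda+\alpha)\begin{pmatrix}1\\0\end{pmatrix}=\delta_N\cdots\delta_1=P_G$, i.e.\ $P_G$ is the $(1,1)$-entry of $T_N(x,p;\lambda+\alpha)$. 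A more conceptual variant of the argument would be to show first that, once both are reduced to phase-space functions through their respective equations of motion, $P_G(x,\wx;\lambda)$ coincides with $P_F(x,\wx;\lambda+\alpha)$, and then to quote Theorem~\ref{th: mod Toda F generating function} with parameter $\lambda+\alpha$; this, however, needs the same computation.
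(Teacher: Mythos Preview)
Your proposal is correct and follows the same route as the paper. The one simplification you are missing is that no telescoping dressing is needed here: the paper simply takes $\beta_n$ to be the bare $n$-th factor of \eqref{eq:spect6 mod Toda} and checks directly, using the first equation in \eqref{eq: mod Toda G}, that $U_n(x,p;\lambda+\alpha)[\beta_{n-1}]=\beta_n$; the rest is verbatim the argument of Theorem~\ref{th: mod Toda F generating function}. Your suggested correction $\dfrac{1+\alpha e^{\wx_n-\wx_{n-1}}}{1+\alpha e^{\wx_{n+1}-\wx_n}}$ would in fact spoil the recursion (since the undressed factors already satisfy it), so when you carry out the computation you should find that the right ``dressing'' is trivial.
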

\begin{proof}
Set
\[
\beta_n=\frac{e^{\wx_n-x_n}}{1-\dfrac{\alpha}{\lambda}\left(e^{\wx_n-x_n}-1\right)}
\left(1+(\lambda+\alpha)e^{x_n-\wx_{n-1}}\right).
\]
Then one easily checks with the help of the first formula in \eqref{eq: mod Toda G} that
\[
U_n(x,p;\lambda+\alpha)[\beta_{n-1}]=\beta_n.
\]
Then, the proof goes along the same lines as the proof of Theorem \ref{th: mod Toda F generating function}.
\end{proof}

\section{Conclusions}
In the present paper, we applied the general theory of two-dimensional pluri-Lagrangian systems, developed in \cite{variational}, to the analysis of B\"acklund transformations for relativistic Toda-type systems. It was possible due to a novel way to embed the one-dimensional relativistic Toda-type systems into certain two-di\-men\-sion\-al lattice systems. This embedding is well suited for a proof of commutativity of B\"acklund transformations, as well as for the proof of the closure relation of the corresponding action functional.
A different relation of this kind between relativistic Toda-type systems and 3D consistent systems of quad-equations was discussed previously, cf.~\cite{quadgraphs,Q4,Todapaper}, and led to a systematic derivation of zero-curvature representations for (discrete and continuous) relativistic and non-relativistic Toda-type systems. A connection between these two approaches is presently unclear and remains to be worked out in detail.

\section*{Acknowledgments}
This research was supported by the DFG Collaborative Research Center TRR 109 ``Discretization in Geometry and Dynamics''.

{\small
\bibliographystyle{amsalpha}
\providecommand{\bysame}{\leavevmode\hbox to3em{\hrulefill}\thinspace}
\providecommand{\MR}{\relax\ifhmode\unskip\space\fi MR }
\providecommand{\MRhref}[2]{%
  \href{http://www.ams.org/mathscinet-getitem?mr=#1}{#2}
}
\providecommand{\href}[2]{#2}

}

\end{document}